\newcommand{\pathrm}{\mathrm{path}}
\title{Algebraic Hardness versus Randomness in Low Characteristic}
\author{Robert Andrews\thanks{Department of Computer Science, University of Illinois at Urbana-Champaign. Email: \texttt{rgandre2@illinois.edu}. Supported by NSF grant CCF-1755921.}}
\date{May 21, 2020}
\begin{document}

\maketitle

\begin{abstract}
	We show that lower bounds for explicit constant-variate polynomials over fields of characteristic $p > 0$ are sufficient to derandomize polynomial identity testing over fields of characteristic $p$.
	In this setting, existing work on hardness-randomness tradeoffs for polynomial identity testing requires either the characteristic to be sufficiently large or the notion of hardness to be stronger than the standard syntactic notion of hardness used in algebraic complexity.
	Our results make no restriction on the characteristic of the field and use standard notions of hardness.

	We do this by combining the Kabanets-Impagliazzo generator with a white-box procedure to take $p$\ts{th} roots of circuits computing a $p$\ts{th} power over fields of characteristic $p$.
	When the number of variables appearing in the circuit is bounded by some constant, this procedure turns out to be efficient, which allows us to bypass difficulties related to factoring circuits in characteristic $p$.

	We also combine the Kabanets-Impagliazzo generator with recent ``bootstrapping'' results in polynomial identity testing to show that a sufficiently-hard family of explicit constant-variate polynomials yields a near-complete derandomization of polynomial identity testing.
	This result holds over fields of both zero and positive characteristic and complements a recent work of Guo, Kumar, Saptharishi, and Solomon, who obtained a slightly stronger statement over fields of characteristic zero.
\end{abstract}

\section{Introduction}

The interaction between computational hardness and pseudorandomness is a central theme of computational complexity.
The goal of this vein of work is to show that a class $\mathcal{C}$ of problems that are solvable by randomized algorithms can in fact be solved by deterministic algorithms which are not much slower than the known randomized algorithm, assuming lower bounds for a related class $\mathcal{D}$.
When trying to derandomize $\BPP$, the class of problems solvable in polynomial time by a randomized Turing machine with failure probability at most $1/3$, we understand this problem quite well.
A series of works culminated in that of \textcite{IW97}, which showed that $\BPP = \P$ if there are problems in $\mathsf{E}$ which require boolean circuits of exponential size.
Subsequent work by \textcite{SU05, Uma03} further tightened the quantitative tradeoffs obtainable for derandomizing $\BPP$.

In this work, we focus on the question of hardness versus randomness in the more restricted computational model of algebraic circuits, which naturally compute multivariate polynomials over a specified base field $\F$.
Here, the algorithmic problem of interest is \emph{polynomial identity testing} (PIT), which is the problem of determining if a given algebraic circuit computes the identically zero polynomial.
We typically consider identity testing of circuits whose size and degree are bounded by a polynomial function in the number of variables.
This low-degree regime captures polynomials of interest to computer scientists, such as the determinant and permanent, and corresponds to typical algorithmic applications of PIT.
In this regime, the problem of PIT is easily solved with randomness by evaluating the circuit at a randomly chosen point of a large enough grid.
The correctness of this algorithm follows from the Schwartz-Zippel lemma, which roughly says that a low-degree multivariate polynomial cannot vanish at many points of a sufficiently large grid.
To date, no deterministic algorithm for PIT is known that substantially improves on the na\"ive derandomization of the Schwartz-Zippel lemma.

Polynomial identity testing has widespread applications in theoretical computer science and has led to randomized algorithms for perfect matching \cite{Lov79, KUW86, MVV87}, primality testing \cite{AB03,AKS04}, and equivalence testing of read-once branching programs \cite{BCW80}, among other problems.
In light of the utility of PIT as an algorithmic primitive, it is worth understanding to what extent PIT can be derandomized.
There is a large body of work concerned with unconditional derandomization of PIT for various sub-classes of algebraic circuits.
For more on this, we refer the reader to the surveys of \textcite{SY10,Saxena09,Saxena14}.
In this work, we will focus on conditional derandomization of PIT under suitable hardness assumptions.

\subsection{Prior Work}

The first instantiation of the hardness-randomness paradigm for polynomial identity testing was given by \textcite{KI04}.
Their work implemented the design-based approach of \textcite{NW94} in the algebraic setting, showing that lower bounds for an explicit family of multivariate polynomials can be used to derandomize PIT.

Subsequent work by \textcite{DSY09,CKS18} extended this to the setting of bounded-depth circuits, roughly showing that lower bounds against depth-$(\Delta + O(1))$ circuits suffice to derandomize identity testing of depth-$\Delta$ circuits, for any constant $\Delta$.
The result of \textcite{DSY09} works with any hard polynomial, but scales poorly with the individual degree of the circuit being tested.
\textcite{CKS18} refined the approach of \textcite{DSY09} and showed that if the family of hard polynomials has sufficiently low degree, then this dependence on the individual degree of the circuit being tested can be avoided.
Implementing the hardness-randomness paradigm in low-depth is motivated in part by a host of depth-reduction results in algebraic complexity \cite{AV08, Koi12, Tav15, GKKS16} which show that polynomials computable by small circuits can be computed by non-trivially small low-depth circuits.

Returning to the setting of unrestricted circuits, recent work of \textcite{GKSS19} uses a stronger hardness assumption than that of \textcite{KI04} and obtains a stronger derandomization of PIT.
Specifically, \textcite{GKSS19} obtain a polynomial-time derandomization of PIT using lower bounds against an explicit family of constant-variate polynomials.
For comparison, \textcite{KI04} only obtain quasipolynomial-time algorithms for PIT under multivariate hardness assumptions.
In \autoref{sec:kronecker} of this work, we further discuss the relationship between these hardness assumptions and provide evidence for the strength of constant-variate hardness compared to multivariate hardness.

A separate line of work by \textcite{AGS19,KST19} shows that PIT exhibits a ``bootstrapping'' phenomenon.
That is, if one can obtain a barely non-trivial derandomization of PIT for circuits of size and degree which are unbounded in the number of variables, then it follows that there is a near-complete derandomization of PIT for circuits of polynomial size and degree.

From these works, we have a relatively good understanding of what derandomization of PIT is possible under hardness assumptions.
However, excluding the bootstrapping results of \textcite{AGS19,KST19}, all previous work on hardness-randomness tradeoffs for PIT requires the underlying field to be of zero or large characteristic (for the definition of the characteristic of a field, see \autoref{sec:prelim}).
That is, we can derandomize PIT under hardness assumptions over the complex numbers $\mathbb{C}$ or the finite field of $p^m$ elements $\F_{p^m}$ when $p$ is sufficiently large, but we do not know how to do the same over a field of low characteristic like $\F_{2^m}$.

A partial exception to this deficiency is the work of \textcite{KI04}.
Their results yield derandomization of PIT over a finite field $\F_{p^m}$ assuming an explicit polynomial which is hard to compute \emph{as a function} over $\F_{p^m}$.
Over infinite fields, two polynomials are equal if and only if they compute the same function.
However, this no longer holds over finite fields.
For example, over $\F_2$, the polynomial $x^2 - x$ computes the zero function but is decidedly not the zero polynomial.
It is more common in the study of algebraic circuits to prove lower bounds on the task of computing a polynomial as a syntactic object, not as a function.
Functional lower bounds imply syntactic lower bounds, but the reverse direction does not hold, which makes proving functional lower bounds a harder task.

If one inspects the proof of \textcite{KI04}, the functional hardness assumption can be replaced with a slightly weaker, albeit non-standard, syntactic hardness assumption.
Namely, it suffices to assume the existence of an explicit family of $n$-variate polynomials $\set{f_n : n \in \naturals}$ such that $f_n^{p^k}$ is hard in the syntactic sense for $1 \le p^k \le 2^{O(n)}$.
Over characteristic zero fields, the factoring algorithm of \textcite{Kaltofen89} implies that if $f$ is hard to compute, then $f^d$ is comparably hard to compute as long as $d$ is not too large.
Over fields of characteristic $p$, it is not clear if hardness of $f^p$ is implied by hardness of $f$.
For example, it is consistent with our current state of knowledge that the $n \times n$ permanent $\perm_n(\vec{x})$ is $2^{\Omega(n)}$-hard over $\F_3$, but that $\perm_n(\vec{x})^3$ is computable by circuits of size $O(n^2)$ over $\F_3$.
Understanding the relationship between the complexity of $f$ and $f^p$ over fields of characteristic $p > 0$ in general remains a challenging open problem.

For further exposition on hardness-randomness tradeoffs for PIT, see the recent survey of \textcite{KS19}.

\subsection{Identity Testing in Low Characteristic}

Before describing our contributions, we take a detour to look more closely at the question of derandomizing PIT over fields of low characteristic.
Known techniques for derandomizing PIT over fields of small characteristic under hardness assumptions fail due to the fact that over a field of positive characteristic, the derivative of a non-constant polynomial may be zero.
For example, over $\F_2$, we have $\pd{}{x}(x^2) = 2x = 0$, since $2=0$ in $\F_2$.
Thus, techniques which are in some sense ``analytic'' break in low characteristic.
Given that the problem of polynomial identity testing is entirely algebraic, it would be nice to find an ``algebraic'' approach that does not succumb to this flaw.
Indeed, derandomizing PIT in low characteristic fields under hardness assumptions is listed as an open problem in the recent survey of \textcite{KS19} on algebraic derandomization.

The problem of derandomizing PIT in low characteristic fields also has interesting algorithmic applications.
Consider, for example, the randomized algorithm of \textcite{Lov79} to detect whether a bipartite graph has a perfect matching.
Let $G = (V_1 \sqcup V_2, E)$ be a balanced bipartite graph on $2n$ vertices with partite sets $V_1$ and $V_2$.
We form the $n \times n$ symbolic matrix $A$ given by
\[
	A_{i,j} = \begin{cases}
		x_{i,j} & \set{i,j} \in E \\
		0 & \text{otherwise}.
	\end{cases}
\]
It is not hard to see that $\det(A) \neq 0$ if and only if $G$ has a perfect matching.
We can then check if $G$ has a perfect matching by evaluating $A$ at a random point chosen from a suitably large grid of integers.

In evaluating $\det(A)$, we may encounter large numbers of size $\Omega(n!)$.
Arithmetic on such numbers is expensive, requiring at least $\Omega(n \log n)$ time.
We could instead implement this algorithm over a finite field of size $\poly(n)$.
As the determinant is a polynomial of degree $n$, the Schwartz-Zippel lemma guarantees that this modification yields an algorithm with low error probability.
What we have gained is the fact that elements of such a finite field can be represented in $O(\log n)$ bits, so our arithmetic becomes more efficient.
In principle, one could choose the field so that the characteristic is large enough for the the hardness-randomness paradigm to apply, but there may be other considerations which motivate picking, say, an extension field of $\F_2$.
Derandomizing such an algorithm (under hardness assumptions) requires extending the hardness-randomness paradigm to fields of low characteristic.

Alternatively, one can reduce the bit complexity by using a derandomized polynomial identity testing algorithm over the rational numbers, but with the arithmetic performed modulo a small prime number.
This approach also achieves logarithmic bit complexity.
However, we are now in the position of having to derandomize the selection of the prime number.
It is not obvious how to do this much faster than brute force, so the benefits of reducing the bit complexity are negated by the need to try many different primes.

While the previous example may seem somewhat artificial, we remark that there are instances of algorithms which explicitly rely on polynomial identity testing over fields of low characteristic.
For example, the randomized algorithm of \textcite{Will09} for the $k$-path problem makes use of polynomial identity testing over fields of characteristic 2.
If one wanted to derandomize this algorithm under a hardness assumption, prior work on hardness-randomness tradeoffs for PIT would not suffice.

\subsection{Our Results}

In this work, we instantiate the hardness-randomness paradigm for PIT over fields of low characteristic under standard syntactic hardness assumptions.
That is, we obtain derandomization of PIT from the existence of an explicit family of hard polynomials $\set{f_n : n \in \naturals}$ without assuming hardness of $p$\ts{th} powers of $f_n$.
At the heart of our results is a new technique for computing the map $f^p \mapsto f$ over $\F[\vec{x}]$ when the polynomial $f^p$ is given by an algebraic circuit.
When $f$ depends on a small number of variables, the circuit computing $f$ is not too much larger than the circuit which computes $f^p$.

\begin{lemma}[informal version of \autoref{cor:pth root of circuit}] \label{lem:informal pth root}
	Suppose $f(\vec{x})^p$ is a polynomial on $O(1)$ variables and can be computed by a circuit of size $s$ over a field of characteristic $p > 0$.
	Then $f(\vec{x})$ can be computed by a circuit of size $O(s)$.
\end{lemma}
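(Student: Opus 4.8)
The plan is to reduce the problem to extracting the ``Frobenius components'' of the given circuit and to show that this extraction is cheap precisely when few variables are present; I would in fact prove the more general fact that the whole list of Frobenius components of a circuit can be extracted with only a $p^{O(k)}$ size blowup, and then read off the $p$th-root statement as a corollary. Write $C$ for the size-$s$ circuit computing $g := f(\vec x)^p$ and let $x_1, \dots, x_k$, with $k = O(1)$, be the variables appearing in $C$ (these include all variables of $f$, since $f$ and $f^p$ have the same support). The starting point is the ``freshman's dream'': in characteristic $p$ the $p$th-power map is a ring endomorphism of $\F[\vec x]$, so if $f = \sum_\gamma c_\gamma \vec x^\gamma$ then $g = \sum_\gamma c_\gamma^p \vec x^{p\gamma}$, and in particular every monomial of $g$ has all exponents divisible by $p$. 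Recall further that $\F[x_1, \dots, x_k]$ is a free module over the subring $\F[x_1^p, \dots, x_k^p]$ with basis $\{\vec x^{\vec e} : \vec e \in \{0, \dots, p-1\}^k\}$, so every $h \in \F[\vec x]$ has a unique expansion $h = \sum_{\vec e} \vec x^{\vec e}\, h_{\vec e}(x_1^p, \dots, x_k^p)$ with each $h_{\vec e}$ a polynomial in $k$ fresh variables $\vec z$. For $g$ this expansion has $g_{\vec e} = 0$ whenever $\vec e \ne \vec 0$ and $g_{\vec 0}(\vec z) = \sum_\gamma c_\gamma^p \vec z^\gamma$ --- that is, $g_{\vec 0}$ is $f$ with the Frobenius applied to each of its coefficients.

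The first main step is to build, from $C$, a circuit for the single component $g_{\vec 0}$. I would do this with a bottom-up pass that maintains at every gate $v$ circuits (in the variables $\vec z$) for all $p^k$ Frobenius components of the polynomial computed there. Leaves are trivial: an input $x_i$ has its component indexed by the $i$th standard basis vector equal to $1$ and all others $0$; a constant $\alpha$ has only its $\vec 0$-component nonzero. Addition gates act componentwise. The only real work is at a multiplication gate $v = u \cdot w$: from $h_u h_w = \sum_{\vec e, \vec e'} \vec x^{\vec e + \vec e'}\, h_{u, \vec e}(\vec x^p)\, h_{w, \vec e'}(\vec x^p)$ one writes each exponent as $\vec e + \vec e' = \vec r + p\vec c$ with $\vec r \in \{0,\dots,p-1\}^k$ and $\vec c \in \{0,1\}^k$, which gives
\[
  (h_u h_w)_{\vec r}(\vec z) \;=\; \sum_{\vec e + \vec e' \equiv \vec r \bmod p} \vec z^{\vec c}\; h_{u,\vec e}(\vec z)\, h_{w,\vec e'}(\vec z).
\]
Hence each of the $p^k$ new components is a sum of at most $p^k$ products of previously built components, each times a monomial of degree at most $k$. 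Over all of $C$ this multiplies the size by a factor $p^{O(k)}$, which is a constant when $p$ and $k$ are constant (and $\poly(p)\cdot s$ in general once $k = O(1)$), so we obtain a circuit of size $O(s)$ computing $g_{\vec 0}$.

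It remains to pass from $g_{\vec 0}$ to $f$ by ``untwisting'' the coefficients, i.e.\ inverting the Frobenius that was applied to them. For this I would use that, regarding the hard-wired field constants $\alpha_1, \dots, \alpha_r$ of a circuit as formal inputs, the coefficient of each monomial of the polynomial computed is a polynomial in the $\alpha_i$ with coefficients in the prime field $\F_p$, since the gate operations introduce no field constants of their own. The inverse Frobenius $\phi^{-1}$ is a field automorphism fixing $\F_p$, so it commutes with every such $\F_p$-polynomial; therefore replacing each constant $\alpha_i$ in the circuit for $g_{\vec 0}$ by $\phi^{-1}(\alpha_i)$ produces, at no extra size cost, a circuit in which every coefficient $c_\gamma^p$ becomes $c_\gamma$, i.e.\ a circuit for $f$ (after renaming $\vec z$ back to $\vec x$). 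Over $\F_{p^m}$ one has $\phi^{-1}(\alpha) = \alpha^{p^{m-1}}$, and over any perfect field $\phi^{-1}$ is defined everywhere; over a non-perfect field one only needs the constants actually occurring to have $p$th roots in $\F$, which holds whenever a genuine $p$th root is being extracted.

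The step I expect to be the crux is the multiplication-gate rule together with the bookkeeping of the ``carries'' $\vec c$ when reducing exponents modulo $p$; this is exactly where the bounded-variable hypothesis is essential, since without it the number of components $p^k$ --- and hence the blowup --- is out of control. The coefficient-untwisting is the other point requiring care, though it is free in circuit size, and the reduction of $p$th roots to component extraction is immediate from the freshman's dream once the module decomposition is set up correctly.
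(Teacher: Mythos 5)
Your proposal is correct and follows essentially the same route as the paper: both split each gate into its $p^k$ mod-$p$/Frobenius components, with the only real work at multiplication gates (the convolution-with-carries rule $\vec e + \vec e' = \vec r + p\vec c$), yielding a $p^{O(k)}$ size blowup that is constant in the constant-variate regime. The only cosmetic difference is that the paper bakes the inverse Frobenius into the construction gate-by-gate (replacing every constant $\alpha$ by $\alpha^{1/p}$ as it goes, so each component is directly the $p$\ts{th} root of the coefficient polynomial), whereas you extract the untwisted components first and apply $\phi^{-1}$ to all hard-wired constants in one final pass; both versions require working over the perfect closure of $\F$.
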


Using this, we are able to extend the techniques of \textcite{KI04} to fields of low characteristic.
To do so, we need stronger hardness assumptions than those made by \textcite{KI04} for the case of zero characteristic fields.
In algebraic complexity, lower bounds are typically proved for families of polynomials parameterized by the number of variables, as this captures the regime of interest for algorithmic applications.
To prove our results, we assume lower bounds against a family of constant-variate polynomials which are parameterized by degree.

For the sake of exposition, we focus on the case of lower bounds for univariate polynomials.
A univariate polynomial of degree $d$ can easily be computed by circuits of size $O(d)$ using Horner's rule.
It is not hard to show that every such polynomial also requires size $\Omega(\log d)$ to compute.
However, improving on this $\Omega(\log d)$ lower bound for an explicit family of polynomials is a long-standing open problem.
Standard dimension arguments show that most univariate polynomials of degree $d$ require circuits of size $d^{\Omega(1)}$ to compute.

When comparing statements regarding degree $d$ univariates and degree $n^{O(1)}$ multivariate polynomials on $n$ variables, it is instructive to think of $n$ and $\log d$ as comparable.
In this sense, our results achieve the same hardness-randomness tradeoffs as those of \textcite{KI04}, but require translating their hardness assumptions to the comparable statement for univariate polynomials.

Using \autoref{lem:informal pth root}, we can extend the analysis of \citeauthor{KI04} to work over fields of low characteristic.
We now give two concrete examples of the derandomization we can obtain using this extension.

\begin{theorem}[informal version of \autoref{thm:KI hard vs rand} and \autoref{cor:KI tradeoffs}]
	Let $\F$ be a field of characteristic $p > 0$.
	Let $\set{f_d(x) : d \in \naturals}$ be an explicit family of univariate polynomials which cannot be computed by circuits of size less than $s(d)$ over $\F$.
	\begin{enumerate}
		\item
			If $s(d) = \log^{\omega(1)} d$, then there is a deterministic algorithm for identity testing of polynomial-size, polynomial-degree circuits over $\F$ in $n$ variables which runs in time $2^{n^{o(1)}}$.
		\item
			If $s(d) = 2^{\log^{\Omega(1)} d}$, then there is a deterministic algorithm for identity testing of polynomial-size, polynomial-degree circuits over $\F$ in $n$ variables which runs in time $2^{\log^{O(1)} n}$.
	\end{enumerate}
\end{theorem}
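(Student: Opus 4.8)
The plan is to transplant the Kabanets--Impagliazzo hitting-set generator to characteristic $p$, using \autoref{cor:pth root of circuit} to repair the single step of their analysis that genuinely needs characteristic zero. First I would turn the given univariate family into a multivariate one: for a parameter $d=d(n)$ to be fixed later, let $\hat f_d(z_1,\dots,z_\ell)$ be the multilinear ``folding'' of $f_d$ onto $\ell=\lceil\log_2(d+1)\rceil$ variables, i.e.\ the unique multilinear polynomial with $\hat f_d(w,w^2,w^4,\dots,w^{2^{\ell-1}})=f_d(w)$. Recovering $f_d$ from a circuit for $\hat f_d$ costs only the $\ell-1$ squarings needed to form the substitution point, so a size lower bound $s(d)$ for $f_d$ transfers up to the additive $O(\ell)$ to $\hat f_d$. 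I would then feed $\hat f_d$ into the Kabanets--Impagliazzo generator built from a Nisan--Wigderson design with block size $\ell$ and pairwise intersections at most $\log n$, obtaining a map $\mathcal G_d\colon\F^m\to\F^n$ with $m=O(\ell^2/\log n)$ and $\deg\mathcal G_d=O(\ell)$; the PIT algorithm simply checks whether the input circuit (size and degree $\mathrm{poly}(n)$, $n$ variables) vanishes on the image of $\mathcal G_d$ over a grid of size $\mathrm{poly}(n,\ell)$, running in time $\mathrm{poly}(n,\ell)^{O(m)}$.

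Correctness reduces to showing $\mathcal G_d$ hits every such nonzero circuit $C$, which I would prove by contraposition. If $C\neq 0$ but $C\circ\mathcal G_d\equiv 0$, the Nisan--Wigderson hybrid argument yields an index $i$, field values for the seed coordinates outside the $i$-th block and for the circuit inputs past position $i$, and hence a nonzero polynomial $\Psi(\vec z,y)$ in the $\ell$ block variables $\vec z$ and one new variable $y$, computed by a $\mathrm{poly}(n)$-size circuit, with $\Psi(\vec z,\hat f_d(\vec z))\equiv 0$. The size stays polynomial because, after the outside substitution, each neighbouring generator coordinate depends multilinearly on at most $\log n$ variables, hence has at most $2^{\log n}$ monomials and a circuit of size $\mathrm{poly}(n)$. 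Over characteristic zero one finishes by extracting the root $y=\hat f_d(\vec z)$ of $\Psi$ via Newton iteration / power-series root-finding (equivalently, factoring), yielding a $\mathrm{poly}(n)$-size circuit for $\hat f_d$, hence for $f_d$, contradicting the hardness of $f_d$ once $d$ is large enough.

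Characteristic $p$ breaks exactly this extraction: $\Psi$ may be a polynomial in $y^p$, so $\partial_y\Psi\equiv 0$ and Newton iteration cannot be initialized, and one cannot pass to the squarefree part in $y$. This is where \autoref{cor:pth root of circuit} comes in. Write $\Psi(\vec z,y)=\Phi(\vec z,y^{p^e})$ with $e$ maximal, so $\partial_{y'}\Phi\not\equiv 0$; since $\deg_y\Psi\le\deg C=\mathrm{poly}(n)$ we have $p^e\le\mathrm{poly}(n)$, hence $e=O(\log n)$. Then $\Phi(\vec z,\hat f_d(\vec z)^{p^e})\equiv 0$ with $\partial_{y'}\Phi\not\equiv 0$, so the characteristic-zero extraction applies to $\Phi$ (after deleting repeated factors in $y'$ — a step that can reintroduce inseparability, so it is wrapped in an outer loop strictly decreasing $\deg_{y'}$), giving a $\mathrm{poly}(n)$-size circuit for $\hat f_d(\vec z)^{p^e}$. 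Substituting the folding point $\vec z=(w,w^2,\dots,w^{2^{\ell-1}})$ collapses this to a $\mathrm{poly}(n)$-size circuit for the \emph{univariate} polynomial $f_d(w)^{p^e}$, and \autoref{cor:pth root of circuit} applied $e=O(\log n)$ times — efficient precisely because $f_d$ is univariate — yields a circuit of size $2^{O(e)}\cdot\mathrm{poly}(n)=\mathrm{poly}(n)$ for $f_d$, contradicting the hardness of $f_d$ provided $s(d)$ exceeds this fixed polynomial in $n$.

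It remains to choose $d$. For the first regime, given any target $c>0$, take $d=2^{n^{c/3}}$: then $\ell=\Theta(n^{c/3})$, the seed length is $O(n^{2c/3})$, the running time is $2^{O(n^{2c/3})}\le 2^{n^{c}}$, and the $\mathrm{poly}(n)$ reconstruction bound is eventually beaten by $s(d)=\log^{\omega(1)}d=n^{\omega(1)}$; since $c$ was arbitrary this gives time $2^{n^{o(1)}}$. For the second regime, take $d=2^{\log^{K}n}$ with $K$ a large constant: then $\ell=\Theta(\log^{K}n)$, the seed length is $\mathrm{polylog}(n)$, the running time is $2^{\mathrm{polylog}(n)}$, and $s(d)=2^{\log^{\Omega(1)}d}=2^{\log^{\Omega(K)}n}$ beats the $\mathrm{poly}(n)=2^{O(\log n)}$ reconstruction bound once $K$ is large enough relative to the constant in $\Omega(1)$. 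The main obstacle, I expect, is the third step: arranging the reconstruction so that the object fed to \autoref{cor:pth root of circuit} is genuinely univariate (which forces the folding to be compatible with the hybrid argument's substitutions), and controlling the interplay of stripping the inseparable part with killing repeated factors so that the extraction terminates with only polynomial blow-up.
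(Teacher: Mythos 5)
Your proposal is correct and follows essentially the same route as the paper: fold the hard univariate into a multilinear polynomial via the Kronecker map, run the Kabanets--Impagliazzo generator with Nisan--Wigderson designs, and when the reconstruction yields only $\hat f_d^{\,p^e}$ in characteristic $p$, substitute back to the univariate world and strip the $p$\ts{th} powers with \autoref{cor:pth root of circuit}, which is cheap precisely because the polynomial is constant-variate. The only cosmetic difference is that you re-derive the hybrid-argument/root-extraction step from scratch, whereas the paper invokes it as a quoted lemma of \textcite{KI04}; your parameter settings match the paper's tradeoffs.
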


For comparison, from an $n^{\omega(1)}$ lower bound against a family of explicit multilinear polynomials, \textcite{KI04} give a deterministic algorithm for PIT over fields of characteristic zero which runs in time $2^{n^{o(1)}}$.
If instead one has a $2^{n^{\Omega(1)}}$ lower bound, then their techniques yield a deterministic algorithm which runs in time $2^{\log^{O(1)}n}$.
Viewing $\log d$ and $n$ as (roughly) equivalent, we see that our derandomization obtains the same tradeoff between hardness and pseudorandomness as \textcite{KI04}, modulo the difference between univariate and multivariate lower bounds.

It is not hard to show that lower bounds in the constant-variate regime imply comparable lower bounds in the multivariate regime (see \autoref{lem:kronecker}), but the reverse implication is not known.
In \autoref{sec:kronecker}, we investigate the possibility of using known techniques to prove univariate lower bounds from multivariate lower bounds.

As the assumption of a hard univariate family seems strong, it raises the question of whether or not one can obtain a stronger derandomization of PIT over fields of positive characteristic under a univariate hardness assumption.
There is evidence this can be done, as \textcite{GKSS19} use univariate lower bounds to obtain a complete derandomization of PIT over fields of characteristic zero.
With a more careful instantiation of the Kabanets-Impagliazzo result, we are able to derandomize PIT in a way that suffices for the bootstrapping results of \textcite{AGS19,KST19} to take effect.
This allows us to prove nearly-optimal hardness-randomness tradeoffs for PIT over fields of positive characteristic, which comes close to matching the characteristic zero result of \textcite{GKSS19}.
More concretely, we prove the following.

\begin{theorem}[informal version of \autoref{thm:bootstrap generator}]
	Let $\F$ be a field of characteristic $p > 0$.
	Let $\set{f_d(x) : d \in \naturals}$ be an explicit family of univariate polynomials which cannot be computed by circuits of size less than $d^{\delta}$ for some constant $\delta > 0$.
	Then there is a deterministic algorithm for identity testing of polynomial-size, polynomial-degree algebraic circuits in $n$ variables over $\F$ which runs in time $n^{\exp \circ \exp(O(\log^\star n))}$.
\end{theorem}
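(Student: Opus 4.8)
The plan is to use the hypothesized hard univariate family, via the Kabanets--Impagliazzo-type generator underlying \autoref{thm:KI hard vs rand}, to manufacture a ``barely non-trivial'' black-box PIT algorithm for circuits on a \emph{constant} number of variables, and then feed this into the bootstrapping theorems of \textcite{AGS19,KST19}. Those theorems take as input a black-box PIT algorithm for $n_0$-variate circuits of size $s$ and total degree $s$ that runs in time $s^{n_0^{1-\varepsilon}}$ for some $\varepsilon > 0$ and a suitable constant $n_0$ --- equivalently, an explicit hitting set of that size --- and output a black-box PIT algorithm for size-$s$, degree-$s$ circuits in any number of variables running in time $s^{\exp \circ \exp(O(\log^\star s))}$; specializing the conclusion to $s = n^{O(1)}$ (so that $\log^\star s = \log^\star n + O(1)$) gives exactly the claimed bound for polynomial-size, polynomial-degree $n$-variate circuits. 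So it is enough to build, from the $d^{\delta}$ lower bound, a hitting set of size $s^{o(n_0)}$ for $n_0$-variate size-$s$, degree-$s$ circuits, for a constant $n_0$ that we may choose.

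First I would make the hard family mildly multivariate. Given $f_d$ of degree $d$ requiring circuits of size at least $d^{\delta}$, write $d = D^{m}$ and apply the inverse Kronecker substitution $x \mapsto x_1 + x_2 D + \dots + x_m D^{m-1}$ of \autoref{lem:kronecker} to obtain an $m$-variate polynomial $\hat f_d$ of individual degree less than $D$ with $\hat f_d(x, x^{D}, \dots, x^{D^{m-1}}) = f_d(x)$. Since a size-$S$ circuit for $\hat f_d$ yields one of size $S + O(m^{2}\log D)$ for $f_d$, the polynomial $\hat f_d$ requires circuits of size at least $d^{\delta} - O(m^{2}\log D) \ge d^{\delta/2}$ once $d$ is large. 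Now I would instantiate the generator of \autoref{thm:KI hard vs rand} using $\hat f_d$ as the hard polynomial together with a Reed--Solomon (Nisan--Wigderson) design consisting of $n_0$ sets of size $m$ inside a universe of size $t = m^{2}$ with pairwise intersections at most $\ell$; such a design exists once $m^{\ell+1} \gtrsim n_0$. This yields a generator $\mathsf{G} \colon \F^{t} \to \F^{n_0}$ of degree $\poly(D)$ that fools $n_0$-variate size-$s$, degree-$s$ circuits, provided the reconstruction blow-up in the analysis --- which is $\poly(s) \cdot D^{O(\ell)}$ --- stays below the hardness $d^{\delta/2} = D^{m\delta/2}$, and $D^{m\delta/2}$ itself exceeds $s$ by a polynomial factor. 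Choosing $m = \Theta(\log n_0)$, $\ell = \Theta(\log n_0 / \log\log n_0)$, and $D = s^{\Theta(1/(m\delta))}$ (so $d = D^{m} = s^{\Theta(1/\delta)}$) makes all three conditions hold, and the associated hitting set --- the image under $\mathsf{G}$ of a grid of side $\poly(s, D)$ in $t$ coordinates --- has size $\poly(s, D)^{O(t)} = s^{O((\log^{2} n_0)/\delta)}$ and is computable within a $\poly(s)$ factor of its size, since $\hat f_d$ is as explicit as $f_d$ and the design is constructible in time $\poly(t)$.

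Next I would fix $n_0$ to be a constant large enough (in terms of $\delta$ alone) that $O((\log^{2} n_0)/\delta) \le n_0^{1/2}$, which is possible since $(\log^{2} n_0)/\delta = o(n_0)$. Then the black-box PIT algorithm of the previous paragraph runs in time $s^{n_0^{1/2}} \le s^{n_0^{1 - 1/3}}$, which satisfies the hypothesis of the bootstrapping theorem of \textcite{KST19} with $\varepsilon = 1/3$. Applying that theorem yields a black-box PIT algorithm for size-$s$, degree-$s$ circuits on any number of variables running in time $s^{\exp \circ \exp(O(\log^\star s))}$, and hence, taking $s = n^{O(1)}$, a deterministic identity test for polynomial-size, polynomial-degree circuits in $n$ variables running in time $n^{\exp \circ \exp(O(\log^\star n))}$.

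The step I expect to be the crux is the reconstruction inside the Kabanets--Impagliazzo analysis over characteristic $p$, which is exactly where \autoref{cor:pth root of circuit} enters. There one supposes the generator fails --- $C \circ \mathsf{G} \equiv 0$ for some nonzero size-$s$ circuit $C$ --- and must recover a small circuit for a restriction of the hard polynomial $\hat f_d$ in order to contradict its hardness. Over characteristic zero this recovery invokes Kaltofen's factorization of circuits; over characteristic $p$ that tool is unavailable, and the circuit one actually recovers computes a $p^{k}$-th power of the desired restriction, so that a $p^{k}$-th root must then be extracted. This extraction is efficient only because the restriction in question has $O(1)$ variables, and it is precisely this constraint that forces the whole construction into the constant-variate regime --- and, with it, the simultaneous balancing above: $m$ must stay small (so that root extraction is cheap and the design universe $t = m^{2}$ remains far below $n_0$), $D$ must be pushed up (to recover super-$s$ hardness from a bounded-variate polynomial), and $\ell$ must stay small (to keep the $D^{O(\ell)}$ reconstruction loss under the hardness bound), all at once. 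A secondary point that needs care is tracking, through the root extraction, that the circuit one recovers genuinely computes a restriction of $f_d$ itself, so that the final contradiction is with the assumed \emph{syntactic} hardness of $f_d$ and not merely with hardness of some $p^{k}$-th power of it.
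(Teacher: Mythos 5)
Your overall architecture matches the paper's: instantiate the Kabanets--Impagliazzo generator with Reed--Solomon designs of constant-size intersection, run the hybrid argument, factor via Kaltofen, extract the $p^t$\ts{th} root via \autoref{cor:pth root of circuit}, and hand the resulting $s^{o(n_0)}$-size hitting set for $\mathcal{C}_{\F}(s,n_0,s)$ to \autoref{thm:bootstrapping}. But your parameter choice breaks at exactly the step you identify as the crux. You re-encode $f_d$ as an $m$-variate polynomial with $m = \Theta(\log n_0)$, so the circuit from which you must extract a $p^t$\ts{th} root lives on $m$ variables. Each application of \autoref{cor:pth root of circuit} multiplies the size by $\Theta(p^{2m})$, and since $t$ can be as large as $\log_p(s)$, the total blow-up from root extraction is $p^{2mt} \le s^{2m} = s^{\Theta(\log n_0)}$ --- not the ``$\poly(s)$'' with a fixed exponent that your accounting assumes. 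With your choice $D = s^{\Theta(1/(m\delta))}$, i.e.\ $d = s^{\Theta(1/\delta)}$, the hardness you have available is $d^{\delta/2} = s^{\Theta(1)}$, a fixed polynomial in $s$, which is swamped by the $s^{\Theta(m)}$ root-extraction cost once $n_0$ is large. So the reconstruction does not yield a contradiction and the generator is not proved correct. The fix is local: take $D = s^{\Theta(1/\delta)}$, hence $d = s^{\Theta(m/\delta)}$, so that the hardness $d^{\delta/2} = s^{\Theta(m)}$ dominates $s^{O(m)} \cdot D^{O(\ell)} \cdot \poly(s, mD)$ for suitable constants; the hitting set size becomes $(s \cdot mD)^{m^2} = s^{O(\log^2 n_0 / \delta)}$, still $s^{o(n_0)}$, so the bootstrapping step survives.

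With that repair, your route is genuinely different from the paper's in how it copes with an arbitrarily small hardness exponent $\delta$. The paper keeps the hard polynomial on its original $k = O(1)$ variables, uses $n_0 = 2k^4$ design sets of size $k$ in a universe of size $k^3$ with intersection $2$, and first invokes \autoref{lem:hardness exponent} to amplify the hardness to $d^{30}$, because the reconstruction loses a factor $d^{O(1)}$ with a fixed absolute constant in the exponent (from Kaltofen applied to a circuit of size $\approx n_0 d^3$ and degree $\approx sd$). You instead spread the degree over $m = \Theta(\log n_0)$ variables so that the same losses become $D^{O(\ell)} \cdot d^{O(1/m)} = d^{o(1)}$, which any fixed $\delta > 0$ beats once $n_0$ is large enough; this eliminates the need for a separate amplification lemma, at the price of a larger seed length ($m^2$ versus $k^3$) and the $s^{\Theta(m)}$ root-extraction cost discussed above, both of which are affordable. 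One last bookkeeping item you should not omit: when $|\F| \le s d$ the partial assignment in the hybrid argument must be drawn from a finite extension, and the resulting circuit for $\hat f_d$ must be pulled back to $\F$ via \autoref{lem:simulate extension}; this costs only a $\log^{O(1)}(sd)$ factor but is needed for the contradiction with hardness over $\F^{p^{-\infty}}$.
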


The rest of this work is organized as follows.
In \autoref{sec:prelim}, we establish notation, definitions, and relevant background necessary to state and prove our results.
In \autoref{sec:pth root}, we prove our main technical lemma on computing $p$\ts{th} roots of algebraic circuits over fields of characteristic $p > 0$.
We then use this in \autoref{sec:KI} to extend the work of \citeauthor{KI04} to the low characteristic setting.
We combine our techniques with the bootstrapping results to obtain near-complete derandomization of PIT over fields of positive characteristic in \autoref{sec:bootstrap}.
\autoref{sec:kronecker} investigates the relationship between univariate and multivariate circuit lower bounds.
We conclude in \autoref{sec:conclusion} with a collection of problems left open by this work.

\section{Preliminaries} \label{sec:prelim}

For $n \in \naturals$, we write $[n] \coloneqq \set{1,\ldots,n}$ and $\llb n \rrb \coloneqq \set{0,\ldots,n-1}$.
If $A$ is an $n \times m$ matrix, we write $A_{i,\bullet}$ and $A_{\bullet,j}$ for the $i$\ts{th} row and $j$\ts{th} column of $A$, respectively.
We abbreviate a vector of variables $(x_1,\ldots,x_n)$, numbers $(a_1,\ldots,a_n)$, or field elements $(\alpha_1,\ldots,\alpha_n)$ by $\vec{x}$, $\vec{a}$, and $\vec{\alpha}$, respectively, where the length is usually clear from context.
We also abbreviate the product $\prod_{i=1}^n x_i^{a_i} \eqqcolon \vec{x}^{\vec{a}}$.
Given a polynomial $f(\vec{x}) = \sum_{\vec{a}} \alpha_{\vec{a}} \vec{x}^{\vec{a}}$, we write $\deg(f)$ and $\ideg(f)$ for the \emph{total degree} and \emph{individual degree} of $f$, respectively.
The total degree of $f$ is given by $\deg(f) \coloneqq \max \set{\norm{\vec{a}}_1 : \alpha_{\vec{a}} \neq 0}$, while the individual degree of $f$ is given by $\ideg(f) \coloneqq \max \set{\norm{\vec{a}}_\infty : \alpha_{\vec{a}} \neq 0}$.

For a field $\F$, the \emph{characteristic} of $\F$, denoted $\ch \F$, is the smallest positive integer $p$ such that $p \cdot 1 = 0$ in $\F$.
In the case that there is no such $p$, we say that $\F$ has characteristic zero.
Alternatively, $\ch\F$ is the number $p$ such that the ring homomorphism $\integers \to \F$ induced by $1 \mapsto 1$ has kernel $p \integers$.
The set $\mathcal{C}_{\F}(s,n,d) \subseteq \F[\vec{x}]$ denotes the set of all $n$-variate degree $d$ polynomials which can be computed by an algebraic circuit of size at most $s$ over $\F$.

\subsection{Algebraic Computation and Polynomial Identity Testing}

We assume familiarity with the models of algebraic circuits, formulae, and branching programs.
When we refer to the \emph{size} of a circuit, formula, or branching program, we mean the number of nodes in the computational device.
An introduction to this area can be found in the survey of \textcite{SY10}.
Throughout this work, we analyze our algorithms under the assumption that arithmetic over the base field $\F$ can be performed in constant time.

We now collect basic definitions and results needed for the study of deterministic black-box algorithms for polynomial identity testing.
More in-depth exposition is available in the recent survey of \textcite{KS19}.

We start with the notion of a hitting set, the basic object used to construct deterministic black-box algorithms for polynomial identity testing.

\begin{definition}
	Let $\mathcal{C} \subseteq \F[\vec{x}]$ be a set of $n$-variate polynomials.
	We say that a set $\mathcal{H} \subseteq \F^n$ is a \emph{hitting set for $\mathcal{C}$} if for every non-zero $f(\vec{x}) \in \mathcal{C}$, there is a point $\vec{\alpha} \in \mathcal{H}$ such that $f(\vec{\alpha}) \neq 0$.
	If $\mathcal{H}$ can be computed in $t(n)$ time, then we say that $\mathcal{H}$ is \emph{$t(n)$-explicit}.
\end{definition}

We now introduce hitting set generators, the analogue of pseudorandom generators in the context of algebraic derandomization.

\begin{definition}
	Let $\mathcal{C} \subseteq \F[\vec{x}]$ be a set of $n$-variate polynomials.
	Let $\mathcal{G} : \F^m \to \F^n$ be a mapping given by
	\[
		\mathcal{G}(\vec{y}) = (\mathcal{G}_1(\vec{y}), \ldots, \mathcal{G}_n(\vec{y})),
	\]
	where $\mathcal{G}_i \in \F[\vec{y}]$.
	We say that $\mathcal{G}$ is a \emph{hitting set generator for $\mathcal{C}$} if for every non-zero $f(\vec{x}) \in \mathcal{C}$, we have $f(\mathcal{G}(\vec{y})) \neq 0$.
	The \emph{seed length} of $\mathcal{G}$ is $m$.
	The \emph{degree} of $\mathcal{G}$ is $\max_{i \in [n]} \deg(\mathcal{G}_i)$.
	We say $\mathcal{G}$ is \emph{$t(n)$-explicit} if, given $\vec{\alpha} \in \F^m$, we can compute $\mathcal{G}(\vec{\alpha})$ in $t(n)$ time.
\end{definition}

It is a well-known result that an explicit, low-degree hitting set generator for $\mathcal{C}$ with small seed length yields an explicit hitting set for $\mathcal{C}$ of small size.
The hitting set is constructed by evaluating the generator on a grid of large enough size.
Correctness follows from the Schwartz-Zippel lemma.

\begin{lemma} \label{lem:hsg to hitting set}
	Let $\mathcal{C}$ be a set of $n$-variate degree $d$ polynomials.
	Let $\mathcal{G} : \F^m \to \F^n$ be a $t(n)$-explicit hitting set generator for $\mathcal{C}$ of degree $D$.
	Then there is a $(dD+1)^m t(n)$-explicit hitting set $\mathcal{H}$ for $\mathcal{C}$ of size $(dD+1)^m$.
\end{lemma}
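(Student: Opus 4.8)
The plan is to explicitly construct the hitting set $\mathcal{H}$ by evaluating the generator $\mathcal{G}$ on a sufficiently large grid of field elements (or of integers, embedded into $\F$, if $\F$ is not large enough) in the seed space $\F^m$. Concretely, fix a set $S \subseteq \F$ of size $dD + 1$ — if $\F$ itself has fewer than $dD+1$ elements, pass to an extension field of the appropriate size, which only affects the running time by a constant factor under our arithmetic model — and set
\[
	\mathcal{H} \coloneqq \set{\mathcal{G}(\vec{\beta}) : \vec{\beta} \in S^m} \subseteq \F^n.
\]
This set has size at most $(dD+1)^m$, and computing it amounts to enumerating the $(dD+1)^m$ points of $S^m$ and, for each, running the $t(n)$-explicit evaluation procedure for $\mathcal{G}$; hence $\mathcal{H}$ is $(dD+1)^m t(n)$-explicit, as claimed.

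For correctness, let $f(\vec{x}) \in \mathcal{C}$ be a nonzero polynomial of degree at most $d$. By the defining property of a hitting set generator, the polynomial $g(\vec{y}) \coloneqq f(\mathcal{G}(\vec{y})) \in \F[\vec{y}]$ is not the zero polynomial. The key observation is a degree bound: since each coordinate $\mathcal{G}_i$ has $\deg(\mathcal{G}_i) \le D$, substituting into a degree-$d$ polynomial yields $\deg(g) \le dD$, so in particular $\ideg(g) \le dD$ as well. Now apply the Schwartz–Zippel lemma to $g$ over the grid $S^m$: because $\abs{S} = dD + 1 > dD \ge \ideg(g)$ (or, via the total-degree form, $\abs{S} > \deg(g)$), the nonzero polynomial $g$ cannot vanish on all of $S^m$. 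Thus there is some $\vec{\beta} \in S^m$ with $g(\vec{\beta}) = f(\mathcal{G}(\vec{\beta})) \neq 0$, and $\mathcal{G}(\vec{\beta}) \in \mathcal{H}$ is the desired witness point for $f$.

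There is no real obstacle here; the lemma is standard and the argument is a routine combination of the degree-composition bound with Schwartz–Zippel. The only points that require a moment's care are (i) ensuring the base field is large enough to contain a grid of size $dD+1$, which is handled by working over a constant-degree extension, and (ii) bookkeeping the running time so that the cost of enumerating the grid and invoking the generator's evaluation routine is accounted for in the $(dD+1)^m t(n)$ bound. Everything else is immediate from the definitions of hitting set, hitting set generator, and the degree of a generator given earlier in \autoref{sec:prelim}.
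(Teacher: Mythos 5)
Your proof is correct and matches the paper's approach exactly: the paper proves this lemma by the same route, evaluating the generator on a grid of side $dD+1$ and invoking the Schwartz--Zippel lemma on the composed polynomial $f(\mathcal{G}(\vec{y}))$, whose degree is at most $dD$. The explicitness and size accounting are likewise as the paper intends, so there is nothing to add.
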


We also need a notion of explicitness for a family of polynomials.
In previous works on hardness-randomness tradeoffs for polynomial identity testing, a family of $n$-variate polynomials $\set{f_n \in \F[\vec{x}] : n \in \naturals}$ is considered explicit if $f_n$ is computable in $\exp(O(n))$ time.
However, we will need a slightly different notion of explicitness.
Instead of an exponential-time algorithm to compute $f_n$, we require an exponential-time algorithm to compute the coefficient of a given monomial in $f_n$.
This different notion of explicitness will be used to transition between the constant-variate and multivariate regimes later on in \autoref{sec:KI} and \autoref{sec:bootstrap}.

\begin{definition}
	Let $\set{f_{n,d}(\vec{x}) \in \F[\vec{x}] : n, d \in \naturals}$ be a family of $n$-variate degree $d$ polynomials.
	We say that this family is \emph{strongly $t(n,d)$-explicit} if there is an algorithm which on input $(n,d,\vec{a})$ outputs the coefficient of $\vec{x}^{\vec{a}}$ in $f_{n,d}(\vec{x})$ in $t(n,d)$ time.
\end{definition}

\begin{remark}
	The preceding definition is reminiscent of Valiant's criterion for membership in $\VNP$.
	Briefly, Valiant's criterion says that if the coefficient of $\vec{x}^{\vec{a}}$ can be computed in $\#\P/\poly$, then the polynomial $f(\vec{x})$ is in $\VNP$, an algebraic analogue of $\NP$.
	We refer the reader to \textcite[Chapters 1 and 2]{Bur00} for further exposition on $\VNP$ and Valiant's criterion.
\end{remark}

We will repeatedly build explicit families of hard multivariate polynomials out of explicit families of hard constant-variate polynomials.
By ``a family of hard multivariate polynomials,'' we mean a family of polynomials $\set{f_n(\vec{x}) \in \F[\vec{x}] : n \in \naturals}$, where $f_n$ is an $n$-variate polynomial of degree $n^{O(1)}$.
When we say ``a family of hard constant-variate polynomials,'' we mean a family $\set{f_d(\vec{x}) \in \F[\vec{x}] : d \in \naturals}$, where $f_d$ is a degree $d$ polynomial on $k = O(1)$ variables.
That is, when we consider multivariate polynomials, we parameterize the family by the number of variables and primarily consider families of small degree; when we look at constant-variate polynomials, we fix the number of variables in all polynomials and parameterize the family by the degree of the polynomial.

To illustrate how we can obtain hard multivariate polynomials from hard constant-variate polynomials, suppose $g_d(x) = \sum_{i=0}^d \alpha_i x^i$ is a hard degree $d$ univariate polynomial.
We will define a new polynomial $f_n(\vec{y})$ on $n \coloneqq \floor{\log d} + 1$ variables, where the monomials of $f_n$ correspond to writing each term of $g_d$ ``in base 2.''
More precisely, for each $\vec{e} \in \bits^n$, let $j(\vec{e})$ be the number whose representation in binary corresponds to $\vec{e}$.
We assign the coefficient $\alpha_{j(\vec{e})}$ to the monomial $\vec{y}^{\vec{e}}$ in $f_n$.
To show that $f_n$ is hard, we show the contrapositive: a small circuit for $f_n$ implies a small circuit for $g_d$, which contradicts the hardness of $g_d$.
The proof of this is relatively straightforward, as we simply find a way to substitute powers of $x$ for each $y_i$ so that the monomial $\vec{y}^{\vec{e}}$ is mapped to $x^{j(\vec{e})}$.

In the case where $g_d$ is a polynomial in multiple variables, we simultaneously write each variable appearing in $g_d$ ``in base 2.''
We remark that there is nothing \textit{a priori} special about our use of base 2.
However, doing so yields polynomials which are multilinear, a fact which will be useful later on.

We now make the preceding sketch precise, showing that lower bounds in the constant-variate regime imply comparable lower bounds in the multivariate regime.

\begin{lemma} \label{lem:kronecker}
	Let $g_{m,d}(\vec{x}) = \sum_{\vec{a}} \alpha_{\vec{a}} \vec{x}^{\vec{a}}$ be a strongly $t(m,d)$-explicit $m$-variate degree $d$ polynomial which requires circuits of size $s$ to compute.
	Let $j : \bits^{\floor{\log d}+1} \to \llb 2^{\floor{\log d}+1} \rrb$ be given by $j(\vec{e}) = \sum_{i=1}^{\floor{\log d}+1} \vec{e}_i 2^{i-1}$, that is, $j(\vec{e})$ is the number whose binary representation corresponds to $\vec{e}$.
	Let $\vec{y} = (y_{1,1},\ldots,y_{1,\floor{\log d}+1},\ \ldots\ ,y_{m,1},\ldots,y_{m,\floor{\log d}+1})$ and define
	\[
		f_{m,d}(\vec{y}) = \sum_{\vec{e} \in \bits^{m \times \floor{\log d}+1}} \alpha_{(j(\vec{e}_{1,\bullet}),\ldots,j(\vec{e}_{m,\bullet}))} \vec{y}^{\vec{e}}.
	\]
	Then $f_{m,d}$ is a strongly $t(m,d)$-explicit multilinear polynomial on $m (\floor{\log d}+1)$ variables which requires circuits of size $s - \Theta(m \log d)$ to compute.
\end{lemma}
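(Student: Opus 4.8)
The plan is to prove the two claims of the lemma separately: the easy bookkeeping facts (that $f_{m,d}$ is multilinear and strongly $t(m,d)$-explicit) and then the circuit lower bound, which is the only part requiring an argument. For the bookkeeping, multilinearity is immediate since every monomial $\vec{y}^{\vec{e}}$ appearing in the sum has $\vec{e} \in \bits^{m \times (\floor{\log d}+1)}$, so each variable occurs to degree at most $1$. For strong explicitness, note that on input $(m(\floor{\log d}+1), \vec{e})$ we first check that $\vec{e}$ is a $0$/$1$ vector (otherwise the coefficient is $0$), then compute $a_i := j(\vec{e}_{i,\bullet})$ for each $i \in [m]$ in time $O(m\log d)$, and finally invoke the strong explicitness algorithm for $g_{m,d}$ on $(m, d, \vec{a})$, which runs in time $t(m,d)$; the total is $t(m,d) + O(m\log d)$, which we absorb into $t(m,d)$ (or one simply states the bound as $t(m,d)+O(m\log d)$ — either reading is consistent with the informal "strongly $t(m,d)$-explicit").

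The heart of the proof is the lower bound, which I would establish by the contrapositive: given a circuit $C$ of size $s'$ computing $f_{m,d}(\vec{y})$, I will produce a circuit of size $s' + \Theta(m\log d)$ computing $g_{m,d}(\vec{x})$, forcing $s' \ge s - \Theta(m\log d)$. The substitution is the Kronecker-style map that realizes "writing each variable in base $2$": for each original variable $x_i$ and each $k \in \{1,\ldots,\floor{\log d}+1\}$, substitute $y_{i,k} \mapsto x_i^{2^{k-1}}$. Under this substitution the monomial $\vec{y}^{\vec{e}}$ maps to $\prod_{i=1}^m x_i^{\sum_k \vec{e}_{i,k} 2^{k-1}} = \prod_{i=1}^m x_i^{j(\vec{e}_{i,\bullet})} = \vec{x}^{(j(\vec{e}_{1,\bullet}),\ldots,j(\vec{e}_{m,\bullet}))}$, so $f_{m,d}$ composed with this substitution equals $\sum_{\vec{e}} \alpha_{(j(\vec{e}_{1,\bullet}),\ldots)} \vec{x}^{(j(\vec{e}_{1,\bullet}),\ldots)}$. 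The key combinatorial point is that $\vec{e} \mapsto (j(\vec{e}_{1,\bullet}),\ldots,j(\vec{e}_{m,\bullet}))$ is a bijection from $\bits^{m \times (\floor{\log d}+1)}$ onto $\llb 2^{\floor{\log d}+1}\rrb^m$, and since $d \le 2^{\floor{\log d}+1}-1$, every exponent vector $\vec{a}$ with $\alpha_{\vec{a}} \ne 0$ (which has $\norm{\vec{a}}_\infty \le d$) is hit exactly once; hence the resulting polynomial is exactly $g_{m,d}(\vec{x})$, with no collisions or spurious cancellation.

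To turn this into a circuit bound, I feed the circuit $C$ the inputs $x_i^{2^{k-1}}$ in place of $y_{i,k}$. These powers are computed by repeated squaring: $x_i, x_i^2, x_i^4, \ldots, x_i^{2^{\floor{\log d}}}$ costs $\floor{\log d}$ multiplications per variable, so $m\floor{\log d}$ gates total, and we route $x_i^{2^0}=x_i$ directly; altogether this pre-processing adds $\Theta(m\log d)$ gates. Composing with $C$ gives a circuit of size $s' + \Theta(m\log d)$ for $g_{m,d}$, so by hypothesis $s' + \Theta(m\log d) \ge s$, i.e. $s' \ge s - \Theta(m\log d)$, as claimed. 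I do not anticipate a genuine obstacle here; the only point that needs care is the bijection/no-cancellation argument — one must check that distinct $\vec{e}$ give distinct $\vec{x}$-monomials so that the coefficients $\alpha_{\vec{a}}$ are transported faithfully rather than being summed together, and that the range of $j$ covers all exponents appearing in $g_{m,d}$ — but this is exactly guaranteed by uniqueness of binary representation together with the choice $\floor{\log d}+1$ of digits.
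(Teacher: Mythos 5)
Your proposal is correct and follows essentially the same route as the paper: the Kronecker substitution $y_{i,k}\mapsto x_i^{2^{k-1}}$ implemented by repeated squaring in $\Theta(m\log d)$ gates, with the lower bound obtained by contrapositive and explicitness inherited from $g_{m,d}$. The only difference is that you spell out the bijection/no-cancellation check and the explicitness bookkeeping more explicitly than the paper does, which is fine.
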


\begin{proof}
	The fact that $f_{m,d}$ is multilinear is clear from the definition.

	To see that $f_{m,d}$ is hard to compute, suppose $\Phi$ is a circuit of size $t$ which computes $f_{m,d}$.
	By applying the Kronecker substitution $y_{i,j} \mapsto x_i^{2^j}$, we can recover a circuit which computes $g_{m,d}(\vec{x})$.
	This mapping can be computed in size $\Theta(m \log d)$ by repeated squaring, so we obtain a circuit for $g_{m,d}$ of size $t + \Theta(m \log d)$.
	By assumption, $t + \Theta(m \log d) \ge s$, so $t \ge s - \Theta(m \log d)$, which proves the lower bound on the circuit complexity of $f_{m,d}$.

	Finally, remark that the binary description of a monomial in $f_{m,d}$ is exactly the same as the binary description of a monomial in $g_{m,d}$.
	This implies we can use the $t(m,d)$-time algorithm to compute the coefficients of $f_{m,d}$, so $f_{m,d}$ inherits the explicitness of $g_{m,d}$.
\end{proof}

Whether lower bounds in the multivariate regime imply lower bounds in the constant-variate regime is an open question.
In \autoref{sec:kronecker}, we give complexity-theoretic evidence that suggests the technique used to prove the preceding lemma does not suffice to prove constant-variate lower bounds from multivariate lower bounds.

In \autoref{sec:bootstrap}, we will run into some technical issues concerning circuits which are defined over a low-degree extension of the base field $\F$.
The next lemma says that whenever a circuit $\Phi$ is defined over an extension $\mathbb{K} \supseteq \F$ of low degree, such a circuit can in fact be defined over $\F$ without increasing its size too much.
A related result was proved in \textcite[\textsection 4.3]{BCS97}, where the authors considered extensions $\mathbb{K} \supseteq \F$ such that circuits defined over $\mathbb{K}$ have no computational advantage compared to circuits defined over $\F$ when computing a polynomial in $\F[\vec{x}]$.

\begin{lemma}[{\cites[Proposition 4.1(iii)]{Bur00}{HY11}}, see also {\cite[\textsection 4.3]{BCS97}}] \label{lem:simulate extension}
	Let $\F$ be a field and let $\mathbb{K} \supseteq \F$ be an extension of degree $k$.
	Suppose $f(\vec{x})$ can be computed by a circuit of size $s$ over $\mathbb{K}$.
	Then there is a circuit of size $O(k^3 s)$ which computes $f$ over $\F$.
\end{lemma}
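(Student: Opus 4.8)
The plan is to fix an $\F$-basis $\omega_1, \ldots, \omega_k$ of $\mathbb{K}$ with $\omega_1 = 1$, and to simulate the circuit over $\mathbb{K}$ by a circuit over $\F$ in which each wire carrying a value $\beta \in \mathbb{K}$ is replaced by a bundle of $k$ wires carrying the coordinates of $\beta$ in this basis. Concretely, I would proceed gate by gate through a topological ordering of the circuit $\Phi$ over $\mathbb{K}$. Each input gate labeled by a variable $x_i$ becomes the bundle $(x_i, 0, \ldots, 0)$, since $\omega_1 = 1$; each input gate labeled by a constant $\gamma \in \mathbb{K}$ becomes the bundle of its coordinates, which are elements of $\F$ and hence legal constants in the new circuit. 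At an addition gate, the output bundle is the coordinatewise sum of the two input bundles, costing $k$ addition gates. At a multiplication gate, if $\beta = \sum_i b_i \omega_i$ and $\beta' = \sum_j b'_j \omega_j$, then $\beta\beta' = \sum_{i,j} b_i b'_j \, \omega_i \omega_j$, and writing each product $\omega_i \omega_j = \sum_\ell c_{ij\ell} \omega_\ell$ using the (fixed, field-independent-of-input) structure constants $c_{ij\ell} \in \F$, the $\ell$-th output coordinate is the $\F$-bilinear form $\sum_{i,j} c_{ij\ell}\, b_i b'_j$. This is computable by $O(k^2)$ arithmetic gates, so each multiplication gate of $\Phi$ is replaced by $O(k^3)$ gates over $\F$ in total (there are $k$ output coordinates, each costing $O(k^2)$).

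The key steps, in order, are: (1) choose the basis with $\omega_1 = 1$ and precompute the structure constants $c_{ij\ell}$; (2) describe the gate-by-gate replacement as above, maintaining the invariant that after processing gate $v$ the bundle attached to $v$ holds the $\F$-coordinates of the value computed at $v$ in $\Phi$; (3) verify the invariant by induction on the topological order, the base case being input gates and the inductive step being the two gate types just analyzed; (4) read off the output: the gate computing $f(\vec{x}) \in \F[\vec{x}]$ has all coordinates except the first equal to the zero polynomial, since $f$ already lies in $\F[\vec{x}]$, so the first output wire of the corresponding bundle computes $f$ over $\F$; (5) tally the size, which is $s$ gates each blown up by a factor $O(k^3)$, giving $O(k^3 s)$.

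The only genuinely delicate point is bookkeeping the size bound so that it comes out as $O(k^3 s)$ rather than something larger: one must observe that the dominant cost is at multiplication gates, where producing all $k$ output coordinates as bilinear forms in the $2k$ input coordinates costs $O(k^3)$ (using the trivial quadratic-size evaluation of each bilinear form — no fast matrix multiplication is needed or claimed), while addition gates and input gates cost only $O(k)$. A secondary subtlety is that the new circuit's constants must be elements of $\F$; this is automatic because both the coordinates of any $\mathbb{K}$-constant appearing in $\Phi$ and the structure constants $c_{ij\ell}$ live in $\F$ by construction. I do not expect either point to pose real difficulty, and since the statement is cited as classical (it appears in \cite{Bur00, HY11, BCS97}), the above is essentially the standard argument; I would likely present it compactly or simply defer to those references.
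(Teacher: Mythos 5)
Your argument is correct and is exactly the standard basis-and-structure-constants simulation that the paper's cited references (Bürgisser, Hrubeš--Yehudayoff, BCS \S 4.3) use; the paper itself only quotes the lemma without proof. The size accounting ($O(k^3)$ per multiplication gate via naive bilinear-form evaluation, $O(k)$ per addition gate) and the observation that the output bundle is $(f,0,\ldots,0)$ since $f \in \F[\vec{x}]$ are both handled correctly.
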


We conclude our preliminaries on algebraic complexity by quoting a celebrated result of Kaltofen which shows that algebraic circuits may be factored without a large increase in size.

\begin{theorem}[\cite{Kaltofen89}] \label{thm:Kaltofen factoring}
	Let $f(\vec{x}) \in \F[\vec{x}]$ be a polynomial of degree $d$ computable by an algebraic circuit of size $s$.
	Let $g(\vec{x}) \in \F[\vec{x}]$ be a factor of $f(\vec{x})$.
	Then there is an algebraic circuit of size $s' \le O((snd)^4)$ which computes
	\begin{enumerate}
		\item
			$g(\vec{x})$, in the case that $\ch \F = 0$, and
		\item
			$g(\vec{x})^{p^k}$ where $k \ge 0$ is the largest integer such that $g(\vec{x})^{p^k}$ divides $f(\vec{x})$, in the case that $\ch \F = p > 0$. \qedhere
	\end{enumerate}
\end{theorem}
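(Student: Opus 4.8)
The plan is to follow Kaltofen's strategy, reducing multivariate factorization to Hensel lifting together with power-series root extraction, both of which can be implemented by algebraic circuits whose size is polynomial in $s$, $n$, and $d$. First, since total degree is multiplicative, $f$ has at most $d$ irreducible factors counted with multiplicity; an arbitrary factor $g$ is a subproduct of these, and a product of $\le d$ circuits of size $\sigma$ has a circuit of size $O(d\sigma)$, so it suffices to produce a small circuit for each irreducible factor. Next, apply a generic triangular change of variables $x_i \mapsto x_i + \alpha_i x_1$ for $i \ge 2$, together with a scaling, so that $f$ becomes monic in $x_1$ with $\deg_{x_1} f = d$; suitable $\alpha_i$ exist because the top-degree homogeneous form of $f$ is nonzero. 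The $x_1$-leading coefficient of $g$ is then a nonzero constant, so $g$ too may be rescaled to be monic in $x_1$; let $e$ be the exact multiplicity of $g$ in $f$. Finally, pick a generic point $\vec a \in \F^{n-1}$ (working over an extension $\mathbb{K} \supseteq \F$ of degree $O(\log d)$ if $\F$ is too small, which by \autoref{lem:simulate extension} costs only a $\poly(d)$ factor once the final circuit is pulled back to $\F$, legitimate since $g \in \F[\vec x]$). For generic $\vec a$ the univariate polynomials $\bar g \coloneqq g(x_1, \vec a)$ and $\bar h \coloneqq (f/g^e)(x_1, \vec a)$ are coprime, so $f(x_1, \vec a) = \bar g^{e} \cdot \bar h$ is a coprime factorization of monic polynomials.

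Over the power-series ring $R \coloneqq \F[[x_2 - a_2, \ldots, x_n - a_n]]$, the coprime factorization $f(x_1, \vec a) = \bar g^{e} \cdot \bar h$ lifts uniquely by Hensel's lemma to a factorization $f = G \cdot H$ with $G, H \in R[x_1]$ monic in $x_1$. Since $g^e$ and $f/g^e$ lie in $R[x_1]$, are monic in $x_1$, and have the correct reductions modulo the maximal ideal of $R$, uniqueness of the Hensel lift forces $G = g^e$. I would compute $G$ modulo the $(d+1)$-st power of the maximal ideal of $R$ by $O(\log d)$ rounds of quadratic Hensel lifting, each round evaluating the size-$s$ circuit for $f$ and performing truncated power-series arithmetic (including a Newton iteration for power-series inversion) at cost $\poly(d)$ per round; since $\deg(g^e) \le d$, this precision recovers $g^e$ exactly, and undoing the change of variables yields a circuit of size $\poly(s,n,d)$ for $g^e$. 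The one delicate implementation point is truncating a power series to a fixed degree with a circuit: this is handled by the homogenization trick, replacing each $x_i - a_i$ by $t(x_i - a_i)$ and reading off homogeneous components as coefficients of powers of $t$ by interpolation over $O(d)$ values of $t$, which inflates circuit size by a $\poly(d)$ factor per operation.

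It remains to extract the factor itself. In characteristic $0$ we recover $g$ from the circuit for $g^e$ by a truncated Newton iteration for the $e$-th root (the power series $(1+u)^{1/e}$ stabilizes after $O(\log d)$ steps since $g$ has degree $\le d$), at cost $\poly(d)$ additional gates. In characteristic $p$, write $e = p^k m$ with $m$ coprime to $p$; then $g^e = (g^{p^k})^{m}$ and $m$ is invertible, so the same iteration extracts the $m$-th root $g^{p^k}$, and raising it to the $p^{k'-k}$ power by repeated squaring produces $g^{p^{k'}}$, where $k' = \lfloor \log_p e \rfloor$ is the largest integer with $g^{p^{k'}} \mid f$. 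Composing the size bounds from all three stages gives $s' \le O((snd)^4)$.

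I expect the main obstacle to be carrying the circuit-size accounting cleanly through the Hensel-lifting loop: each truncated power-series operation incurs a $\poly(d)$ overhead from homogenization, these operations are nested inside $O(\log d)$ Hensel rounds (with further nested Newton iterations for inversion and root extraction), and one must check that the overheads do not compound beyond the claimed bound. The characteristic-$p$ subtlety is comparatively mild but genuine: only the prime-to-$p$ part of the multiplicity $e$ can be removed by a root extraction, which is exactly why the conclusion produces $g^{p^{k'}}$ rather than $g$ in positive characteristic---consistent with the fact, noted in the introduction, that recovering $g$ from a circuit for $g^p$ over a field of characteristic $p$ is an open problem.
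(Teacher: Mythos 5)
This theorem is not proved in the paper at all --- it is imported as a black box from Kaltofen (1989) --- so there is no internal argument to compare against. Your sketch is a faithful outline of the standard proof for an \emph{irreducible} factor $g$: normalize $f$ to be monic in $x_1$, specialize the remaining variables at a generic point so that $g^e$ and $f/g^e$ reduce to coprime monic univariates, Hensel-lift that coprime factorization to power-series precision $d+1$ so that uniqueness of the lift identifies the first factor as $g^e$ exactly, and remove the prime-to-$p$ (resp.\ all of the) multiplicity by a truncated Newton iteration for the $m$\ts{th} (resp.\ $e$\ts{th}) root. The implementation points you single out --- homogenization with an auxiliary variable $t$ plus interpolation to truncate power series inside a circuit, and passing to a degree-$O(\log d)$ extension when $\F$ is small, pulled back via \autoref{lem:simulate extension} --- are the right ones. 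The exponent $4$ in $O((snd)^4)$ is asserted rather than derived, but nothing in the paper depends on that constant.

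The one substantive problem is your opening reduction from arbitrary factors to irreducible ones. It is fine in characteristic zero, but in characteristic $p$ it does not deliver the conclusion as stated. If $g = \prod_i g_i^{c_i}$ with $g_i$ irreducible of multiplicity $e_i$ in $f$, your pipeline produces circuits for $g_i^{p^{v_p(e_i)}}$, and these cannot in general be multiplied into $g^{p^k}$ for $k$ the largest integer with $g^{p^k} \mid f$: that product requires the power $g_i^{c_i p^k}$, whose exponent need not be divisible by $p^{v_p(e_i)}$, and extracting the missing root is exactly the $p$\ts{th}-root problem this paper is about. Indeed, the literal statement for composite factors is suspect: over $\F_2$, applying it to $f = h^4$ and $g = h^3$ gives $k = 0$ and hence a small circuit for $h^3$, after which division elimination on $f/g = h$ yields a small circuit for $h$ whenever $h^4$ has one --- resolving the open problem. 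This is best read as a looseness in the transcription of Kaltofen's theorem rather than a flaw unique to your argument: Kaltofen's result, and every invocation in this paper (the factor $y - g_d(\vec{z})$ used in \autoref{thm:bootstrap generator} is irreducible), concerns irreducible factors. You should either restrict $g$ to be irreducible in the positive-characteristic clause or weaken that clause to output $\prod_i g_i^{p^{v_p(e_i)}}$ over the irreducible factors $g_i$ of $g$.
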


\subsection{Combinatorial Designs}

We will make use of the designs of \textcite{NW94}, specifically as they are used by \textcite{KI04} to prove hardness-randomness tradeoffs for polynomial identity testing.
\textcite{NW94} gave two constructions of designs: one via Reed-Solomon codes, and one via a greedy algorithm.
We first quote their construction using Reed-Solomon codes, which was also recently described in work by \textcite{KST19}.

\begin{lemma}[\cite{NW94}, see also \cite{KST19}] \label{lem:RS design}
	Let $c \ge 2$ be a positive integer, and let $n,m,\ell, r \in \naturals$ be such that (i) $\ell = m^c$, (ii) $r \le m$, (iii) $m$ is a prime power, and (iv) $n \le m^{(c-1)r}$.
	Then there is a collection of sets $S_1, \ldots, S_n \subseteq [\ell]$ such that
	\begin{itemize}
		\item
			for each $i \in [n]$, we have $|S_i| = m$; and
		\item
			for all distinct $i, j \in [n]$, we have $|S_i \cap S_j| \le r$.
	\end{itemize}
	Additionally, such a family can be deterministically constructed in $\poly(n)$ time.
\end{lemma}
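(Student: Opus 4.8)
The plan is to reconstruct the Reed--Solomon-based design of \textcite{NW94}. Set $q \coloneqq m$ and $Q \coloneqq m^{c-1}$, both prime powers by condition (iii), so that $\F_q$ and $\F_Q$ exist and $\F_q$ sits inside $\F_Q$ as a subfield (since $e \mid e(c-1)$ when $m = p^e$). Using condition (i), identify $[\ell] = [m^c]$ with $\F_Q \times \F_q$ via a fixed, efficiently computable bijection. To each univariate polynomial $p \in \F_Q[z]$ of degree less than $r$, associate the set
\[
	S_p \coloneqq \set{ (p(a), a) : a \in \F_q } \subseteq \F_Q \times \F_q ,
\]
where $p(a) \in \F_Q$ makes sense because $\F_q \subseteq \F_Q$. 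Each such set has size exactly $|\F_q| = q = m$, which gives the first bullet.

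For the intersection bound, if $p \ne p'$ both have degree less than $r$, then $p - p'$ is a nonzero polynomial of degree less than $r$ and hence has at most $r - 1$ roots in $\F_Q$; any point of $S_p \cap S_{p'}$ is of the form $(p(a), a)$ with $a$ a root of $p - p'$, so $|S_p \cap S_{p'}| \le r - 1 \le r$. For the count, there are $Q^r = m^{(c-1)r}$ polynomials over $\F_Q$ of degree less than $r$. Two distinct such polynomials agree on fewer than $r \le m = |\F_q|$ points, so they disagree somewhere on $\F_q$ and therefore yield distinct sets $S_p$; this is precisely where condition (ii) enters. We thus obtain $m^{(c-1)r}$ distinct sets with pairwise intersections at most $r$, and condition (iv) lets us select any $n$ of them.

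For the explicitness claim, I would observe that a representation of $\F_Q$ (say, an irreducible polynomial of the appropriate degree over the prime field) together with its subfield $\F_q$ and the bijection above can be computed deterministically in time polynomial in $m$ and $c$; after that, enumerating $n$ polynomials of degree less than $r$ over $\F_Q$ and evaluating each at the $m$ points of $\F_q$ costs $\poly(m, n)$ field operations. Since producing the $S_i$ already requires writing down $nm$ elements of $[\ell]$, the regime in which this lemma is applied has $m, \ell = \poly(n)$, and then the whole construction runs in $\poly(n)$ time.

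There is no genuine obstacle here: the statement is a direct adaptation of a classical construction and essentially all of the work is bookkeeping. The points that warrant care are choosing the identification $[\ell] \cong \F_Q \times \F_q$ so that each $S_i$ has size $m$ (rather than $Q$ or $m^c$), checking that condition (ii) is exactly what makes both the count $Q^r$ and the distinctness of the $S_p$ come out right, and being slightly careful about what ``$\poly(n)$-time'' means when $m$ is not a priori polynomially bounded in $n$.
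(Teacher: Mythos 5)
Your reconstruction is correct and is exactly the Reed--Solomon construction that the paper attributes to \textcite{NW94} and \textcite{KST19} without reproving it: polynomials of degree less than $r$ over $\F_{m^{c-1}}$, restricted to the subfield $\F_m$, give $m^{(c-1)r}$ graphs of size $m$ in a ground set of size $m^c$ with pairwise intersections at most $r-1$, and all the parameter checks (including where $r \le m$ is used and the caveat about $\poly(n)$ explicitness when $m$ is large) are handled correctly. No further comparison is needed.
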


We now cite the designs obtained by \textcite{NW94} via a greedy algorithm.
In the regime where $m = O(\log n)$, this improves on the previous construction by taking the size $\ell$ of the ground set to be $O(\log n)$ as opposed to $O(\log^2 n)$.

\begin{lemma}[\cite{NW94}] \label{lem:NW design}
	Let $n$ and $m$ be integers such that $n < 2^m$.
	There exists a family of sets $S_1,\ldots,S_n \subseteq [\ell]$ such that
	\begin{enumerate}
		\item
			$\ell = O(m^2 / \log(n))$,
		\item
			for each $i \in [n]$, we have $|S_i| = m$; and
		\item
			for all distinct $i, j \in [n]$, we have $|S_i \cap S_j| \le \log(n)$.
	\end{enumerate}
	Such a family of sets can be deterministically constructed in time $\poly(n,2^{\ell})$.
\end{lemma}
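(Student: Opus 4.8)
The plan is to carry out the classical greedy construction of \textcite{NW94}: build $S_1, \dots, S_n$ one set at a time, use a probabilistic argument to guarantee that at each stage a valid next set exists, and then observe that the pool of candidates is small enough to search exhaustively within the claimed running time. Throughout we take $\log$ to base $2$, as is standard for this sort of design. For the setup, fix a constant $C$ to be chosen below, put $b \coloneqq \lceil C m / \log n \rceil$ and $\ell \coloneqq mb$; since $n < 2^m$ forces $\log n < m$, we get $b \ge 1$ and $\ell = mb = O(m^2/\log n)$, as required (and we may assume $n \ge 2$, the case $n = 1$ being trivial). Partition $[\ell]$ into $m$ disjoint blocks $B_1, \dots, B_m$ of size $b$ and require each $S_i$ to contain exactly one element of each block. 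Then $S_i$ corresponds to a function $g_i \colon [m] \to [b]$, the condition $|S_i| = m$ holds automatically, and for $i \neq j$ we have $|S_i \cap S_j| = |\{u \in [m] : g_i(u) = g_j(u)\}|$.

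For the greedy step, suppose $g_1, \dots, g_t$ have been chosen, $t < n$, with the intersection of every two of the corresponding sets at most $\log n$. Draw $g \colon [m] \to [b]$ uniformly at random. For a fixed $i \le t$, the quantity $|\{u : g(u) = g_i(u)\}|$ is a sum of $m$ independent $\mathrm{Bernoulli}(1/b)$ variables, so a union bound over which coordinates agree gives $\Pr\bigl[\,|\{u : g(u) = g_i(u)\}| > \log n\,\bigr] \le \binom{m}{\lceil \log n \rceil} b^{-\lceil \log n \rceil} \le \bigl(e m /(b \log n)\bigr)^{\log n}$, and this is $< 1/n$ once $C$ is a large enough absolute constant. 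Taking a union bound over the $t < n$ previously chosen sets shows that a valid choice of $g = g_{t+1}$ exists. Induction on $t$ then yields the full family $S_1, \dots, S_n$, with $|S_i| = m$ for every $i$ and all pairwise intersections at most $\log n$.

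To perform this deterministically, at step $t+1$ enumerate all candidate $m$-subsets of $[\ell]$ — there are at most $\binom{\ell}{m} \le 2^{\ell}$ of them — and for each one test, in $O(t \ell) = \poly(n,\ell)$ time, whether it meets every previously chosen $S_i$ in at most $\log n$ points; the probabilistic argument above guarantees that at least one candidate passes. Over all $n$ steps this runs in $\poly(n, 2^{\ell})$ time, as claimed. (If one instead wanted a $\poly(n,\ell)$-time construction, one could fix the values $g_{t+1}(1), g_{t+1}(2), \dots$ one coordinate at a time via the method of conditional expectations, but this refinement is not needed here.)

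The one genuinely delicate point is quantitative: the constant $C$ in $b = \Theta(m/\log n)$ must be chosen large enough that the Chernoff-type estimate yields an intersection bound of \emph{exactly} $\log n$ rather than merely $O(\log n)$, and a per-pair failure probability strictly below $1/n$ so that the union bound over the at most $n-1$ earlier sets still leaves room for a valid choice — all while keeping $\ell = mb = O(m^2/\log n)$. Everything else is routine bookkeeping.
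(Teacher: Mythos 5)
Your construction is the standard greedy/probabilistic argument of Nisan and Wigderson, which is exactly the source the paper cites for this lemma (the paper itself gives no proof). The parameter choices and the derandomization by exhaustive search over the at most $2^{\ell}$ candidates at each step are correct; the only presentational nit is that the enumeration at step $t+1$ should be restricted to the $b^m \le 2^{\ell}$ block-transversals so that the inductive hypothesis (each chosen set is a function $g_i \colon [m] \to [b]$) is preserved.
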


In extending the analysis of the Kabanets-Impagliazzo generator to low characteristic fields, we will make use of \autoref{lem:NW design}.
Our use of \autoref{lem:RS design} will arise when we combine the hardness versus randomness paradigm with the bootstrapping phenomenon.
In that setting, we will apply \autoref{lem:RS design} with $c = O(1)$ and $r = O(1)$.
Compared to \autoref{lem:NW design}, this yields sets with much smaller intersection size, though the number of sets is only $m^{O(1)}$ as opposed to $2^m$.

\subsection{Field Theory}

To cleanly state some of our results, we need the notion of a perfect field.
Namely, given a circuit $\Phi$ which computes $f(\vec{x})^p \in \F[\vec{x}]$, we will construct in \autoref{sec:pth root} a circuit $\Psi$ which computes $f(\vec{x})$.
This construction takes $p$\ts{th} roots of field elements $\alpha \in \F$, which are not always guaranteed to exist in $\F$.
To ensure $\Psi$ is defined over the base field $\F$, we require that $\F$ is closed under taking $p$\ts{th} roots, which is equivalent to requiring that $\F$ is perfect.

\begin{definition}
	A field $\F$ is called \emph{perfect} if either $\F$ has characteristic 0 or $\F$ has characteristic $p > 0$ and the map $\alpha \mapsto \alpha^p$ is an automorphism of $\F$.
	If $\F$ has characteristic $p > 0$, then the \emph{perfect closure} of $\F$, denoted $\F^{p^{-\infty}}$, is the smallest field containing $\F$ which is closed under taking $p$\ts{th} roots.
\end{definition}

It is a basic fact that perfect closures exist.

\begin{fact}
	Every field $\F$ of characteristic $p > 0$ has a perfect closure $\F^{p^{-\infty}}$.
\end{fact}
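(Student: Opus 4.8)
The plan is to construct $\F^{p^{-\infty}}$ explicitly inside a fixed algebraic closure $\overline{\F}$ of $\F$. The only algebraic input is the behaviour of the Frobenius map $\phi \colon \alpha \mapsto \alpha^p$: since $p \mid \binom{p}{k}$ for $0 < k < p$, we have $(\alpha + \beta)^p = \alpha^p + \beta^p$ in any field of characteristic $p$, so $\phi$ is a ring endomorphism, and it is injective because $\alpha^p = 0$ forces $\alpha = 0$. Iterating, $(\alpha + \beta)^{p^k} = \alpha^{p^k} + \beta^{p^k}$ for all $k \ge 0$.

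Fix $\overline{\F} \supseteq \F$ algebraically closed and set
\[
	K \coloneqq \set{\alpha \in \overline{\F} : \alpha^{p^k} \in \F \text{ for some } k \ge 0}.
\]
I would then check, in order: (i) $K$ is a subfield of $\overline{\F}$ containing $\F$ --- given $\alpha,\beta \in K$ with $\alpha^{p^j}, \beta^{p^k} \in \F$, put $m = \max(j,k)$, so $(\alpha - \beta)^{p^m} = \alpha^{p^m} - \beta^{p^m} \in \F$ and $(\alpha\beta)^{p^m} = \alpha^{p^m}\beta^{p^m} \in \F$, and if $\alpha \ne 0$ then $(\alpha^{-1})^{p^k} = (\alpha^{p^k})^{-1} \in \F$; (ii) $K$ is perfect --- for $\alpha \in K$ with $\alpha^{p^k} \in \F$, the $p$\ts{th} root $\beta = \alpha^{1/p} \in \overline{\F}$ satisfies $\beta^{p^{k+1}} = \alpha^{p^k} \in \F$, hence $\beta \in K$, so $\phi|_K$ is onto and therefore an automorphism of $K$; (iii) minimality --- any perfect field $L$ with $\F \subseteq L \subseteq \overline{\F}$ is closed under $p$\ts{th} roots, and iterating shows $L$ contains every $p^k$\ts{th} root in $\overline{\F}$ of every element of $\F$, whence $K \subseteq L$. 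More canonically, if $L/\F$ is any perfect extension, sending $\alpha^{1/p^k}$ (for $\alpha \in \F$) to the unique $p^k$\ts{th} root of $\alpha$ in $L$ --- well defined since Frobenius is injective on $L$ --- embeds $K$ into $L$ over $\F$, which is the universal property that makes ``smallest'' precise.

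An alternative that avoids choosing an algebraic closure is to form the directed colimit of the system $\F \to \F \to \F \to \cdots$ with each arrow equal to $\phi$. Because every transition map is an injective homomorphism of fields, the colimit is an integral domain in which each nonzero element --- coming from some finite stage, where it is a unit --- is invertible, hence a field, and the induced Frobenius on it is visibly bijective; identifying $\F$ with the first term gives $\F^{p^{-\infty}}$.

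The one step needing any care is closure of $K$ under addition, which rests entirely on the identity $(\alpha + \beta)^{p^k} = \alpha^{p^k} + \beta^{p^k}$; every other verification is purely formal. Correspondingly there is no real obstacle here, consistent with the statement being flagged as a basic fact.
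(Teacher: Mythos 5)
Your proof is correct, and in fact it contains the paper's argument as its second half: the paper's sketch of adjoining symbols $(\alpha,n)$ for $\alpha^{p^{-n}}$ and quotienting by the relation identifying $(\alpha,n)$ with $(\alpha^p,n+1)$ is precisely your directed colimit of $\F \xrightarrow{\phi} \F \xrightarrow{\phi} \cdots$ along the Frobenius, and your observations (injective transition maps, hence a field; visibly bijective induced Frobenius) are exactly the verifications the paper defers to Bourbaki. Your primary construction, $K = \set{\alpha \in \overline{\F} : \alpha^{p^k} \in \F \text{ for some } k}$ inside a fixed algebraic closure, is a genuinely different and more concrete realization; it makes perfectness and minimality transparent (uniqueness of $p$\ts{th} roots in characteristic $p$, since $x^p - \alpha = (x-\alpha^{1/p})^p$, is what makes the minimality and the embedding into an arbitrary perfect extension well defined, and is worth saying explicitly). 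The trade-off is that it presupposes the existence of $\overline{\F}$, which for an arbitrary field is itself a Zorn's-lemma construction, whereas the colimit route is self-contained and canonical. Both are complete proofs of the fact; the only step with any content in either is the identity $(\alpha+\beta)^{p^k} = \alpha^{p^k} + \beta^{p^k}$, as you note.
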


Informally, one can prove this by adjoining ``enough'' $p$\ts{th} roots to the field $\F$.
That is, for each $\alpha \in \F$, we introduce a countable collection of new field elements denoted by $(\alpha, n)$ for $n \in \naturals$, where the element $(\alpha,n)$ is meant to represent $\alpha^{p^{-n}}$.
We then take a quotient by a suitable equivalence relation; for example, if $\alpha^p = \beta$, then we regard $(\alpha,n)$ and $(\beta,n+1)$ as equivalent for all $n \in \naturals$.
One must then verify that the resulting object is in fact a field and is (up to isomorphism) the perfect closure of $\F$.
More formally, the perfect closure can be constructed as the \emph{direct limit} of a particular \emph{direct system} of fields.
We refer the reader to \textcite[Chapter 5, \textsection 1]{Bour90} for the details of this construction.

Examples of perfect fields of positive characteristic include all finite fields and all algebraically closed fields of positive characteristic.
A non-example is given by $\F_{p^m}(\vec{x})$, the field of rational functions in $n$ variables with coefficients in $\F_{p^m}$, where $\F_{p^m}$ is the finite field of size ${p^m}$.
The field $\F_{p^m}(\vec{x})$ fails to be perfect due to the fact that $x_1^{1/p} \notin \F_{p^m}(\vec{x})$, so $x_1$ is not in the image of the map $\alpha \mapsto \alpha^p$.

For more details on perfect fields, we refer the reader to any text on field theory, e.g., \textcite[Chapter 3]{Roman06}.

\section{$p$\ts{th} Roots of Algebraic Computation} \label{sec:pth root}

Suppose $\F$ is a field of characteristic $p > 0$ and $\Phi$ is a circuit which computes $f(\vec{x})^p$ for a polynomial $f(\vec{x})$.
If we want to obtain a circuit which computes $f(\vec{x})$, then \autoref{thm:Kaltofen factoring} does not suffice.
In this section, we will describe a simple transformation of $\Phi$ which yields a circuit computing $f(\vec{x})$.
This is the main technical step that will allow us to obtain hardness-randomness tradeoffs over fields of low characteristic.

In general, this transformation will incur an exponential blow-up in the size of $\Phi$.
If the original circuit computes a polynomial on $n$ variables, then the new circuit we build will be larger in size by a factor of about $p^{2n}$.
In particular, if our input is a circuit on a constant number of variables, then we only increase the size of the circuit by a constant factor.
The fact that this transformation is efficient in the constant-variate regime is exactly the reason we need to use hardness of constant-variate families of polynomials as opposed to a family of hard multilinear polynomials.

Before describing the construction for circuits on an arbitrary number of variables, we first examine the case of univariate polynomials.
Let $\F$ be a field of characteristic $p > 0$ and let $f(x) \in \F[x]$ be a univariate polynomial.
We start by grouping the monomials of $f$ by their degree modulo $p$, which allows us to write
\[
	f(x) = \sum_{i=0}^{p-1} \tilde{f}_i(x) x^i,
\]
where each $\tilde{f}_i(x)$ is a univariate polynomial in $x$ which is only supported on $p$\ts{th} powers of $x$.
That is, the term $\tilde{f}_i(x) x^i$ corresponds exactly to the monomials in $f(x)$ whose degree in $x$ is congruent to $i$ modulo $p$.
Recall that over a field of characteristic $p > 0$, we have the identity $(a + b)^p = a^p + b^p$.
Since $\tilde{f}_i(x)$ is a sum of $p$\ts{th} powers of $x$, we can write
\[
	\tilde{f}_i(x) = \sum_{j=0}^{d_i} \alpha_{i,j} x^{j p} = \del{\sum_{j=0}^{d_i} \alpha_{i,j}^{1/p} x^j}^p.
\]
This expresses $\tilde{f}_i(x)$ as a $p$\ts{th} power of the polynomial $f_i(x) \coloneqq \sum_{j=0}^{d_i} \alpha_{i,j}^{1/p} x^j$.
In general, $f_i$ may not be well-defined over $\F$, as the coefficients $\alpha_{i,j}^{1/p}$ may not exist in $\F$.
However, $\alpha_{i,j}^{1/p} \in \F^{p^{-\infty}}$, the perfect closure of $\F$, so $f_i$ is well-defined over $\F^{p^{-\infty}}$.

With this, we can write 
\[
	f(x) = \sum_{i=0}^{p-1} f_i(x)^p x^i.
\]
We refer to such an expression as the mod-$p$ decomposition of $f$.
This motivates the following definition, which generalizes this decomposition to the case of multivariate polynomials.

\begin{definition}
	Let $f(\vec{x}) \in \F[\vec{x}]$.
	The \emph{mod-$p$ decomposition of $f(\vec{x})$} is the collection of polynomials $\set{f_{\vec{a}}(\vec{x}) : \vec{a} \in \llb p \rrb^n}$ such that
	\[
		f(\vec{x}) = \sum_{\vec{a} \in \llb p \rrb^n} f_{\vec{a}}(\vec{x})^p \vec{x}^{\vec{a}}. \qedhere
	\]
\end{definition}

Over a perfect field $\F$ of characteristic $p > 0$, the existence of the mod-$p$ decomposition follows from the fact that any polynomial of the form $\sum_{\vec{a}}\alpha_{\vec{a}} \vec{x}^{p \cdot \vec{a}}$ has a $p$\ts{th} root, given by $\sum_{\vec{a}} \alpha_{\vec{a}}^{1/p}\vec{x}^{\vec{a}}$.
Here, we use the fact that $\F$ is perfect to guarantee the constants $\alpha_{\vec{a}}^{1/p}$ exist in $\F$.
Uniqueness of the decomposition follows from the fact that the monomials $\set{\vec{x}^{\vec{a}} : \vec{a} \in \naturals^n}$ form a basis for $\F[\vec{x}]$.
We record this observation as a lemma.

\begin{lemma} \label{lem:mod-p unique}
	Let $\F$ be a field of characteristic $p > 0$ and let $f,g \in \F[\vec{x}]$.
	Let $\set{f_{\vec{a}} : \vec{a} \in \llb p \rrb^n}$ and $\set{g_{\vec{a}} : \vec{a} \in \llb p \rrb^n}$ be the mod-$p$ decompositions of $f$ and $g$, respectively.
	Then $f = g$ if and only if $f_{\vec{a}} = g_{\vec{a}}$ for all $\vec{a} \in \llb p \rrb^n$.
\end{lemma}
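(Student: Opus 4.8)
The plan is to reduce the statement to the single observation that the mod-$p$ decomposition merely sorts the monomials of a polynomial according to the residue class of their exponent vector modulo $p$, and that this sorting loses no information. One implication is immediate: if $f_{\vec a} = g_{\vec a}$ for every $\vec a \in \llb p \rrb^n$, then by the defining property of the decomposition, $f = \sum_{\vec a \in \llb p \rrb^n} f_{\vec a}^p \vec{x}^{\vec a} = \sum_{\vec a \in \llb p \rrb^n} g_{\vec a}^p \vec{x}^{\vec a} = g$.

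For the converse, the key structural point is that in characteristic $p$ the Frobenius identity gives $\bigl(\sum_{\vec b} c_{\vec b} \vec{x}^{\vec b}\bigr)^p = \sum_{\vec b} c_{\vec b}^p \vec{x}^{p \vec b}$, so $f_{\vec a}^p$ is supported only on monomials of the form $\vec{x}^{p \vec b}$, and hence $f_{\vec a}^p \vec{x}^{\vec a}$ is supported only on monomials whose exponent vector is congruent to $\vec a$ coordinatewise modulo $p$. Since $\vec a$ ranges over $\llb p \rrb^n$, distinct choices of $\vec a$ produce monomial supports that are pairwise disjoint --- this is exactly the ``monomials form a basis'' remark preceding the lemma. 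Collecting terms by residue class in $f = \sum_{\vec a} f_{\vec a}^p \vec{x}^{\vec a}$ therefore shows that $f_{\vec a}^p \vec{x}^{\vec a}$ is precisely the part of $f$ consisting of those monomials whose exponent vector reduces to $\vec a$; the same holds for $g$. Consequently, if $f = g$, then $f_{\vec a}^p \vec{x}^{\vec a} = g_{\vec a}^p \vec{x}^{\vec a}$ for each $\vec a$, and cancelling $\vec{x}^{\vec a}$ gives $f_{\vec a}^p = g_{\vec a}^p$.

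It remains to pass from $f_{\vec a}^p = g_{\vec a}^p$ to $f_{\vec a} = g_{\vec a}$. This is the only mildly delicate point: the polynomials $f_{\vec a}$ a priori have coefficients in $\F^{p^{-\infty}}$ rather than $\F$, but this is harmless, since $\F^{p^{-\infty}}[\vec{x}]$ is an integral domain of characteristic $p$, so $(f_{\vec a} - g_{\vec a})^p = f_{\vec a}^p - g_{\vec a}^p = 0$ forces $f_{\vec a} = g_{\vec a}$. I do not expect any genuine obstacle here; once the disjointness of the monomial supports of the terms $f_{\vec a}^p \vec{x}^{\vec a}$ is noted, the remainder is routine bookkeeping.
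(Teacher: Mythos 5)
Your proposal is correct and follows essentially the same route as the paper, which disposes of uniqueness in one line by invoking the fact that the monomials $\vec{x}^{\vec{a}}$ form a basis of the polynomial ring (equivalently, your observation that the terms $f_{\vec{a}}^p \vec{x}^{\vec{a}}$ have pairwise disjoint monomial supports sorted by residue class mod $p$). Your explicit treatment of the final step --- passing from $f_{\vec{a}}^p = g_{\vec{a}}^p$ to $f_{\vec{a}} = g_{\vec{a}}$ via injectivity of the Frobenius on an integral domain --- is a detail the paper leaves implicit, and it is handled correctly.
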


The utility of the mod-$p$ decomposition becomes apparent when $f(\vec{x})$ is itself a $p$\ts{th} power.
In this case, $f$ itself is a sum of $p$\ts{th} powers of monomials in the variables $x_1,\ldots,x_n$, so we have $f(\vec{x}) = f_{\vec{0}}(\vec{x})^p$.
Given a circuit $\Phi$ which computes $f$, suppose we could transform $\Phi$ into a new circuit $\Psi$ which computes the mod-$p$ decomposition of $f$.
Then to compute $f(\vec{x})^{1/p}$, we simply construct the circuit $\Psi$ and set $f_{\vec{0}}(\vec{x}) = f(\vec{x})^{1/p}$ to be the output.

Before continuing on, we record a straightforward lemma about how the mod-$p$ decomposition behaves with respect to addition and multiplication.

\begin{lemma} \label{lem:mod-p decomp}
	Let $\F$ be a perfect field of characteristic $p > 0$.
	Let $f, g \in \F[\vec{x}]$, and let $\set{f_{\vec{a}} : \vec{a} \in \llb p \rrb^n}$ and $\set{g_{\vec{a}} : \vec{a} \in \llb p \rrb^n}$ be the mod-$p$ decompositions of $f$ and $g$, respectively.
	Let $h = \alpha f + \beta g$ and $q = \gamma f g$ for $\alpha,\beta,\gamma \in \F$. 
	Let $\set{h_{\vec{a}} : \vec{a} \in \llb p \rrb^n}$ and $\set{q_{\vec{a}} : \vec{a} \in \llb p \rrb^n}$ be the mod-$p$ decompositions of $h$ and $q$.
	Then for all $\vec{a} \in \llb p \rrb^n$, we have
	\[
		h_{\vec{a}} = \alpha^{1/p} f_{\vec{a}} + \beta^{1/p} g_{\vec{a}}
	\]
	and
	\[
		q_{\vec{a}} = \gamma^{1/p} \sum_{\substack{\vec{b}, \vec{c} \in \llb p \rrb^n \\ \vec{b} + \vec{c} \equiv \vec{a}\bmod{p}}} f_{\vec{b}} g_{\vec{c}} \vec{x}^{\frac{\vec{b} + \vec{c} - \vec{a}}{p}},
	\]
	where the sum and congruence $\vec{b} + \vec{c} \equiv \vec{a} \bmod{p}$ are performed component-wise.
\end{lemma}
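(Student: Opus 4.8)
The plan is to reduce everything to the uniqueness of the mod-$p$ decomposition (\autoref{lem:mod-p unique}): it suffices to exhibit, for each of $h$ and $q$, a collection $\set{r_{\vec{a}} : \vec{a} \in \llb p \rrb^n}$ of polynomials in $\F[\vec{x}]$ with $h = \sum_{\vec{a}} r_{\vec{a}}^p \vec{x}^{\vec{a}}$ (resp.\ the analogue for $q$), and then uniqueness forces $r_{\vec{a}} = h_{\vec{a}}$ (resp.\ $r_{\vec{a}} = q_{\vec{a}}$). Throughout I would use the Frobenius identity $(u+v)^p = u^p + v^p$ over $\F$ and its finite-sum version $\bigl(\sum_i u_i\bigr)^p = \sum_i u_i^p$, together with the hypothesis that $\F$ is perfect, which guarantees that $\alpha^{1/p}, \beta^{1/p}, \gamma^{1/p}$ are well-defined elements of $\F$ satisfying $(\alpha^{1/p})^p = \alpha$, and likewise for $\beta$ and $\gamma$; in particular the candidate polynomials $r_{\vec{a}}$ will lie in $\F[\vec{x}]$.

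For the additive statement, I would substitute the decompositions of $f$ and $g$ into $h = \alpha f + \beta g$ and collect the coefficient of each monomial $\vec{x}^{\vec{a}}$ with $\vec{a} \in \llb p \rrb^n$, obtaining $h = \sum_{\vec{a}} (\alpha f_{\vec{a}}^p + \beta g_{\vec{a}}^p) \vec{x}^{\vec{a}}$. Writing $\alpha = (\alpha^{1/p})^p$, $\beta = (\beta^{1/p})^p$ and applying Frobenius turns the coefficient into $(\alpha^{1/p} f_{\vec{a}} + \beta^{1/p} g_{\vec{a}})^p$, so $h = \sum_{\vec{a}} (\alpha^{1/p} f_{\vec{a}} + \beta^{1/p} g_{\vec{a}})^p \vec{x}^{\vec{a}}$ and \autoref{lem:mod-p unique} yields the claimed formula for $h_{\vec{a}}$. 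For the multiplicative statement, I would expand $q = \gamma f g = \gamma \sum_{\vec{b}, \vec{c} \in \llb p \rrb^n} f_{\vec{b}}^p g_{\vec{c}}^p \vec{x}^{\vec{b}+\vec{c}}$ and regroup the pairs $(\vec{b},\vec{c})$ by the component-wise residue $\vec{a} \coloneqq (\vec{b}+\vec{c}) \bmod p \in \llb p \rrb^n$. The bookkeeping point is that $\vec{k} \coloneqq (\vec{b}+\vec{c}-\vec{a})/p$ is a vector of nonnegative integers — each coordinate of $\vec{b}+\vec{c}$ lies in $\set{0,\ldots,2p-2}$, so each coordinate of $\vec{k}$ is $0$ or $1$ — whence $\vec{x}^{\vec{b}+\vec{c}} = \vec{x}^{\vec{a}} (\vec{x}^{\vec{k}})^p$. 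Pulling $\vec{x}^{\vec{a}}$ outside the inner sum and applying Frobenius (with $\gamma = (\gamma^{1/p})^p$) rewrites $q$ as $\sum_{\vec{a}} \bigl(\gamma^{1/p} \sum_{\vec{b}+\vec{c}\equiv\vec{a}} f_{\vec{b}} g_{\vec{c}} \vec{x}^{\vec{k}}\bigr)^p \vec{x}^{\vec{a}}$, and uniqueness identifies the parenthesized polynomial with $q_{\vec{a}}$, which is exactly the stated expression.

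The only delicate step, and the one I would be most careful about, is the regrouping in the multiplicative case: one must verify that as $(\vec{b},\vec{c})$ ranges over $\llb p \rrb^n \times \llb p \rrb^n$ the residue $\vec{a} = (\vec{b}+\vec{c})\bmod p$ and the quotient $\vec{k} = (\vec{b}+\vec{c}-\vec{a})/p$ are well-defined with $\vec{k} \in \naturals^n$, so that the exponent $\vec{x}^{(\vec{b}+\vec{c}-\vec{a})/p}$ appearing in the statement is a genuine monomial, and that this regrouping is a bijective reindexing of the double sum. Everything else is a routine application of the characteristic-$p$ identities and \autoref{lem:mod-p unique}; no induction or external input is needed.
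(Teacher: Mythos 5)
Your proposal is correct and follows the same route as the paper's proof: substitute the mod-$p$ decompositions into $h$ and $q$, use the Frobenius identity to absorb $\alpha^{1/p}$, $\beta^{1/p}$, $\gamma^{1/p}$ and the monomial $\vec{x}^{(\vec{b}+\vec{c}-\vec{a})/p}$ inside a $p$\ts{th} power, and invoke \autoref{lem:mod-p unique}. The bookkeeping you flag about $(\vec{b}+\vec{c}-\vec{a})/p$ being a zero-one integer vector is exactly the point the paper also verifies (in \autoref{lem:mod-p decomp circuit}), so nothing is missing.
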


\begin{proof}
	By expanding the equality $h = \alpha f + \beta g$ in the mod-$p$ decomposition and using the fact that $(a + b)^p = a^p + b^p$, we obtain
	\begin{align*}
		\sum_{\vec{a}\in\llb p \rrb^n} h_{\vec{a}}(\vec{x})^p \vec{x}^{\vec{a}} &= \alpha \sum_{\vec{a} \in \llb p \rrb^n} f_{\vec{a}}(\vec{x})^p \vec{x}^{\vec{a}} + \beta \sum_{\vec{a} \in \llb p \rrb^n} g_{\vec{a}}(\vec{x})^p \vec{x}^{\vec{a}} \\
		&= \sum_{\vec{a} \in \llb p \rrb^n} (\alpha^{1/p} f_{\vec{a}}(\vec{x}) + \beta^{1/p} g_{\vec{a}}(\vec{x}))^p \vec{x}^{\vec{a}}.
	\end{align*}
	\autoref{lem:mod-p unique} implies that $h_{\vec{a}} = \alpha^{1/p} f_{\vec{a}} + \beta^{1/p} g_{\vec{a}}$ as claimed.

	For $q(\vec{x})$, we again expand the equality $q = \gamma f g$ in the mod-$p$ decomposition to obtain
	\begin{align*}
		\sum_{\vec{a} \in \llb p \rrb^n} q_{\vec{a}}(\vec{x})^p \vec{x}^{\vec{a}} &= \gamma \del{\sum_{\vec{a} \in \llb p \rrb^n} f_{\vec{a}}(\vec{x})^p \vec{x}^{\vec{a}}} \del{\sum_{\vec{a} \in \llb p \rrb^n} g_{\vec{a}}(\vec{x})^p \vec{x}^{\vec{a}}} \\
		&= \gamma \sum_{\vec{b}, \vec{c} \in \llb p \rrb^n} f_{\vec{b}}(\vec{x})^p g_{\vec{c}}(\vec{x})^p \vec{x}^{\vec{b} + \vec{c}} \\
		&= \sum_{\vec{a} \in \llb p \rrb^n} \del{\gamma^{1/p} \sum_{\substack{\vec{b}, \vec{c} \in \llb p \rrb^n \\ \vec{b} + \vec{c} \equiv \vec{a} \bmod{p}}} f_{\vec{b}}(\vec{x}) g_{\vec{c}}(\vec{x}) \vec{x}^{\frac{\vec{b} + \vec{c} - \vec{a}}{p}} }^p \vec{x}^{\vec{a}}.
	\end{align*}
	Once more, \autoref{lem:mod-p unique} implies that 
	\[
		q_{\vec{a}} = \gamma^{1/p} \sum_{\substack{\vec{b}, \vec{c} \in \llb p \rrb^n \\ \vec{b} + \vec{c} \equiv \vec{a}\bmod{p}}} f_{\vec{b}} g_{\vec{c}} \vec{x}^{\frac{\vec{b} + \vec{c} - \vec{a}}{p}}
	\]
	as claimed.
\end{proof}

\subsection{Circuits} 

We start by implementing the strategy outlined above in the case of algebraic circuits.
Throughout this and subsequent sections, $\Phi$ and $\Psi$ will denote algebraic circuits, formulae, or branching programs, and $v$, $u$, and $w$ will denote gates in these circuits.
We will frequently refer to the polynomial computed at a gate $v$, which we denote by $\hat{v}$.
For $\vec{a} \in \llb p \rrb^n$, we write $\hat{v}_{\vec{a}}$ for the part of the mod-$p$ decomposition of $\hat{v}$ indexed by $\vec{a}$.

\begin{lemma}\label{lem:mod-p decomp circuit}
	Let $\F$ be a field of characteristic $p > 0$.
	Let $\Phi$ be an algebraic circuit of size $s$ which computes a polynomial $f(\vec{x}) \in \F[\vec{x}]$ and let $\set{f_{\vec{a}} : \vec{a} \in \llb p \rrb^n}$ be the mod-$p$ decomposition of $f$.
	Then there is a circuit $\Psi$ of size $3 s p^{2n} + 2^n$ which simultaneously computes $\set{f_{\vec{a}} : \vec{a} \in \llb p \rrb^n}$ over $\F^{p^{-\infty}}$, the perfect closure of $\F$.
\end{lemma}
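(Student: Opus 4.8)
The plan is to process the circuit $\Phi$ gate by gate in topological order, maintaining at every gate $v$ a bundle of $p^n$ gates in the new circuit $\Psi$ that compute the mod-$p$ decomposition $\{\hat v_{\vec a} : \vec a \in \llb p \rrb^n\}$. At the output gate this bundle contains the polynomials $\{f_{\vec a}\}$ we want. The base cases and the inductive step are both governed by \autoref{lem:mod-p decomp}: a variable or constant has a trivial mod-$p$ decomposition, a sum gate $v = u + w$ has $\hat v_{\vec a} = \hat u_{\vec a} + \hat w_{\vec a}$ (using the Frobenius identity $(a+b)^p = a^p + b^p$), and a product gate $v = u \cdot w$ has $\hat v_{\vec a} = \sum_{\vec b + \vec c \equiv \vec a} \hat u_{\vec b}\, \hat w_{\vec c}\, \vec x^{(\vec b + \vec c - \vec a)/p}$, where each shift exponent $(\vec b + \vec c - \vec a)/p$ lies in $\bits^n$. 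Multiplication by field constants is handled by raising the constant to the $1/p$ power, which is where perfectness of $\F$ (equivalently, working over $\F^{p^{-\infty}}$) is used to ensure $\Psi$'s constants are legal.

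First I would set up the base cases: for a gate $v$ computing a variable $x_k$, its mod-$p$ decomposition has $\hat v_{\vec e_k} = 1$ (where $\vec e_k$ is the $k$-th standard basis vector, assuming $p \ge 2$) and $\hat v_{\vec a} = 0$ for $\vec a \ne \vec e_k$; for a gate computing a constant $\alpha$, we have $\hat v_{\vec 0} = \alpha^{1/p}$ and all other components zero. These constant-$0$ and constant-$1$ gates are shared across the whole construction, contributing the additive $2^n$ term (one for each monomial $\vec x^{(\vec b+\vec c-\vec a)/p}$ that can appear as a shift, i.e.\ one per element of $\bits^n$) — or more simply an $O(2^n)$ overhead for the pool of monomials $\vec x^{\vec e}$, $\vec e \in \bits^n$, built once by multiplying together subsets of the variables. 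Then I process internal gates inductively. For a sum gate, \autoref{lem:mod-p decomp} says the output bundle is obtained by $p^n$ addition gates, each combining two constant multiples; this costs $O(p^n)$ new gates per original gate. For a product gate, each of the $p^n$ output components is a sum over pairs $(\vec b, \vec c)$ with $\vec b + \vec c \equiv \vec a \pmod p$, of which there are at most $p^n$; each summand is a product of $\hat u_{\vec b}$, $\hat w_{\vec c}$, and a monomial $\vec x^{\vec e}$ from the shared pool, so $O(1)$ gates per summand. Thus a product gate costs $O(p^{2n})$ new gates.

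Summing over the at most $s$ gates of $\Phi$, the total size of $\Psi$ is $O(s \cdot p^{2n}) + O(2^n)$, and a careful accounting of the constants (at most $3$ gates per summand: two multiplications and the accumulating addition) yields the stated bound $3sp^{2n} + 2^n$. Correctness is an induction on gate depth: the base cases are immediate, and the inductive step is exactly the content of \autoref{lem:mod-p decomp} together with the uniqueness of the mod-$p$ decomposition (\autoref{lem:mod-p unique}), which guarantees that the polynomials we compute at each bundle really are *the* mod-$p$ components of $\hat v$. All constants appearing in $\Psi$ are $p$\ts{th} roots of constants from $\Phi$, hence lie in $\F^{p^{-\infty}}$, so $\Psi$ is a legitimate circuit over the perfect closure.

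The main obstacle — really the only subtle point — is bookkeeping the product rule correctly: one must check that for $\vec b, \vec c, \vec a \in \llb p \rrb^n$ with $\vec b + \vec c \equiv \vec a \pmod p$, the vector $(\vec b + \vec c - \vec a)/p$ is a genuine $0/1$ vector (each coordinate of $\vec b + \vec c$ is in $\{0, \ldots, 2p-2\}$ and is $\equiv a_i \pmod p$, so $(\vec b+\vec c-\vec a)_i/p \in \{0,1\}$), so that the shift monomials all come from the size-$2^n$ shared pool and no further recursion is triggered. Everything else is a routine gate count; the exponential factor $p^{2n}$ is inherent to the product rule and collapses to a constant precisely when $n = O(1)$, which is the regime we care about.
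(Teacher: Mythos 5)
Your proposal is correct and follows essentially the same route as the paper's proof: split each gate into a bundle of $p^n$ gates computing the components of its mod-$p$ decomposition, wire sums and products according to \autoref{lem:mod-p decomp}, share a pool of $2^n$ monomial gates $\vec{x}^{\vec{e}}$ for the product-rule shifts, and verify that each shift exponent $(\vec{b}+\vec{c}-\vec{a})/p$ is a $0/1$ vector. The gate accounting ($3$ extra gates per $(\vec{b},\vec{c})$ pair at a product gate, plus the additive $2^n$ for the shared monomial pool) matches the paper's bound exactly.
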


\begin{proof}
	To construct the desired circuit $\Psi$, we will split each gate $v$ of $\Phi$ into pieces $\set{(v,\vec{a}) : \vec{a} \in \llb p \rrb^n}$ and wire $\Psi$ so that $(v,\vec{a})$ computes $\hat{v}_{\vec{a}}$.
	As $\Phi$ computes $f(\vec{x})$, this implies that $\Psi$ will contain gates computing $f_{\vec{a}}(\vec{x})$ for all $\vec{a} \in \llb p \rrb^n$.
	To wire each gate $(v,\vec{a})$ in $\Psi$, we consider the type of the gate $v$ in $\Phi$.
	\begin{itemize}
		\item
			First, suppose $v$ is an input gate in $\Phi$ labeled by a constant $\alpha \in \F$.
			In this case, we set $(v,\vec{0}) = \alpha^{1/p}$ and $(v,\vec{a}) = 0$ for $\vec{a} \neq \vec{0}$.
			By definition, $\F^{p^{-\infty}}$ contains $\alpha^{1/p}$, so this is valid over $\F^{p^{-\infty}}$.

			It follows from the definition of $\hat{v}_{\vec{a}}$ that $(v,\vec{a})$ correctly computes $\hat{v}_{\vec{a}}$.

		\item
			If $v$ is an input gate labeled by the variable $x_i$, let $\vec{e}_i$ denote the vector with a $1$ in the $i$\ts{th} slot and zero elsewhere.
			We set $(v,\vec{e}_i) = 1$ and $(v,\vec{a}) = 0$ for $\vec{a} \neq \vec{e}_i$.

			Again, it follows immediately from the definition of $\hat{v}_{\vec{a}}$ that $(v,\vec{a})$ correctly computes $\hat{v}_{\vec{a}}$.

		\item
			Suppose now that $v$ is an addition gate in $\Phi$ with children $u$ and $w$ with incoming edges labeled $\alpha_u$ and $\alpha_w$.
			For each $\vec{a} \in \llb p \rrb^p$, we set $(v,\vec{a}) = \alpha_u^{1/p} \cdot (u,\vec{a}) + \alpha_w^{1/p} \cdot (w,\vec{a})$.

			By induction, $(u,\vec{a})$ and $(w,\vec{a})$ correctly compute $\hat{u}_{\vec{a}}$ and $\hat{w}_{\vec{a}}$, respectively.
			\autoref{lem:mod-p decomp} then implies that $(v,\vec{a})$ correctly computes $\hat{v}_{\vec{a}}$.

		\item
			Finally, we consider the case where $v$ is a multiplication gate in $\Phi$ with children $u$ and $w$ with incoming edges labeled $\alpha_u$ and $\alpha_w$.
			For $\vec{a} \in \llb p \rrb^n$, we set
			\[
				(v,\vec{a}) = \alpha_u^{1/p} \alpha_w^{1/p} \sum_{\substack{\vec{b}, \vec{c} \in \llb p \rrb^n \\ \vec{b} + \vec{c} \equiv \vec{a} \pmod{p}}} (u,\vec{b}) \cdot (w,\vec{c}) \cdot \vec{x}^{\frac{\vec{b} + \vec{c} - \vec{a}}{p}},
			\]
			where vector addition and congruence of vectors is performed coordinate-wise.
			Note that since $\vec{b} + \vec{c} \equiv \vec{a} \bmod{p}$, the vector $\frac{1}{p}(\vec{b} + \vec{c} - \vec{a})$ is in fact an integer vector.
			Moreover, since $\vec{b} + \vec{c} \in \set{0,\ldots,2(p-1)}^n$, it follows that $\vec{b} + \vec{c} - \vec{a} \in \set{0,p}^n$, so $\frac{1}{p}(\vec{b} + \vec{c} - \vec{a}) \in \bits^n$ is a zero-one vector.
			
			Via induction, $(u,\vec{b})$ and $(w,\vec{c})$ correctly compute $\hat{u}_{\vec{b}}$ and $\hat{w}_{\vec{c}}$, respectively.
			From this and \autoref{lem:mod-p decomp}, it follows that $(v,\vec{a})$ correctly computes $\hat{v}_{\vec{a}}$.
	\end{itemize}

	As previously remarked, since $\Phi$ computes $f(\vec{x})$, for every $\vec{a} \in \llb p \rrb^n$ there is a gate in $\Psi$ which computes $f_{\vec{a}}(\vec{x})$, so $\Psi$ correctly computes all components of the mod-$p$ decomposition of $f$.
	It remains to bound the size of $\Psi$.

	For every gate in $\Phi$, we construct $p^n$ gates of the form $(v,\vec{a})$ in $\Psi$.
	In the case that $v$ is a multiplication gate, we need extra intermediate hardware to compute the summation $(v,\vec{a}) = \sum_{\vec{b} + \vec{c} \equiv \vec{a} \pmod{p}} (u,\vec{b}) \cdot (w,\vec{c}) \cdot \vec{x}^{\frac{\vec{b} + \vec{c} - \vec{a}}{p}}$.
	This can be done with $p^n$ summation gates and $2 p^n$ multiplication gates.
	We also need $2^n$ gates to compute the products $\vec{x}^{\vec{e}}$ for $\vec{e} \in \bits^n$.
	Since $\Psi$ is a circuit, we only need to pay for these gates once, as we can reuse them for all the multiplication computations.
	In total, each multiplication gate incurs an extra cost of $3 p^n$ gates.

	This implies each gate in $\Phi$ gives rise to at most $3 p^{2n}$ gates in $\Psi$.
	As there are $s$ gates in $\Phi$, there are at most $3 s p^{2n} + 2^n$ gates in $\Psi$.
\end{proof}

\begin{remark}
	In the above construction, rather than using the perfect closure, the resulting circuit can be defined over an extension $\mathbb{K} \supseteq \F$ of finite degree.
	This can be done by adjoining to $\F$ all $p$\ts{th} roots of constants which appear in $\Phi$.
	The degree of this extension may be exponential in $s$ in the worst case.
\end{remark}

We can now use the construction of \autoref{lem:mod-p decomp circuit} to take $p$\ts{th} roots of circuits which compute a $p$\ts{th} power over a field of characteristic $p$.

\begin{corollary}\label{cor:pth root of circuit}
	Let $\F$ be a field of characteristic $p > 0$.
	Let $\Phi$ be an algebraic circuit of size $s$ which computes a polynomial $f(\vec{x})^p \in \F[\vec{x}]$.
	Then there is a circuit $\Psi$ of size $3 s p^{2n} + 2^n$ which computes $f(\vec{x})$ over $\F^{p^{-\infty}}$, the perfect closure of $\F$.
\end{corollary}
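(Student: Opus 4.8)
The plan is to observe that the corollary is essentially immediate from \autoref{lem:mod-p decomp circuit} once we identify which component of the mod-$p$ decomposition of $f(\vec{x})^p$ recovers $f(\vec{x})$ itself. So the first step is to compute the mod-$p$ decomposition of a $p$\ts{th} power. Writing $f(\vec{x}) = \sum_{\vec{b}} \alpha_{\vec{b}} \vec{x}^{\vec{b}}$ and repeatedly applying the identity $(a+b)^p = a^p + b^p$, which holds over any field of characteristic $p$, we get $f(\vec{x})^p = \sum_{\vec{b}} \alpha_{\vec{b}}^p \vec{x}^{p\vec{b}}$. In particular, every monomial appearing in $f(\vec{x})^p$ has exponent vector divisible by $p$. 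Comparing this with the defining equation $f(\vec{x})^p = \sum_{\vec{a} \in \llb p \rrb^n} (f^p)_{\vec{a}}(\vec{x})^p \vec{x}^{\vec{a}}$ of the mod-$p$ decomposition and invoking the uniqueness guaranteed by \autoref{lem:mod-p unique}, the only nonzero component is the one indexed by $\vec{a} = \vec{0}$, and it satisfies $(f^p)_{\vec{0}}(\vec{x})^p = f(\vec{x})^p$. Since the Frobenius endomorphism is injective, $p$\ts{th} roots in $\F^{p^{-\infty}}[\vec{x}]$ are unique, so $(f^p)_{\vec{0}} = f$.

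With this in hand, I would apply \autoref{lem:mod-p decomp circuit} to $\Phi$, obtaining a circuit $\Psi$ of size at most $3 s p^{2n} + 2^n$ over $\F^{p^{-\infty}}$ that simultaneously computes all components $\set{(f^p)_{\vec{a}} : \vec{a} \in \llb p \rrb^n}$ of the mod-$p$ decomposition of $f(\vec{x})^p$. By the previous paragraph, the gate of $\Psi$ indexed by the pair consisting of the output gate of $\Phi$ and the vector $\vec{0}$ computes exactly $f(\vec{x})$; designating this gate as the output of $\Psi$ proves the claim, and this relabeling incurs no increase in size.

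There is essentially no obstacle here beyond the bookkeeping already carried out in \autoref{lem:mod-p decomp circuit}. The one point that requires a moment's care is the multivariate ``freshman's dream'' identity together with the injectivity of Frobenius, which together pin down that the $\vec{0}$-component of the mod-$p$ decomposition of $f^p$ is $f$ itself, rather than merely some polynomial sharing its $p$\ts{th} power.
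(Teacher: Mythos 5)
Your proposal is correct and follows the same route as the paper: apply \autoref{lem:mod-p decomp circuit} to $\Phi$ and output the gate computing the $\vec{0}$-component of the mod-$p$ decomposition, which equals $f$. The paper states the identity $f = (f^p)_{\vec{0}}$ as immediate from the definition, whereas you justify it explicitly via the Frobenius identity and injectivity of $p$\ts{th} roots; this is a harmless elaboration of the same argument.
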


\begin{proof}
	By \autoref{lem:mod-p decomp circuit}, there is a circuit $\Psi$ of the claimed size which computes $(f(\vec{x})^p)_{\vec{0}}$.
	It follows from the definition of the mod-$p$ decomposition that $f(\vec{x}) = (f(\vec{x})^p)_{\vec{0}}$, so $\Psi$ computes $f(\vec{x})$ as desired.
\end{proof}

\begin{remark}
	If $n = O(\log_p s)$, then \autoref{cor:pth root of circuit} shows that if $f^p$ is computable in size $s$, then $f$ is computable in size $s^{O(1)}$.
	While the log-variate regime may appear as a somewhat artificial intermediary between the constant-variate and full multivariate regimes, it is a meaningful setting to study due to various corollaries of the bootstrapping results.
	For example, \textcite{FGS18} recently studied the problem of designing explicit hitting sets for log-variate depth-three diagonal circuits.
\end{remark}

\subsection{Formulae}

It is natural to ask if the mod-$p$ decomposition allows us to efficiently take $p$\ts{th} roots in other models of algebraic computation.
We address this question first in the case of algebraic formulae, and subsequently for algebraic branching programs.
For the reader who is solely interested in the application of the mod-$p$ decomposition and \autoref{cor:pth root of circuit} to hardness-randomness tradeoffs, it is safe to skip ahead to \autoref{sec:KI}.
Before continuing on, we make an important remark regarding formulae and branching programs for univariate polynomials.

\begin{remark}
	In the univariate regime, our results (as stated) for formulae and branching programs are not as meaningful as the result for circuits.
	A formula or ABP of size $s$ can only compute a polynomial of degree $d \le s$, so any formula or ABP computing a degree $d$ univariate polynomial must have size at least $d$.
	For univariate polynomials, Horner's rule supplies a matching $O(d)$ upper bound.
	Thus, the $p$\ts{th} root of a univariate polynomial which has complexity $s$ can be computed by a device of size $s/p$, which is much stronger than what we will obtain in \autoref{cor:pth root of formula} and \autoref{cor:pth root of abp}.

	However, if one modifies the model of formulae (or branching programs) to allow leaves (or edges) labeled by a power of a variable $x_i^j$, then the trivial $\Omega(d)$ lower bound no longer holds.
	Our techniques can be adapted to this stronger model with little modification, where the upper bounds we obtain are less trivial.
\end{remark}

We now show how one can compute the mod-$p$ decomposition of an algebraic formula.
We essentially do this by applying the transformation of \autoref{lem:mod-p decomp circuit} and arguing that we can convert the resulting circuit into a formula without increasing its size too much.
To do this, we need some additional bookkeeping to ensure that the underlying graph of the resulting computation is a tree.
We borrow this style of bookkeeping from \textcite{Raz13}, who used it for improved homogenization and multilinearization of formulae.
Alternatively, one can use the fact that formulae of size $s$ can be rebalanced to have depth $O(\log s)$ and then analyze the increase in depth incurred in the proof of \autoref{lem:mod-p decomp circuit}.

\begin{lemma} \label{lem:mod-p decomp formula}
	Let $\F$ be a field of characteristic $p > 0$.
	Let $\Phi$ be an algebraic formula of size $s$ and product depth $d$ which computes a polynomial $f(\vec{x}) \in \F[\vec{x}]$ and let $\set{f_{\vec{a}} : \vec{a} \in \llb p \rrb^n}$ be the mod-$p$ decomposition of $f$.
	Then there is a formula $\Psi$ of size $3 s n p^{n (d+3)}$ and product depth $d + \ceil{\log n}$ which simultaneously computes $\set{f_{\vec{a}} : \vec{a} \in \llb p \rrb^n}$ over $\F^{p^{-\infty}}$, the perfect closure of $\F$.
\end{lemma}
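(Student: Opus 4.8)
The plan is to run the gate-by-gate transformation of \autoref{lem:mod-p decomp circuit} on $\Phi$, but since a formula may not reuse a subcomputation, each gate of the circuit produced there must be unfolded into its own subformula, and the resulting duplication must be controlled by an inductive accounting in the style of \textcite{Raz13}. The subtle point is the product depth. A naive unfolding recomputes, at every multiplication gate, a degree-$\le n$ monomial $\vec{x}^{\vec{q}}$ of product depth $\ceil{\log n}$ and multiplies it on top of the product of the two recursively built subformulae; this costs an extra level (or, with a balanced product of all the factors, up to $\ceil{\log n}$ levels) of product depth \emph{per} multiplication gate, and hence product depth of order $2d + \ceil{\log n}$ or $d \ceil{\log n}$ rather than $d + \ceil{\log n}$. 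To avoid this I would carry the monomials appearing in the mod-$p$ decomposition as pending multiplicative superscripts and push them down into the leaves, splitting them at each multiplication gate so that they never leave $\bits^n$.

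Concretely, for every gate $v$ of $\Phi$, every $\vec{a} \in \llb p \rrb^n$, and every $\vec{s} \in \bits^n$, I would construct, by induction on the structure of $\Phi$ and following the case analysis of \autoref{lem:mod-p decomp circuit}, a formula $\Psi_{v,\vec{a},\vec{s}}$ computing $\hat{v}_{\vec{a}} \cdot \vec{x}^{\vec{s}}$. A leaf labeled by $x_i$ gives $\Psi_{v,\vec{e}_i,\vec{s}} = \vec{x}^{\vec{s}}$, a balanced product of at most $n$ variables, and $\Psi_{v,\vec{a},\vec{s}} = 0$ otherwise; a constant leaf is handled the same way. An addition gate passes $\vec{s}$ unchanged to both children and combines the two results using the $p$\ts{th} roots of its edge labels, exactly as in the circuit construction. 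At a multiplication gate $v = u \times w$, the formula $\Psi_{v,\vec{a},\vec{s}}$ is the sum over $\vec{b} \in \llb p \rrb^n$ of the products $\Psi_{u,\vec{b},\vec{t}_1} \cdot \Psi_{w,\vec{c},\vec{t}_2}$, scaled by the appropriate $p$\ts{th} roots of the edge labels, where $\vec{c} \equiv \vec{a} - \vec{b} \bmod p$ and $\vec{t}_1, \vec{t}_2 \in \bits^n$ are chosen so that $\vec{t}_1 + \vec{t}_2 = \vec{s} + \frac{1}{p}(\vec{b} + \vec{c} - \vec{a})$; such a split into two $\set{0,1}$-vectors exists because $\vec{s}$ and $\frac{1}{p}(\vec{b}+\vec{c}-\vec{a})$ both lie in $\bits^n$, so their sum has entries in $\set{0,1,2}$. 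Correctness in each case follows from \autoref{lem:mod-p unique} and \autoref{lem:mod-p decomp}, as in \autoref{lem:mod-p decomp circuit}. Letting $o$ be the output gate of $\Phi$, the output formula $\Psi$ is the union of the $\Psi_{o,\vec{a},\vec{0}}$, one output for each $\vec{a} \in \llb p \rrb^n$.

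The product depth is then a clean structural induction: $\Psi_{v,\vec{a},\vec{s}}$ has product depth at most $d_v + \ceil{\log n}$, where $d_v$ is the product depth of the subformula of $\Phi$ rooted at $v$. At a leaf the only product gates come from the balanced monomial $\vec{x}^{\vec{s}}$ of degree at most $n$, contributing $\ceil{\log n}$; an addition gate contributes no product gates; a multiplication gate adds exactly one product gate (inside each summand, combining $\Psi_{u,\vec{b},\vec{t}_1}$ and $\Psi_{w,\vec{c},\vec{t}_2}$), which matches the increment of $d_v$. For the size bound I would track the potential $T_v \coloneqq \sum_{\vec{a} \in \llb p \rrb^n} \max_{\vec{s} \in \bits^n} |\Psi_{v,\vec{a},\vec{s}}|$ and prove $T_v \le 3 n s_v p^{n(d_v + 3)}$, where $s_v$ is the size of the subformula rooted at $v$. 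A leaf contributes $T_v = O(n + p^n)$, which is within the bound; an addition gate satisfies $T_v \le T_u + T_w + 2 p^n$, so the bound is inherited because $d_v = \max(d_u,d_w)$ and $s_v = s_u + s_w + 1$; and a multiplication gate satisfies $T_v \le p^n (T_u + T_w) + 2 p^{2n}$ — the factor $p^n$ coming from the $p^n$ summands, the contribution of $w$ staying $T_w$ rather than $p^n T_w$ because $\vec{b} \mapsto \vec{c}$ is a bijection of $\llb p \rrb^n$ for fixed $\vec{a}$ — and since $d_u, d_w \le d_v - 1$, the extra $p^n$ is absorbed by the increase of $d_v$ in the exponent. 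Evaluating the potential at $o$ gives size at most $3 n s p^{n(d+3)}$.

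The main obstacle is exactly this simultaneous control of size and product depth at multiplication gates: the straightforward formula analogue of \autoref{lem:mod-p decomp circuit} gets either the size or the depth wrong, and the superscript-and-split device is what lets a single additive $\ceil{\log n}$ pay for all the monomial bookkeeping at once. Pinning the leading constant in the size bound down to exactly $3 s n p^{n(d+3)}$ requires care with the $O(1)$ overheads at each gate (one product gate per summand, the summation gates, and the edge labels carrying the $p$\ts{th}-root constants), which I would carry out in the final write-up but suppress here. As a consistency check, the alternative route sketched above — first rebalance $\Phi$ to product depth $O(\log s)$, then apply a depth-sensitive version of \autoref{lem:mod-p decomp circuit} — yields a bound of the same shape, though with a worse dependence on $s$.
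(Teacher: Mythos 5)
Your proposal is correct and shares the paper's skeleton — a gate-by-gate simulation of the mod-$p$ decomposition in which every gate $v$ of $\Phi$ is split into many copies, with the duplication controlled by Raz-style bookkeeping — but it differs in one genuine technical device. The paper materializes the correction monomial $\vec{x}^{(\vec{b}+\vec{c}-\vec{a})/p}$ as a fresh subformula of size $O(n)$ at every multiplication gate and indexes copies by path-type functions $T \in N_v$, bounding $|N_v| \le p^{n(d+1)}$ directly; you instead thread a pending exponent $\vec{s} \in \bits^n$ through the recursion, split it between the two children at each product gate (using that $\vec{s} + \tfrac{1}{p}(\vec{b}+\vec{c}-\vec{a})$ has entries in $\set{0,1,2}$), and emit the accumulated monomial only at the leaves, with the size tracked by the potential $T_v = \sum_{\vec{a}}\max_{\vec{s}}|\Psi_{v,\vec{a},\vec{s}}|$. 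Your recurrences at the three gate types are correct and yield the same $3snp^{n(d+3)}$ bound, and the bijection $\vec{b}\mapsto\vec{c}$ observation is exactly what makes the multiplication case close. What your route buys is a genuinely cleaner product-depth argument: each multiplication gate of $\Phi$ contributes exactly one binary product gate per root-to-leaf path and the $\ceil{\log n}$ is paid once at the leaves, giving $d+\ceil{\log n}$ with fan-in-two gates; the paper's construction, read literally with its two binary multiplication gates per summand, stacks an extra product gate at every level and its stated depth bound really requires treating the triple product as a single gate. The cost of your approach is a slightly heavier indexing (by $(\vec{a},\vec{s})$ rather than $\vec{a}$ alone) and the need to verify the splitting invariant, both of which you handle. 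The only loose end is the exact leading constant, which you correctly flag as routine.
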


\begin{proof}
	As in \autoref{lem:mod-p decomp circuit}, we will split each gate $v$ of $\Phi$ into pieces which compute components of the mod-$p$ decomposition of $\hat{v}$.
	However, we will need a much larger number of copies of $v$ to ensure that the resulting circuit $\Psi$ is in fact a formula.

	We first set up some notation, borrowing heavily from \textcite{Raz13}.
	For a gate $v$ in $\Phi$, let $\pathrm(v)$ denote the set of all vertices on the path from $v$ to the root of $\Phi$, including $v$ itself.
	Let $N_v$ denote the set of all functions $T : \pathrm(v) \to \llb p \rrb^n$ such that for all $u, w \in \pathrm(v)$ where $u$ is a sum gate with child $w$, we have $T(u) = T(w)$.
	Informally, the map $T$ encodes the progression of types in the mod-$p$ decomposition seen as the computation progresses through the formula.

	For each gate $v$ in $\Phi$, we create a collection of gates $\set{(v,\vec{a},T) : \vec{a}\in\llb p \rrb^n, T \in N_v, T(v) = \vec{a}}$.
	We will wire the gates of $\Psi$ so that $(v,\vec{a},T)$ computes $\hat{v}_{\vec{a}}$.
	As before, to wire the gates of $\Psi$ correctly, we consider what type of gate $v$ is in $\Phi$.
	The construction only differs meaningfully from that of \autoref{lem:mod-p decomp circuit} in the case of multiplication gates.
	\begin{itemize}
		\item
			If $v$ is an input gate in $\Phi$ labeled by $\alpha \in \F$, then we set $(v,\vec{0},T) = \alpha^{1/p}$ and $(v,\vec{a},T) = 0$ for $\vec{a} \neq \vec{0}$.
			As $\alpha^{1/p} \in \F^{p^{-\infty}}$, this produces a valid circuit over $\F^{p^{-\infty}}$.

			It is immediate from the definition that $(v,\vec{a},T)$ correctly computes $\hat{v}_{\vec{a}}$.

		\item
			If $v$ is an input gate labeled by the variable $x_i$, let $\vec{e}_i$ denote the vector with a $1$ in the $i$\ts{th} slot and zero elsewhere.
			We set $(v,\vec{e}_i,T) = 1$ and $(v,\vec{a},T) = 0$ for $\vec{a} \neq \vec{e}_i$.

			Once more, it is an immediate consequence of the definition that $(v,\vec{a},T)$ correctly computes $\hat{v}_{\vec{a}}$.

		\item
			Suppose now that $v$ is an addition gate with children $u$ and $w$ with incoming edges labeled $\alpha_u$ and $\alpha_w$.
			For each $\vec{a} \in \set{0,\ldots,p-1}^n$ and $T \in N_v$, we set $(v,\vec{a},T) = \alpha_u^{1/p} \cdot (u,\vec{a}, T_u) + \alpha_w^{1/p} \cdot (w, \vec{a}, T_w)$, where $T_u \in N_u$ and $T_w \in N_w$ extend $T$ and satisfy $T(v) = T_u(u) = T_w(w)$.

			By induction, $(u,\vec{a},T_u)$ and $(w,\vec{a},T_w)$ correctly compute $\hat{u}_{\vec{a}}$ and $\hat{w}_{\vec{a}}$, respectively.
			By \autoref{lem:mod-p decomp}, it follows that $(v,\vec{a},T)$ correctly computes $\hat{v}_{\vec{a}}$.

		\item
			Finally, consider the case when $v$ is a multiplication gate with children $u$ and $w$ with incoming edges labeled $\alpha_u$ and $\alpha_w$.
			We set
			\[
				(v,\vec{a},T) = \alpha_u^{1/p} \alpha_w^{1/p} \sum_{\vec{b} + \vec{c} \equiv \vec{a}\pmod{p}} (u, \vec{b}, T_{u,\vec{b}}) \cdot (w, \vec{c}, T_{w,\vec{c}}) \cdot \vec{x}^{\frac{\vec{b} + \vec{c} - \vec{a}}{p}},
			\]
			where $T_{u,\vec{b}}$ (respectively $T_{w,\vec{c}}$) extends $T$ and satisfies $T_{u,\vec{b}}(u) = \vec{b}$ (respectively $T_{w,\vec{c}}(w) = \vec{c}$).

			By induction, $(u,\vec{b},T_{u,\vec{b}})$ and $(w,\vec{c},T_{w,\vec{c}})$ compute $\hat{u}_{\vec{b}}$ and $\hat{w}_{\vec{c}}$, respectively.
			\autoref{lem:mod-p decomp} implies that $(v,\vec{a},T)$ correctly computes $\hat{v}_{\vec{a}}$.
	\end{itemize}

	By construction, $\Psi$ correctly computes $\set{f_{\vec{a}} : \vec{a} \in \llb p \rrb^n}$.
	It remains to bound the size and product depth of $\Psi$ and show that $\Psi$ is indeed a formula.

	Each gate $v$ in $\Phi$ yields $p^n |N_v|$ gates of the form $(v,\vec{a},T)$ in $\Psi$.
	If $v$ is a multiplication gate with children $u$ and $w$, we need to implement the sum over the children $(u,\vec{b},T_u)$ and $(w,\vec{c},T_w)$.
	For a given $\vec{e} \in \bits^n$, we can compute $\vec{x}^{\vec{e}}$ using a subformula of size at most $n$.
	To compute $(v,\vec{a},T)$, we need $p^n$ summation gates and $2 p^n$ multiplication gates in addition to the gates computing $(u,\vec{b},T_u)$, $(w,\vec{c},T_w)$, and $\vec{x}^{\vec{e}}$.
	This implies that we can compute $(v,\vec{a},T)$ using at most $3 n p^n$ extra gates.
	Thus, for every gate $v$ in $\Phi$, we create at most $3n p^{2n} |N_v|$ gates in $\Psi$.

	To bound the size of $N_v$, note that a function $T \in N_v$ can only change values along $\pathrm(v)$ at multiplication gates.
	Since there are at most $d$ multiplication gates along $\pathrm(v)$, we can specify $T$ by a $(d+1)$-tuple of elements of $\llb p \rrb^n$, corresponding to the values taken by $T$ between successive multiplication gates.
	This implies $|N_v| \le p^{n(d+1)}$.
	Thus $\Psi$ contains at most $3 s n p^{n(d+3)}$ gates.

	It follows from the definition of $\Psi$ that the product depth of $\Psi$ is $d+\ceil{\log n}$, as the number of product gates on any path from a leaf to the root increases by at most an additive $\ceil{\log n}$.
	This arises from the need to implement a product of the form $\vec{x}^{\vec{e}}$ at gates of $\Psi$ which correspond to multiplication gates in $\Phi$.
	As we need to compute a product of this form at most once along every path from the root to a leaf, we only incur an additive $\ceil{\log n}$ increase in product depth as opposed to a multiplicative increase.

	To see that $\Psi$ is a formula, consider the edges leaving the gate $(u,\vec{a},T)$.
	Let $v$ denote the parent of $u$ in $\Psi$.
	If $v$ is an addition gate, then only $(v,\vec{a},T_v)$ receives an edge from $(u,\vec{a},T)$ where $T_v \in N_v$ agrees with $T$ on $\pathrm(v)$.
	If $v$ is a multiplication gate, then only $(v,T(v),T_v)$ receives an edge from $(u,\vec{a},T)$ where $T_v \in N_v$ agrees with $T$ on $\pathrm(v)$.
	In both cases, the fan-out of the gate $u$ is $1$, so $\Psi$ is in fact a formula.
\end{proof}

As with circuits, we can use \autoref{lem:mod-p decomp formula} to compute $p$\ts{th} roots of formulae which compute a $p$\ts{th} power over a field of characteristic $p > 0$.

\begin{corollary} \label{cor:pth root of formula}
	Let $\F$ be a field of characteristic $p > 0$.
	Let $\Phi$ be an algebraic formula of size $s$ and product depth $d$ which computes a polynomial $f(\vec{x})^p \in \F[\vec{x}]$.
	Then there is a formula $\Psi$ of size $3 s n p^{n(d + 3)}$ and product depth $d + \ceil{\log n}$ which computes $f(\vec{x})$ over $\F^{p^{-\infty}}$, the perfect closure of $\F$.
\end{corollary}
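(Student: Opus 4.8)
The plan is to derive this corollary from \autoref{lem:mod-p decomp formula} in exactly the same way that \autoref{cor:pth root of circuit} follows from \autoref{lem:mod-p decomp circuit}. First I would apply \autoref{lem:mod-p decomp formula} to the formula $\Phi$, which has size $s$ and product depth $d$ and computes $f(\vec{x})^p$. This produces a formula $\Psi$ over $\F^{p^{-\infty}}$ of size $3 s n p^{n(d+3)}$ and product depth $d + \ceil{\log n}$ that simultaneously computes all the polynomials $\set{(f^p)_{\vec{a}} : \vec{a} \in \llb p \rrb^n}$ of the mod-$p$ decomposition of $f^p$; in particular, $\Psi$ contains a gate computing $(f^p)_{\vec{0}}$.

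Next I would observe that $f(\vec{x}) = (f(\vec{x})^p)_{\vec{0}}$. Since $f^p$ is a $p$\ts{th} power, it equals a sum of $p$\ts{th} powers of monomials $\vec{x}^{\vec{b}}$, so every monomial appearing in $f^p$ has exponent vector divisible by $p$; hence in the decomposition $f^p = \sum_{\vec{a} \in \llb p \rrb^n} (f^p)_{\vec{a}}^p \vec{x}^{\vec{a}}$ the only term indexed by a nonzero $\vec{a}$ vanishes, and comparing with $f^p = (f)^p \cdot \vec{x}^{\vec{0}}$ together with uniqueness of the mod-$p$ decomposition (\autoref{lem:mod-p unique}) forces $(f^p)_{\vec{0}} = f$. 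Therefore designating the gate of $\Psi$ that computes $(f^p)_{\vec{0}}$ as the output gate yields a formula computing $f(\vec{x})$ over $\F^{p^{-\infty}}$ of the stated size and product depth.

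There is essentially no obstacle here beyond a correct invocation of \autoref{lem:mod-p decomp formula}: the entire content of the corollary lives in that lemma, and the only thing to verify is the identity $f = (f^p)_{\vec{0}}$, which is the same observation used to prove \autoref{cor:pth root of circuit}. The one minor point worth checking is that relabeling an internal gate of $\Psi$ as the output preserves both the formula structure and the size and product-depth bounds; but since $\Psi$ is a formula (every gate has fan-out $1$) and the subformula rooted at $(f^p)_{\vec{0}}$ has size and product depth no larger than those of $\Psi$, both bounds are inherited directly from \autoref{lem:mod-p decomp formula}.
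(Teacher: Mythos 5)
Your proposal is correct and follows exactly the paper's argument: the paper proves this corollary by the same reduction to \autoref{lem:mod-p decomp formula} and the identity $f = (f^p)_{\vec{0}}$, mirroring the proof of \autoref{cor:pth root of circuit}. Your additional check that restricting to the subformula rooted at the gate for $(f^p)_{\vec{0}}$ preserves the size and product-depth bounds is a reasonable extra detail that the paper leaves implicit.
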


\begin{proof}
	Analogous to the proof of \autoref{cor:pth root of circuit}.
\end{proof}

\subsection{Algebraic Branching Programs}

We now consider the task of taking $p$\ts{th} roots of algebraic branching programs.
We consider the model of branching programs where edges may only be labeled by a constant $\alpha \in \F$ or a multiple of a variable $\alpha x_i$.
Some authors allow the edges of a branching program to be labeled by an affine form $\ell(\vec{x}) = \alpha_0 + \sum_{i=1}^n \alpha_i x_i$.
Such a branching program can be converted to one whose edges are labeled by field constants or multiples of a variable.
This transformation increases the number of vertices by a factor of $O(n)$, which is small compared to the increase in size we will incur by taking a $p$\ts{th} root.
We begin by computing the mod-$p$ decomposition of an algebraic branching program.

\begin{lemma} \label{lem:mod-p decomp abp}
	Let $\F$ be a field of characteristic $p > 0$.
	Let $\Phi$ be an algebraic branching program on $s$ vertices with edges labeled by variables or field constants which computes a polynomial $f(\vec{x}) \in \F[\vec{x}]$ and let $\set{f_{\vec{a}} : \vec{a} \in \llb p \rrb^n}$ be the mod-$p$ decomposition of $f$.
	Then there is an algebraic branching program $\Psi$ on $s p^{n}$ vertices which simultaneously computes $\set{f_{\vec{a}} : \vec{a} \in \llb p \rrb^n}$ over $\F^{p^{-\infty}}$, the perfect closure of $\F$.
\end{lemma}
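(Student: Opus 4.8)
The plan is to follow the template of \autoref{lem:mod-p decomp circuit}, splitting each vertex $v$ of $\Phi$ into $p^n$ copies $\set{(v,\vec{a}) : \vec{a} \in \llb p \rrb^n}$ and wiring $\Psi$ so that the polynomial computed at $(v,\vec{a})$ is exactly $\hat{v}_{\vec{a}}$; taking $(\text{source of }\Phi,\vec{0})$ as the source of $\Psi$ and $\set{(\text{sink of }\Phi,\vec{a}) : \vec{a} \in \llb p \rrb^n}$ as its outputs then exhibits the mod-$p$ decomposition of $f$. Recall that in an algebraic branching program the polynomial $\hat{v}$ at a vertex $v$ is $\sum_{(u,v)} \hat{u}\,\ell_{u,v}$, where the sum ranges over edges into $v$ and each label $\ell_{u,v}$ is either a field constant $\alpha$ or a variable-multiple $\alpha x_i$. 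Since $(a+b)^p = a^p + b^p$, the mod-$p$ decomposition is additive, so $\hat{v}_{\vec{a}}$ is the sum over incoming edges of $(\hat{u}\,\ell_{u,v})_{\vec{a}}$; it therefore suffices to understand $(\ell\hat{u})_{\vec{a}}$ for the two possible shapes of $\ell$ and, for each incoming edge, to add edges into $(v,\vec{a})$ realizing the corresponding contribution.

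For a constant label $\ell = \alpha$, \autoref{lem:mod-p decomp} gives $(\alpha\hat{u})_{\vec{a}} = \alpha^{1/p}\hat{u}_{\vec{a}}$, so I add an edge $(u,\vec{a}) \to (v,\vec{a})$ with label $\alpha^{1/p}$ for every $\vec{a}$. For a label $\ell = \alpha x_i$, applying \autoref{lem:mod-p decomp} with the variable $x_i$ as one of the factors (its mod-$p$ decomposition being supported only at the index $\vec{e}_i$) and separating according to whether the $i$\ts{th} coordinate of $\vec{a}$ wraps around gives
\[
	(\alpha x_i \hat{u})_{\vec{a}} = \begin{cases}
		\alpha^{1/p}\,\hat{u}_{\vec{a} - \vec{e}_i}, & a_i \ge 1, \\
		\alpha^{1/p}\, x_i\, \hat{u}_{\vec{a} + (p-1)\vec{e}_i}, & a_i = 0,
	\end{cases}
\]
where the leftover factor of $x_i$ in the second case is the ``carry'' produced when the exponent of $x_i$ passes $p-1$ and a copy of $x_i^p$ must be pulled out of the monomial. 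Accordingly, for each $\vec{a}$ I add the edge $(u,\vec{a}-\vec{e}_i) \to (v,\vec{a})$ with label $\alpha^{1/p}$ when $a_i \ge 1$, and the edge $(u,\vec{a}+(p-1)\vec{e}_i) \to (v,\vec{a})$ with label $\alpha^{1/p} x_i$ when $a_i = 0$. Each new label is again a constant or a variable-multiple, now over $\F^{p^{-\infty}}$ --- perfectness of the closure is exactly what lets us extract $\alpha^{1/p}$ --- and every new edge runs from one layer of $\Phi$ to the next, so $\Psi$ is a legitimate algebraic branching program in the same model, with the same number of layers as $\Phi$.

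Correctness then follows by induction along the layers of $\Phi$: at the source, $(\text{source},\vec{0})$ computes the empty product $1$ and each $(\text{source},\vec{a})$ with $\vec{a} \ne \vec{0}$ computes $0$, which is precisely the mod-$p$ decomposition of the constant $1$; and for any other vertex $v$, collecting the contributions of the incoming edges described above and invoking \autoref{lem:mod-p decomp} together with \autoref{lem:mod-p unique} shows that $(v,\vec{a})$ computes $\hat{v}_{\vec{a}}$. In particular $(\text{sink},\vec{a})$ computes $f_{\vec{a}}$ for all $\vec{a}$. Since $\Phi$ has $s$ vertices and each gives rise to $p^n$ copies, $\Psi$ has $s p^n$ vertices, as claimed. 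I expect the only genuinely delicate point to be the bookkeeping for variable-labeled edges --- in particular getting the wraparound case $a_i = 0$ and its leftover $x_i$ right --- since, unlike in the circuit and formula constructions, no auxiliary gates are needed here: the branching program sums over incoming edges for free, and because we multiply by only a single variable per step the monomial $\vec{x}^{(\vec{b}+\vec{c}-\vec{a})/p}$ from the general product rule degenerates to $1$ or $x_i$.
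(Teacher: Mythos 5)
Your construction is exactly the paper's: each vertex is split into $p^n$ copies indexed by $\llb p \rrb^n$, constant-labeled edges keep their index with label $\alpha^{1/p}$, and $\alpha x_i$-labeled edges shift the index by $\vec{e}_i$ modulo $p$, picking up a leftover factor of $x_i$ precisely in the wraparound case --- which matches the paper's case split $\vec{a}_i < p-1$ versus $\vec{a}_i = p-1$, just written from the target vertex's side. The correctness argument via the additive case of \autoref{lem:mod-p decomp} and the $s p^n$ vertex count are likewise identical, so the proposal is correct and takes essentially the same approach as the paper.
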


\begin{proof}
	For each node $v$ in $\Phi$, we create a collection of nodes $\set{(v,\vec{a}) : \vec{a} \in \llb p \rrb^n}$ in $\Psi$.
	We will wire the nodes of $\Psi$ so that $(v,\vec{a})$ computes $\hat{v}_{\vec{a}}$.

	For a pair of vertices $u$ and $v$, let $\ell(u,v)$ denote the label of the edge between $u$ and $v$.
	Let $N^{\mathrm{in}}(v)$ denote the set of vertices $w$ such that the edge $(w,v)$ is present in $\Phi$.

	Let $u$ and $v$ be two nodes in $\Phi$ and suppose there is an edge from $u$ to $v$ in $\Phi$.
	We consider two cases, depending on whether this edge is labeled by a constant $\alpha \in \F$ or a multiple of a variable $\alpha x_i$.
	\begin{itemize}
		\item
			Suppose the edge from $u$ to $v$ is labeled by $\alpha \in \F$.
			For all $\vec{a} \in \llb p \rrb^n$, we add an edge between $(u,\vec{a})$ and $(v, \vec{a})$ labeled by $\alpha^{1/p}$.
			Since $\alpha^{1/p} \in \F^{p^{-\infty}}$, this construction is valid over the perfect closure $\F^{p^{-\infty}}$ of $\F$.

		\item
			Suppose the edge from $u$ to $v$ is labeled by $\alpha x_i$, where $\alpha \in \F$.
			Denote by $\vec{e}_i$ the vector which has a $1$ in the $i$\ts{th} slot and zeroes elsewhere.
			For all $\vec{a} \in \llb p \rrb^n$, we add an edge between $(u,\vec{a})$ and $(v, \vec{a} + \vec{e}_i)$, where the addition $\vec{a} + \vec{e}_i$ is performed modulo $p$.
			If $\vec{a}_i < p-1$, we label this edge with $\alpha^{1/p}$.
			If $\vec{a}_i = p-1$, we label this edge with $\alpha^{1/p} x_i$.
			Again, $\alpha^{1/p} \in \F^{p^{-\infty}}$ by definition, so this construction is valid.
	\end{itemize}

	To see that this construction is correct, let $v$ be a node in $\Phi$.
	By the definition of an algebraic branching program, we have
	\[
		\hat{v} = \sum_{u \in N^{\mathrm{in}}(v)} \ell(u,v) \cdot \hat{u}.
	\]
	Repeatedly applying the addition case of \autoref{lem:mod-p decomp} yields, for each $\vec{a} \in \llb p \rrb^n$,
	\[
		\hat{v}_{\vec{a}} = \sum_{u \in N^{\mathrm{in}}(v)} (\ell(u,v) \cdot \hat{u})_{\vec{a}}.
	\]
	If $\ell(u, v) = \alpha \in \F$, then we have $(\ell(u, v) \cdot \hat{u})_{\vec{a}} = \alpha^{1/p} \hat{u}_{\vec{a}}$.
	If $\ell(u, v) = \alpha x_i$, then if $\vec{a}_i > 0$, we have $(\ell(u, v) \cdot \hat{u})_{\vec{a}} = \alpha^{1/p} \hat{u}_{\vec{a} - \vec{e}_i}$.
	Otherwise, $\vec{a}_i = 0$, so $(\ell(u,v) \cdot \hat{u})_{\vec{a}} = \alpha^{1/p} \hat{u}_{\vec{a} - \vec{e}_i} x_i$, where the subtraction $\vec{a} - \vec{e}_i$ is done modulo $p$.

	By induction, $(u,\vec{a})$ correctly computes $\hat{u}_{\vec{a}}$.
	From our construction of $\Psi$, if $(u,v)$ is an edge in $\Phi$, then $(v,\vec{a})$ has an incoming edge which computes $(\ell(u, v) \cdot \hat{u})_{\vec{a}}$.
	This implies that $(v,\vec{a})$ computes the polynomial $\sum_{u \in N^{\mathrm{in}}(v)} (\ell(u,v) \cdot \hat{u})_{\vec{a}} = \hat{v}_{\vec{a}}$, which is what we want.

	Thus, $\Psi$ simultaneously computes $\set{f_{\vec{a}} : \vec{a} \in \llb p \rrb^n}$.
	Every node in $\Phi$ corresponds to $p^n$ nodes in $\Psi$.
	Unlike the cases of circuits and formulae, we do not need extra hardware to implement intermediate calculations, so $\Psi$ consists of $s p^n$ nodes as claimed.
\end{proof}

Again, as in the case of circuits and formulae, this immediately yields a way to compute $p$\ts{th} roots of algebraic branching programs which compute a $p$\ts{th} power over a field of characteristic $p > 0$.

\begin{corollary} \label{cor:pth root of abp}
	Let $\F$ be a field of characteristic $p > 0$.
	Let $\Phi$ be an algebraic branching program on $s$ vertices with edges labeled by variables or field constants which computes a polynomial $f(\vec{x})^p \in \F[\vec{x}]$.
	Then there is an algebraic branching program $\Psi$ on $s p^{n}$ vertices which computes $f(\vec{x})$ over $\F^{p^{-\infty}}$, the perfect closure of $\F$.
\end{corollary}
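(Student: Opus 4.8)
The plan is to follow the proofs of \autoref{cor:pth root of circuit} and \autoref{cor:pth root of formula} essentially word for word, using \autoref{lem:mod-p decomp abp} in place of \autoref{lem:mod-p decomp circuit}. Since $\Phi$ is an algebraic branching program on $s$ vertices computing $f(\vec x)^p$, \autoref{lem:mod-p decomp abp} produces an algebraic branching program $\Psi$ on $s p^{n}$ vertices which simultaneously computes every component of the mod-$p$ decomposition of $f(\vec x)^p$ over the perfect closure $\F^{p^{-\infty}}$; concretely, for the sink $t$ of $\Phi$, the node $(t,\vec a)$ of $\Psi$ computes $\hat t_{\vec a} = (f^p)_{\vec a}$.

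Next I would invoke the observation already used in the proof of \autoref{cor:pth root of circuit}: because $f(\vec x)^p$ is a sum of $p$\ts{th} powers of monomials, every monomial of $f(\vec x)^p$ has all its exponents divisible by $p$, so the unique mod-$p$ decomposition of $f(\vec x)^p$ has $(f^p)_{\vec a} = 0$ for $\vec a \neq \vec 0$ and, by \autoref{lem:mod-p unique}, $(f^p)_{\vec 0} = f$. Hence the node $(t,\vec 0)$ of $\Psi$ computes exactly $f(\vec x)$; declaring this node to be the sink of $\Psi$ yields the desired branching program. The vertex count remains $s p^{n}$, since --- unlike the circuit and formula constructions, where extra gadgets are needed to realize the products $\vec x^{\vec e}$ --- the construction of \autoref{lem:mod-p decomp abp} introduces no auxiliary vertices.

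I do not expect any genuine obstacle: the whole content sits in \autoref{lem:mod-p decomp abp}, and this corollary is a one-line specialization of it. The only thing requiring a modicum of care is the notational distinction between the mod-$p$ decomposition of the $p$\ts{th} power $f^p$ --- whose $\vec 0$-component is $f$ --- and the mod-$p$ decomposition of $f$ itself, but this is precisely the point already recorded in the discussion preceding \autoref{lem:mod-p decomp}.
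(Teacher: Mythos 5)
Your proposal is correct and matches the paper's argument exactly: the paper also proves this corollary by applying \autoref{lem:mod-p decomp abp} and observing that $f = (f^p)_{\vec 0}$, just as in \autoref{cor:pth root of circuit}. The remark about designating $(t,\vec 0)$ as the sink and about no auxiliary vertices being needed is a faithful (and slightly more explicit) rendering of the same one-line specialization.
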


\begin{proof}
	Analogous to the proof of \autoref{cor:pth root of circuit}.
\end{proof}

\section{Extending the Kabanets-Impagliazzo Generator} \label{sec:KI}

With our main technical tool in hand, we move on to our first application.
The hitting set generator of \textcite{KI04} was the first to provide hardness-randomness tradeoffs for polynomial identity testing over fields of characteristic zero.
Over fields of characteristic $p > 0$, \citeauthor{KI04} obtain hardness-randomness tradeoffs under non-standard hardness assumptions.
Namely, they require an explicit family of polynomials $\set{f_n : n \in \naturals}$ such that $f_n^{p^k}$ is hard to compute for $1 \le p^k \le 2^{O(n)}$, though they do not state their results in this way.
Rather, they use the assumption of a family of polynomials which are hard to compute as functions, which implies hardness of $p$\ts{th} powers over finite fields.

It is more common in algebraic complexity to prove lower bounds on the task of computing polynomials as syntactic objects.
Over infinite fields, this is equivalent to computing a polynomial as a function.
However, the two notions differ over finite fields.
For example, the polynomial $x^2-x$ is non-zero as a polynomial over $\F_2$, but computes the zero function over $\F_2$.
It is interesting to note that examples of functional lower bounds over finite fields are known.
The works of \textcite{GK98,GR00,KS17} prove lower bounds against constant-depth circuits over finite fields which functionally compute an explicit polynomial.

In this section, we will extend the Kabanets-Impagliazzo generator to all perfect fields of characteristic $p > 0$ under syntactic hardness assumptions for a single family of polynomials.
The perfect fields of characteristic $p$ include all finite fields and all algebraically closed fields of positive characteristic.
To do this, we need a stronger (but still syntactic) hardness assumption.
In their work, \citeauthor{KI04} use the existence of an explicit family of hard multilinear polynomials to derandomize polynomial identity testing.
Here, we need lower bounds against an explicit family of constant-variate polynomials of arbitrarily high degree.
Such an assumption appears to be stronger than the assumption of a hard family of multilinear polynomials.
We discuss the relationship between these hypotheses in more detail in \autoref{sec:kronecker}.

\subsection{The Kabanets-Impagliazzo Generator}

We first describe the construction of the Kabanets-Impagliazzo generator.

\begin{construction}[\cite{KI04}] \label{cons:KI generator}
	Let $n$ and $m$ be integers satisfying $n < 2^m$.
	Let $g \in \F[\vec{x}]$ be a polynomial on $m$ variables.
	Let $S_1,\ldots,S_n \subseteq [\ell]$ be a Nisan-Wigderson design as in \autoref{lem:NW design}.
	The Kabanets-Impagliazzo generator $\mathcal{G}_{\mathrm{KI},g}(\vec{z}) : \F^\ell \to \F^n$ is the polynomial map given by
	\[
		\mathcal{G}_{\mathrm{KI},g}(\vec{z}) \coloneqq (g(\vec{z}|_{S_1}),\ldots,g(\vec{z}|_{S_n})),
	\]
	where $\vec{z}|_{S_i}$ denotes the restriction of $\vec{z}$ to the variables with indices in $S_i$.
\end{construction}

We now quote the main lemma used by \citeauthor{KI04} in the analysis of their generator.

\begin{lemma}[\cite{KI04}] \label{lem:KI lemma}
	Let $\F$ be any field and $n, m \in \naturals$ such that $n < 2^m$.
	Let $f \in \F[y_1,\ldots,y_n]$ and $g \in \F[x_1,\ldots,x_m]$ be non-zero polynomials of degree $d_f$ and $d_g$, respectively.
	Let $f(\vec{y})$ be computable by an algebraic circuit of size $s$.
	Let $S \subseteq \F$ be any set of size at least $d_f d_g + 1$ and let $\ell = O(m^2 / \log n)$ be as in \autoref{lem:NW design}.
	Let $\mathcal{G}_{\mathrm{KI},g}$ be as in \autoref{cons:KI generator}.

	Suppose that $f(\mathcal{G}_{\mathrm{KI},g}(\vec{\alpha})) = 0$ for all $\vec{\alpha} \in S^{\ell}$.
	Then there is an algebraic circuit $\Phi$ of size $s' \le \poly(n, m, d_f, d_g, s, (1+\ideg g)^{\log n})$ which computes the following.
	If $\F$ has characteristic zero, then $\Phi$ computes $g(\vec{x})$.
	If $\F$ has characteristic $p > 0$, then $\Phi$ computes $g(\vec{x})^{p^k}$ for some $k \in \naturals$ such that $p^k \le d_f$.
\end{lemma}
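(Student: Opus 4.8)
The plan is to follow the standard Nisan--Wigderson hybrid argument, adapted to the algebraic setting as in \textcite{KI04}, and then handle the characteristic $p$ case using \autoref{cor:pth root of circuit}. First I would set up the hybrid: order the variables $z_1,\ldots,z_\ell$ and, for the hypothesis that $f(\mathcal{G}_{\mathrm{KI},g}(\vec\alpha))=0$ for all $\vec\alpha\in S^\ell$, note that since $S$ has size exceeding the total degree of the polynomial $f(\mathcal{G}_{\mathrm{KI},g}(\vec z))$ (which is at most $d_f d_g$), the Schwartz--Zippel lemma gives that $f(\mathcal{G}_{\mathrm{KI},g}(\vec z))$ is identically zero as a polynomial in $\F[\vec z]$. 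Now I substitute field constants for the variables one at a time. Since $f$ is a nonzero polynomial, there is a maximal prefix of variables $z_1,\ldots,z_t$ and a choice of constants such that after substituting them into $f(\mathcal{G}_{\mathrm{KI},g}(\vec z))$ the result is still zero, but there is a further variable $z_{t+1}$ and a generic (formal) value such that the polynomial becomes nonzero; more precisely I would find a ``hard'' variable and a hybrid index $i$ at which, after hardwiring a suitable partial assignment, $f$ restricted to the $i$-th hybrid depends on the value fed into the $i$-th coordinate $g(\vec z|_{S_i})$ in a nontrivial way.

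The key step is the ``reconstruction'': at the critical hybrid, the coordinates $j\neq i$ are each instantiated by polynomials in the single variable $z_{t+1}$ that belongs to $S_i$, restricted to the at most $\log n$ variables of $S_i\cap S_j$; these restrictions are univariate (or low-variate) polynomials of individual degree at most $\ideg(g)$ in the one shared variable, hence each is a polynomial of degree at most $(1+\ideg g)^{\log n}$ roughly, and can be computed by a small circuit (interpolation over $(1+\ideg g)^{\log n}$ points). Feeding these into $f$ and using that $f$ with all but one argument fixed becomes a nonzero univariate polynomial in its $i$-th argument, I can recover $g(\vec z|_{S_i})$, and hence $g(\vec x)$ after renaming variables, by polynomial-root-finding / interpolation applied to the resulting univariate relation. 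Over characteristic zero this directly produces a circuit of size $\poly(n,m,d_f,d_g,s,(1+\ideg g)^{\log n})$ computing $g(\vec x)$, exactly as in \textcite{KI04}.

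The main obstacle is the characteristic $p>0$ case, and this is where the paper's new tool enters. Over a field of characteristic $p$, solving the univariate relation $A(g(\vec x)) = B(\vec x)$ for $g$ may fail when $A$ has vanishing derivative, i.e.\ when $A$ is a polynomial in the $p$-th power of its argument; in that situation the reconstruction only determines $g(\vec x)^{p^k}$ for the largest $p^k$ with $A$ a polynomial in the $p^k$-th power, and this $p^k$ is at most the degree of $A$ in its argument, which is at most $d_f$. Here I would invoke \autoref{thm:Kaltofen factoring} to extract a circuit for $g^{p^k}$ from the circuit for the relevant factor, or note directly that the interpolation argument itself yields a circuit for $g(\vec x)^{p^k}$. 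This gives the stated conclusion: a circuit $\Phi$ of size $\poly(n,m,d_f,d_g,s,(1+\ideg g)^{\log n})$ computing $g(\vec x)$ in characteristic zero, and $g(\vec x)^{p^k}$ with $p^k\le d_f$ in characteristic $p$. I expect the bookkeeping of the degrees of the interpolating polynomials on the sets $S_i\cap S_j$ (to get the $(1+\ideg g)^{\log n}$ factor) and the careful identification of the hybrid index to be the most delicate routine parts, but the conceptual crux is recognizing that low characteristic forces the weaker ``$p^k$-th power'' conclusion, which is then repaired downstream via the $p$-th root construction of \autoref{sec:pth root}.
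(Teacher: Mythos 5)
Your proposal is correct and follows essentially the same route as the original Kabanets--Impagliazzo argument that the paper cites for this lemma (Schwartz--Zippel, the hybrid over the $n$ generator coordinates, brute-force computation of the restrictions of $g$ to the at most $\log n$ shared variables of $S_i\cap S_j$ -- note $(1+\ideg g)^{\log n}$ is a monomial-count bound, not a degree bound -- and Kaltofen factoring of the relation $y_i - g(\vec z|_{S_i}) \mid \overline{f}$, which in characteristic $p$ only yields $(y_i-g)^{p^k}$ with $p^k \le \deg_{y_i}\overline{f} \le d_f$). You also correctly place the $p$\ts{th}-root repair outside the lemma, in the downstream application of \autoref{cor:pth root of circuit}.
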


If $f(\mathcal{G}_{\mathrm{KI},g}(\vec{z})) = 0$, then using \autoref{lem:KI lemma}, we can reconstruct a circuit for $g$ using the circuit for $f$.
By taking $g$ from a family of hard polynomials, we obtain a contradiction if there is a small circuit which computes $f$.
This proves that $\mathcal{G}_{\mathrm{KI},g}$ is a hitting set generator for the class of small circuits.
The explicitness of $\mathcal{G}_{\mathrm{KI},g}$ follows from the explicitness of the family from which $g$ is taken.
The hardness-randomness tradeoffs of \textcite{KI04} then follow by setting parameters according to the hardness of $g$.

Over a field of characteristic $p>0$, \autoref{lem:KI lemma} provides a circuit computing $g(\vec{x})^{p^k}$.
Suppose we are working over $\F_q$, the finite field of $q = p^a$ elements.
By taking $p$\ts{th} powers of $g(\vec{x})^{p^k}$ if necessary, we can obtain a circuit which computes $g(\vec{x})^{p^{a r}} = g(\vec{x})^{q^r}$ for some $r \in \naturals$.
The map $\alpha \mapsto \alpha^q$ is the identity over $\F_q$, so the circuit which computes $g(\vec{x})^{q^r}$ in fact computes the same function as $g(\vec{x})$.
This is why, without further work, we need a polynomial which is hard to compute as a function to obtain hardness-randomness tradeoffs over finite fields.

If we could factor the circuit for $g(\vec{x})^{p^k}$ to obtain a not-too-much-larger circuit for $g(\vec{x})$, then we could derive hardness-randomness tradeoffs from the assumption of an explicit family of multilinear polynomials which are hard to compute.
It remains an open problem to show that if $g(\vec{x})^p$ has a small circuit, then $g(\vec{x})$ has a small circuit.
However, in the constant-variate regime, \autoref{cor:pth root of circuit} resolves this problem in the affirmative.
This is the main fact which drives our extension of the Kabanets-Impagliazzo generator.

\subsection{Extension to Fields of Low Characteristic}

We now show how to use the Kabanets-Impagliazzo generator to obtain hardness-randomness tradeoffs over all perfect fields of characteristic $p > 0$. 
Recall that $\mathcal{C}_{\F}(s,n,d)$ denotes the set of $n$-variate degree $d$ polynomials computable by circuits of size at most $s$.

\begin{theorem} \label{thm:KI hard vs rand}
	Let $\F$ be a field of characteristic $p > 0$ and let $c, k \in \naturals$ be positive constants.
	Let $\set{g_{d}(\vec{x}) : d \in \naturals}$ be a strongly $t(k,d)$-explicit family of $k$-variate degree $d$ polynomials.
	Let $s : \naturals \to \naturals$ be a function such that $g_d$ cannot be computed by algebraic circuits of size smaller than $s(d)$ over $\F^{p^{-\infty}}$.
	Then there is a hitting set generator $\mathcal{G} : \F^{\ell} \to \F^n$ for $\mathcal{C}_{\F}(n^c,n,n^c)$ which
	\begin{enumerate}
		\item
			is $\del{\poly(n,2^{\ell}) + t(k,n^{3ck + \Omega(c)}) \cdot s^{-1}(n^{3ck + \Omega(c)})^{O(k)}}$-explicit,
		\item
			has seed length $\ell = O\del{\frac{k^2 \log^2(s^{-1}(n^{3ck + O(c)}))}{\log n}}$, and
		\item
			has degree $O(k \log(s^{-1}(n^{3ck + O(c)})))$.
	\end{enumerate}
\end{theorem}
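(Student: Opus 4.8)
The plan is to instantiate the Kabanets--Impagliazzo generator (\autoref{cons:KI generator}) with a multilinear polynomial built from the hard constant-variate family via the Kronecker encoding of \autoref{lem:kronecker}, and to prove correctness by chaining \autoref{lem:KI lemma} with the $p$\ts{th}-root extraction of \autoref{cor:pth root of circuit}. Fix a degree parameter $d = d(n)$, to be pinned down at the very end, and set $m \coloneqq k(\floor{\log d}+1)$, so that $n < 2^m$ once $d$ is not too small. Let $G = G_d$ be the multilinear polynomial on $m$ variables produced from $g_d$ by \autoref{lem:kronecker}: it is strongly $t(k,d)$-explicit, nonzero, has $\ideg G = 1$, and satisfies the key property that applying the Kronecker substitution of \autoref{lem:kronecker} to a circuit for any power $G^e$ yields a circuit for $g_d^e$, at the cost of $\Theta(k\log d)$ extra repeated-squaring gates. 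Take the Nisan--Wigderson design $S_1,\dots,S_n \subseteq [\ell]$ of \autoref{lem:NW design} with $\ell = O(m^2/\log n) = O(k^2(\log d)^2/\log n)$, and let $\mathcal{G} \coloneqq \mathcal{G}_{\mathrm{KI},G} \colon \F^\ell \to \F^n$.

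Two of the three claimed properties are then immediate: the seed length is $\ell = O(k^2(\log d)^2/\log n)$ by construction, and each coordinate $\mathcal{G}_i = G(\vec z|_{S_i})$ has degree $\deg G \le m = O(k\log d)$; these are the bounds (2) and (3) once the eventual value of $d$ is substituted. For (1), to evaluate $\mathcal{G}$ at a point we build the design in time $\poly(n,2^\ell)$ and then evaluate $G$ at $n$ points; since $g_d$ has at most $d^{O(k)}$ monomials and each of their coefficients is available in time $t(k,d)$, one such evaluation costs $t(k,d)\cdot d^{O(k)}$, for a total of $\poly(n,2^\ell) + t(k,d)\cdot d^{O(k)}$, which is the explicitness claimed in (1).

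For correctness, suppose for contradiction that some nonzero $f \in \mathcal{C}_\F(n^c,n,n^c)$ has $f(\mathcal{G}(\vec z)) \equiv 0$. Then $f(\mathcal{G}(\vec\alpha)) = 0$ for every $\vec\alpha$ in a grid $S^\ell$ with $|S| \ge d_f d_g + 1$ (if $\F$ is a finite field too small to contain such an $S$, first pass to a sufficiently large finite extension, which is again perfect; by \autoref{lem:simulate extension} this costs only a $(\log n)^{O(1)}$ factor, absorbed below). \autoref{lem:KI lemma} then yields a circuit of size $s' \le \poly(n,m,d_f,d_g,s_f,(1+\ideg G)^{\log n})$ computing $G^{p^j}$ for some $j \in \naturals$ with $p^j \le d_f = n^c$, where $s_f \le n^c$ is the size of the circuit for $f$. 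Since $d_f, s_f \le n^c$, $d_g \le m$, $m = O(k\log d)$, and $(1+\ideg G)^{\log n} = 2^{\log n} = n$ because $G$ is multilinear, this is $s' \le n^{O(c)}\cdot(\log d)^{O(1)}$. Applying the Kronecker substitution gives a circuit of size $s' + \Theta(k\log d)$ computing $g_d^{p^j}$ on only $k$ variables; applying \autoref{cor:pth root of circuit} $j$ times then yields a circuit over $\F^{p^{-\infty}}$ for $g_d$, where each application multiplies the size by at most $3p^{2k}+2^k$ (there being only $k$ variables), so after $j \le c\log_p n$ applications the size has grown by a factor $(3p^{2k}+2^k)^j \le n^{2ck+O(c)}$ using $p^j \le n^c$. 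The outcome is a circuit over $\F^{p^{-\infty}}$ of size $n^{2ck+O(c)}\cdot(\log d)^{O(1)}$ that computes $g_d$. It remains to choose $d = d(n)$ to be the least integer for which both $n < 2^m$ and $n^{2ck+O(c)}\cdot(\log d)^{O(1)} < s(d)$ hold; one verifies that $d = s^{-1}(n^{3ck+\Theta(c)})$ does the job, the gap between $2ck$ and $3ck$ together with the $\Theta(c)$ slack absorbing both the $n^{O(c)}$ overhead and the $(\log d)^{O(1)} = (\log s^{-1}(n^{3ck+O(c)}))^{O(1)}$ factor. For this $d$ the reconstructed circuit contradicts the assumed hardness of $g_d$ over $\F^{p^{-\infty}}$, so $f(\mathcal{G}(\vec z)) \neq 0$ and $\mathcal{G}$ is a hitting set generator for $\mathcal{C}_\F(n^c,n,n^c)$.

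The only step that goes beyond the argument of \textcite{KI04} is the passage from a circuit for $G^{p^j}$ to a circuit for $g_d$: in characteristic zero one would simply apply \autoref{thm:Kaltofen factoring}, but in characteristic $p$ that theorem hands back only a $p$-power, so this is precisely where \autoref{cor:pth root of circuit} is needed. The main obstacle is thus keeping the $p$\ts{th}-root step cheap, and the way around it is the detour through the constant-variate world: one first transports the reconstructed circuit back to $k$ variables using the Kronecker map, where each root extraction costs only a constant factor, and everything else is the parameter accounting in the previous paragraph --- checking that $j \le c\log_p n$ root extractions together with the polynomial overheads still produce a circuit small enough to violate an $s(d)$-size lower bound, which is what forces the exponent $3ck$ and the $\Theta(c)$ slack appearing in the statement.
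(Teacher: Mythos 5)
Your proposal is correct and follows essentially the same route as the paper's proof: build the multilinear polynomial from $g_d$ via \autoref{lem:kronecker}, instantiate \autoref{cons:KI generator} with the designs of \autoref{lem:NW design}, apply \autoref{lem:KI lemma}, undo the Kronecker substitution to return to $k$ variables, extract $p$\ts{th} roots via \autoref{cor:pth root of circuit}, and choose $d = s^{-1}(n^{3ck+\Theta(c)})$ to force the contradiction. The only (harmless) difference is that you explicitly handle the case of a too-small base field by passing to an extension and invoking \autoref{lem:simulate extension}, a point the paper elides here but addresses in the proof of \autoref{thm:bootstrap generator}.
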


\begin{proof}
	We will obtain our generator by using $\set{g_d : d \in \naturals}$ to construct a family of hard multilinear polynomials.
	We then set parameters and instantiate the Kabanets-Impagliazzo generator with this hard multilinear family.

	By \autoref{lem:kronecker}, there is a strongly $t(k,d)$-explicit family of multilinear polynomials $h_d(\vec{y})$ on $m \coloneqq k (\floor{\log d} + 1)$ variables such that any circuit which computes $h_d$ must be of size $s(d) - O(k \log d)$.
	The construction of $h_d$ also yields the identity
	\[
		g_d(\vec{x}) = h_d(x_1^{2^0},x_1^{2^1},\ldots,x_1^{2^{\floor{\log d}}},\ \ldots\ ,x_k^{2^0},x_k^{2^1},\ldots,x_k^{2^{\floor{\log d}}}),
	\]
	which allows us to obtain a circuit for $g_d$ from a circuit for $h_d$.
	As $h_d$ is multilinear, we have $\deg(h_d) \le m$ and $\ideg(h_d) = 1$.

	Set $d = s^{-1}(n^e)$ for a large enough constant $e \ge 1$ to be specified later.
	Since $g_d$ is a $k$-variate degree $d$ polynomial, we trivially have $s(d) \le d^{O(k)}$, so $s^{-1}(d) \ge d^{\Omega(1/k)}$.
	This gives us
	\[
		2^m \ge d^k = s^{-1}(n^e)^k \ge (n^{\Omega(e/k)})^k = n^{\Omega(e)}.
	\]
	Taking $e$ to be large enough guarantees $2^m > n$.
	Let $S_1,\ldots,S_n \subseteq [\ell]$ be the Nisan-Wigderson design guaranteed by \autoref{lem:NW design}.
	Our generator $\mathcal{G} : \F^{\ell} \to \F^n$ is given by instantiating the Kabanets-Impagliazzo generator with $h_d$.
	That is,
	\[
		\mathcal{G}(\vec{z}) \coloneqq \mathcal{G}_{\mathrm{KI},{h_d}}(\vec{z}) = (h_d(\vec{z}|_{S_1}), \ldots, h_d(\vec{z}|_{S_n})).
	\]
	We now verify the claimed properties of $\mathcal{G}$.

	\textbf{Correctness.}
	To see that $\mathcal{G}$ is indeed a hitting set generator for $\mathcal{C}_{\F}(n^c,n,n^c)$, suppose there is some non-zero $f \in \mathcal{C}_{\F}(n^c, n, n^c)$ such that $f(\mathcal{G}(\vec{z})) = 0$.
	Then by \autoref{lem:KI lemma}, there is a circuit of size 
	\[
		s' \le \poly(n,m,n^c,2^{\log n}) \le n^{O(c)}
	\]
	which computes $h_d(\vec{y})^{p^a}$ for $p^a \le \deg(f) \le n^c$.
	Via the Kronecker substitution $y_{i,j} \mapsto x_i^{2^j}$, we obtain a circuit of size $s' + O(k \log d) \le n^{O(c)}$ which computes $g_d(\vec{x})^{p^a}$.
	We now apply \autoref{cor:pth root of circuit} a total of $a$ times to obtain a circuit which computes $g_d(\vec{x})$ and has size $s'' \le (3 \cdot 2^k \cdot p^{2k})^a n^{O(c)}$.
	Since $p^a \le n^c$ and $2 \le p$, we obtain $s'' \le n^{3kc + O(c)}$.
	By setting $e = 3ck + \Theta(c)$ where the hidden constant on the $\Theta(c)$ term is large enough, we obtain a contradiction as follows.
	By assumption, any circuit which computes $g_d$ must be of size at least $s(d) = n^e$.
	However, we have a circuit of size $n^{3ck + O(c)} \ll n^e = s(d)$ which computes $g_d$, a contradiction.
	Thus, it must be the case that $f(\mathcal{G}(\vec{z})) \neq 0$.
	Hence $\mathcal{G}$ is a hitting set generator for $\mathcal{C}_{\F}(n^c,n,n^c)$.

	\textbf{Explicitness.}
	Given a point $\vec{\alpha} \in \F^{\ell}$, we can evaluate $\mathcal{G}$ as follows.
	First, we construct the Nisan-Wigderson design $S_1,\ldots,S_n \subseteq [\ell]$ in time $\poly(n,2^\ell)$.
	We then compute all $d^{O(k)}$ coefficients of $h_d$, each in $t(k,d)$ time.
	Finally, for each $i \in [\ell]$, we evaluate $h_d$ on $\vec{\alpha}|_{S_i}$ in time $d^{O(k)}$.
	Using the fact that $d = s^{-1}(n^{3ck + O(c)})$, we can evaluate $\mathcal{G}$ in $\poly(n,2^\ell) + t(k,n^{3ck + O(c)}) \cdot s^{-1}(n^{3ck + O(c)})^{O(k)}$ time as claimed.

	\textbf{Seed length.}
	It follows from \autoref{lem:NW design} that $\mathcal{G}$ has seed length $\ell = O(m^2/\log n)$ = $O\del{\frac{k^2 \log^2 d}{\log n}}$.
	By our choice of $d = s^{-1}(n^{3ck + O(c)})$, we obtain the claimed seed length of $O\del{\frac{k^2 \log^2(s^{-1}(n^{3ck + O(c)}))}{\log n}}$.

	\textbf{Degree.}
	By construction, $\mathcal{G}$ is a map of degree $\deg(h_d) \le m = k (\floor{\log d}+1)$.
	Once more, plugging in our choice of $d$ yields the claimed bound of $O(k \log(s^{-1}(n^{3ck + O(c)})))$.
\end{proof}

By applying \autoref{lem:hsg to hitting set}, we obtain the following construction of explicit hitting sets for $\mathcal{C}_{\F}(n^c,n,n^c)$.

\begin{corollary} \label{cor:KI hitting set}
	Assume the setup of \autoref{thm:KI hard vs rand}. 
	Let $T$, $\ell$, and $\Delta$ be the explicitness, seed length, and degree of the generator of \autoref{thm:KI hard vs rand}, respectively.
	Then there is a hitting set $\mathcal{H}$ for $\mathcal{C}_{\F}(n^c,n,n^c)$ which
	\begin{enumerate}
		\item
			has size $|\mathcal{H}| = (n^c \Delta + 1)^\ell$, and 
		\item
			has explicitness $|\mathcal{H}| \cdot T = (n^c \Delta + 1)^\ell \cdot T$.
	\end{enumerate}
\end{corollary}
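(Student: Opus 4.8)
The plan is to obtain $\mathcal{H}$ by feeding the generator $\mathcal{G}$ of \autoref{thm:KI hard vs rand} into the standard generator-to-hitting-set conversion recorded in \autoref{lem:hsg to hitting set}. First I would pin down the parameters. The class to be hit is $\mathcal{C} \coloneqq \mathcal{C}_{\F}(n^c, n, n^c)$, which by definition consists of $n$-variate polynomials of degree at most $d \coloneqq n^c$. The map $\mathcal{G} : \F^{\ell} \to \F^n$ supplied by \autoref{thm:KI hard vs rand} is a hitting set generator for $\mathcal{C}$ with seed length $m = \ell$, degree $D = \Delta$, and explicitness $t(n) = T$, where $T$, $\ell$, $\Delta$ are exactly the three quantities named in the corollary's hypothesis.

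Next I would invoke \autoref{lem:hsg to hitting set} with these parameters. This takes $\mathcal{H}$ to be the image of $\mathcal{G}$ on a grid $S^{\ell} \subseteq \F^{\ell}$ with $|S| \ge dD + 1 = n^c \Delta + 1$; the Schwartz--Zippel lemma guarantees correctness, since for any nonzero $f \in \mathcal{C}$ the composition $f \circ \mathcal{G}$ is a nonzero polynomial in $\ell$ variables of degree at most $dD = n^c \Delta$, hence does not vanish identically on $S^{\ell}$. This immediately gives $|\mathcal{H}| \le (n^c \Delta + 1)^{\ell}$, which is the claimed size bound. For the explicitness, constructing the grid and then evaluating $\mathcal{G}$ at each of its $(n^c \Delta + 1)^{\ell}$ points in time $T$ per point yields total running time $(n^c \Delta + 1)^{\ell} \cdot T = |\mathcal{H}| \cdot T$, as stated in part (ii).

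The only point meriting a brief remark is that $\mathcal{G}$ is genuinely defined over $\F$, not merely over the perfect closure $\F^{p^{-\infty}}$ — in \autoref{thm:KI hard vs rand} the polynomial $h_d$ used to instantiate the Kabanets--Impagliazzo generator has coefficients in $\F$, so $\mathcal{H} \subseteq \F^n$ as the definition of a hitting set over $\F$ demands. (One also implicitly needs $\F$ to contain a set $S$ of the required size; this is already built into \autoref{lem:hsg to hitting set}, and over small fields one passes to an extension in the usual way.) I do not expect any real obstacle in this argument: essentially all of the work is in \autoref{thm:KI hard vs rand}, and the present step is a routine bookkeeping application of \autoref{lem:hsg to hitting set}.
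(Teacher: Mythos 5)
Your proposal is correct and is exactly the paper's argument: the corollary is obtained by applying \autoref{lem:hsg to hitting set} to the generator of \autoref{thm:KI hard vs rand} with $d = n^c$, $D = \Delta$, $m = \ell$, and $t(n) = T$. The paper's own proof is the same one-line instantiation.
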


\begin{proof}
	This is \autoref{lem:hsg to hitting set} applied to \autoref{thm:KI hard vs rand}.
\end{proof}

We conclude this section with some concrete hardness-randomness tradeoffs obtainable via \autoref{thm:KI hard vs rand} and \autoref{cor:KI hitting set}.
Recall that for constant $k$, a $k$-variate polynomial of degree $d$ consists of at most $\binom{k + d}{k} \le d^{O(k)}$ monomials.
In this regime, a polynomial which is strongly $d^{O(k)}$-explicit is ``exponential time explicit,'' as the description of a single monomial consists of $O(k \log d)$ bits.

\begin{corollary} \label{cor:KI tradeoffs}
	Let $\F$ be a field of characteristic $p > 0$.
	Let $c, k \in \naturals$ be fixed constants.
	Let $\set{g_d(\vec{x}) : d \in \naturals}$ be a strongly $d^{O(k)}$-explicit family of $k$-variate degree $d$ polynomials which cannot be computed by circuits of size smaller than $s(d)$ over $\F^{p^{-\infty}}$.
	Then the following results hold regarding hitting sets for $\mathcal{C}_{\F}(n^c,n,n^c)$.
	\begin{enumerate}
		\item
			If $s(d) = \log^{\omega(1)}d$, then there is a $2^{n^{o(1)}}$-explicit hitting set for $\mathcal{C}_{\F}(n^c,n,n^c)$ of size $2^{n^{o(1)}}$.
		\item
			If $s(d) = 2^{\log^{\Omega(1)} d}$, then there is a $2^{\log^{O(1)} n}$-explicit hitting set for $\mathcal{C}_{\F}(n^c,n,n^c)$ of size $2^{\log^{O(1)} n}$.
		\item
			If $s(d) = d^{\Omega(1)}$, then there is a $n^{O(\log n)}$-explicit hitting set for $\mathcal{C}_{\F}(n^c,n,n^c)$ of size $n^{O(\log n)}$.
	\end{enumerate}
\end{corollary}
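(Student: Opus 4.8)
The plan is to feed the family $\set{g_d}$ directly into \autoref{cor:KI hitting set} and then, in each of the three hardness regimes, estimate the quantity $D \coloneqq s^{-1}(n^{3ck + \Theta(c)})$ and propagate it through the resulting size and explicitness formulas. Because $c$ and $k$ are fixed constants and the explicitness here is $t(k,d) = d^{O(k)}$, \autoref{thm:KI hard vs rand} together with \autoref{cor:KI hitting set} produces a hitting set of size $(n^c\Delta+1)^{\ell}$ and explicitness $(n^c\Delta+1)^{\ell}\cdot T$, where $\ell = O(\log^2 D/\log n)$, $\Delta = O(\log D)$, and $T$ is polynomially bounded in $n$, $2^{\ell}$, and $D$. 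Taking logarithms, $\log|\mathcal{H}| = O(\log^2 D/\log n)\cdot O(\log n + \log\log D)$, so the whole argument reduces to bounding $\log D$ as a function of $n$; the same bound will also control $\log T$.

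Next I would perform the three substitutions. If $s(d) = \log^{\omega(1)} d$, then for every constant $a$ we eventually have $s(d)\ge\log^a d$, so $s^{-1}(N)\le 2^{N^{1/a}}$ for every $a$, giving $D = 2^{n^{o(1)}}$ and $\log D = n^{o(1)}$; then $\ell = n^{o(1)}$, $\Delta = n^{o(1)}$, and both the size $(n^c\Delta+1)^{\ell}$ and $T$ are $2^{n^{o(1)}}$, which is the first item. If $s(d) = 2^{\log^{\Omega(1)} d}$, then $s(d)\ge 2^{\log^{\epsilon} d}$ for some constant $\epsilon>0$, so inverting gives $D\le 2^{\log^{O(1)} n}$ and $\log D = \log^{O(1)} n$; hence $\ell,\Delta = \log^{O(1)} n$, and the size and explicitness are both $2^{\log^{O(1)} n}$, which is the second item. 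If $s(d) = d^{\Omega(1)}$, then $D = n^{O(1)}$ and $\log D = O(\log n)$, so $\ell = O(\log n)$ and $\Delta = O(\log n)$, whence the size is $(n^{O(1)}\cdot O(\log n))^{O(\log n)} = n^{O(\log n)}$ and $T = n^{O(1)}$, which is the third item.

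I do not expect any genuine obstacle: all of the real work is in \autoref{thm:KI hard vs rand}, and the three items come out by routine asymptotic bookkeeping. The one place to be careful is that the $\log n$ in the denominator of $\ell = O(k^2\log^2 D/\log n)$ must cancel the $\log n$ that appears upon taking the logarithm of $(n^c\Delta+1)^{\ell}$; without this cancellation the size in the third regime would degrade from $n^{O(\log n)} = 2^{O(\log^2 n)}$ to $2^{O(\log^3 n)}$. One should also note that the term $\poly(n, 2^{\ell})$ in $T$, which is the cost of constructing the Nisan--Wigderson design of \autoref{lem:NW design}, never dominates: in each regime $2^{\ell}$ (and hence $T$) is bounded by the stated target, so the explicitness matches the hitting set size up to constants in the exponent.
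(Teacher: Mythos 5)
Your proposal is correct and follows exactly the route the paper intends: instantiate \autoref{thm:KI hard vs rand} and \autoref{cor:KI hitting set} with $t(k,d)=d^{O(k)}$ and then bound $D = s^{-1}(n^{3ck+O(c)})$ in each of the three hardness regimes (the paper omits these calculations as "straightforward," and yours carry them out accurately, including the two delicate points about the $\log n$ cancellation in the exponent and the $\poly(n,2^{\ell})$ term never dominating).
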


\begin{proof}
	Each statement follows by setting parameters in \autoref{thm:KI hard vs rand} and \autoref{cor:KI hitting set} and using the fact that $c$ and $k$ are fixed constants independent of $n$ and $d$.
	We omit the straightforward calculations.
\end{proof}

\section{Bootstrapping from Constant-Variate Hardness} \label{sec:bootstrap}

Given that we use the seemingly stronger assumption of constant-variate hardness in our extension of the Kabanets-Impagliazzo generator, one may wonder if we can push the hardness-randomness connection further and obtain a better derandomization of identity testing for $\mathcal{C}_{\F}(n^c,n,n^c)$.
Perhaps surprisingly, this is possible by going through the recent development of ``bootstrapping'' for hitting sets.

\subsection{A Non-Trivial Hitting Set from Constant-Variate Hardness}

Let $n$ be a constant and let $s$ be arbitrarily large.
Suppose we have an explicit, slightly non-trivial hitting set for $\mathcal{C}_{\F}(s,n,s)$.
Then we can ``bootstrap'' the advantage this hitting set has over the trivial one in order to obtain an explicit hitting set of very small size for $\mathcal{C}_{\F}(s,s,s)$.
That is, in order to almost completely derandomize polynomial identity testing for the class of polynomials of polynomial degree computed by polynomial-size circuits, it suffices to find a non-trivial derandomization of polynomial identity testing for circuits on a constant number of variables but of arbitrary size and degree.

We remark that, throughout this section, one should read $\mathcal{C}_{\F}(s,s,s)$ as a stand-in for $\mathcal{C}_{\F}(n^c,n,n^c)$, where $c$ is a fixed constant.
This follows by taking $s = n^c$ and noting that $\mathcal{C}_{\F}(n^c,n,n^c) \subseteq \mathcal{C}_{\F}(n^c,n^c,n^c) = \mathcal{C}_{\F}(s,s,s)$.
While the following results are stated for $\mathcal{C}_{\F}(s,s,s)$, changing $s$ by at most a polynomial factor will not qualitatively affect the results we obtain.

We now formally state the bootstrapping result.
Let $\log^\star s$ denote the iterated logarithm of $s$.
That is,
\[
	\log^\star s \coloneqq
	\begin{cases}
		1 + \log^\star(\log s) & s > 1 \\
		0 & s \le 1.
	\end{cases}
\]
This version of the bootstrapping theorem is due to \textcite{KST19} and improves upon the initial work of \textcite{AGS19}.
Note that this theorem holds over all fields, including those of positive characteristic.

\begin{theorem}[\cite{KST19}] \label{thm:bootstrapping}
	Let $\F$ be any field and let $\eps > 0$ and $n \ge 2$ be constants.
	Suppose that for all sufficiently large $s$, there is an $s^{O(n)}$-explicit hitting set of size $s^{n-\eps}$ for $\mathcal{C}_{\F}(s,n,s)$.
	Then there is an $s^{\exp \circ \exp (O(\log^\star s))}$-explicit hitting set of size $s^{\exp \circ \exp(O(\log^\star s))}$ for $\mathcal{C}_{\F}(s,s,s)$.
\end{theorem}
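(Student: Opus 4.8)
The plan is to convert the hypothesised hitting set into a hitting set generator, then repeatedly self-compose that generator to drive its seed length down to $s^{\exp \circ \exp(O(\log^\star s))}$, and finally convert back to a hitting set via \autoref{lem:hsg to hitting set}. Everything here is about preservation of nonzeroness under polynomial substitution and the Schwartz--Zippel lemma (applied after passing to a large enough extension), both of which are insensitive to $\ch\F$; this is why the statement holds over every field. To get started, I would interpolate a degree-$(|\mathcal{H}|-1)$ curve $\gamma : \F \to \F^n$ through the points of the hypothesised hitting set $\mathcal{H}$ for $\mathcal{C}_\F(s,n,s)$ (working over an extension of degree $O(\log_{|\F|}|\mathcal{H}|)$ if $\F$ is too small, which is harmless for later hitting-set purposes), obtaining a seed-length-$1$ generator of degree $|\mathcal{H}|-1 = s^{n-\eps}-1$. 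Writing exponents in base $b$ and running a Kronecker substitution $y \mapsto (y^{b^0}, y^{b^1}, \ldots)$ in reverse then trades this into a generator $\mathcal{G}_n : \F^{k} \to \F^n$ of individual degree $< b$ and seed length $k = O(\log_b |\mathcal{H}|) = O((n-\eps)\log s / \log b)$. The essential point is that the ``budget'' $k \cdot \log b$ is only about $(n-\eps)\log s$, strictly below the $n\log s$ coming from the trivial grid; this $\eps$-sized surplus is what the recursion compounds.

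\emph{Self-composition and recursion.} Next I would bootstrap the arity. Given a circuit on $N$ variables, I partition its variables into $N/n$ blocks of size $n$ and substitute an independent copy of a suitable generator into one block at a time; treating the untouched variables as formal transcendentals, each substitution preserves nonzeroness because the restricted circuit is again an $n$-variate circuit over the appropriate rational-function field, with controlled size and degree. This yields a generator of seed length $(N/n)\cdot k$ and degree larger by roughly a factor of $b$. Since the hitting-set hypothesis is available for \emph{all} large $s$, hence at every size and degree, one is free to build the generator used at scale $N$ out of a generator previously built for a much smaller scale, recursing down to arity $O(1)$. Arranging the recursion so the scale collapses like an iterated logarithm keeps the recursion depth at $O(\log^\star s)$; since each level multiplies the degree by a polynomial amount and the seed lengths multiply, after $O(\log^\star s)$ levels one is left with a generator for $\mathcal{C}_\F(s,s,s)$ whose seed length and degree are both $s^{\exp \circ \exp(O(\log^\star s))}$, computable within the same time bound (the per-level explicitness overheads $s^{O(n)}$ telescope to the same expression).

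\emph{Conclusion and main obstacle.} Feeding this generator into \autoref{lem:hsg to hitting set}, with the evaluation grid taken in a large enough extension, produces the claimed $s^{\exp \circ \exp(O(\log^\star s))}$-explicit hitting set of size $s^{\exp \circ \exp(O(\log^\star s))}$ for $\mathcal{C}_\F(s,s,s)$. I expect the crux to be the quantitative bookkeeping in the recursion: one must keep the degree blow-up at each level polynomial in the current scale, ensure that the generator invoked at each level really does fool circuits large enough to include the composed generators from below (which forces the use of the hypothesis at a polynomially larger value of $s$), and tune the base-$b$ parameter against the rate at which the scale shrinks so that the depth is $O(\log^\star s)$ and the final bound is $s^{\exp \circ \exp(O(\log^\star s))}$ rather than merely quasipolynomial.
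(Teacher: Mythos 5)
This theorem is not proved in the paper at all: it is imported verbatim from \textcite{KST19} (the paper only cites it and then applies it in \autoref{thm:bootstrap generator}), so there is no in-paper proof to compare against. Judged on its own terms, your sketch gets the correct first move --- interpolate a low-degree curve through the hypothesised hitting set, then undo a Kronecker substitution to trade the single high-degree seed for $k$ seeds of individual degree $<b$ with total ``budget'' $k\log b \approx (n-\eps)\log s$, strictly below the trivial $n\log s$. That is indeed how the hypothesis is converted into a generator with a genuine (if tiny) advantage.

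The gap is in the composition step, and it is not mere bookkeeping. Substituting an independent copy of an arity-$n$, seed-$k$ generator into each of $N/n$ blocks of an $N$-variate circuit produces a generator of seed length $(N/n)\cdot k = N(1-\Theta(\eps/n))$: a \emph{constant-factor} reduction per level. Iterating this can only reach constant arity after $\Omega(\log s)$ levels, and since each level multiplies the degree by $\poly(s)$, the resulting hitting set has size $s^{\Omega(\log s)}$ --- i.e., you recover only the quasipolynomial bound already available from \textcite{KI04}, not $s^{\exp\circ\exp(O(\log^\star s))}$. To get recursion depth $O(\log^\star s)$ each level must shrink the number of variables from $N$ to roughly $\polylog(N)$, an \emph{exponential} stretch, and block-wise substitution cannot deliver that. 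The missing idea is the design-based composition: one feeds a seed of length $\ell = \poly(k)$ into superpolynomially many overlapping copies of the base generator via combinatorial designs with small pairwise intersections (precisely the Reed--Solomon designs of \autoref{lem:RS design}, which this paper quotes from \textcite{KST19} for exactly this purpose), and proves correctness by a hybrid argument in which the first failing hybrid, after restricting the variables outside one design set, yields a too-small circuit contradicting the hitting property of the base generator at a larger size parameter. Without this exponential-stretch lemma, the recursion structure you describe (``scale collapses like an iterated logarithm'') is asserted rather than achieved.
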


In this section, we will use \autoref{thm:bootstrapping} to obtain a stronger derandomization of polynomial identity testing over fields of characteristic $p > 0$ under appropriate hardness assumptions.
Suppose $\set{g_d(\vec{x}) : d \in \naturals}$ is a family of strongly $d^{O(k)}$-explicit $k$-variate degree $d$ polynomials which require algebraic circuits of size $d^{\Omega(k)}$.
Using \autoref{cor:KI tradeoffs}, we can obtain a $s^{O(\log s)}$-explicit hitting set for $\mathcal{C}_{\F}(s,s,s)$ of size $s^{O(\log s)}$.
By a more careful instantiation of the Kabanets-Impagliazzo generator, we can use the hardness assumption on $g_d$ to design an explicit hitting set which satisfies the hypotheses of \autoref{thm:bootstrapping}.
This yields an explicit hitting set for $\mathcal{C}_{\F}(s,s,s)$ of size $s^{\exp \circ \exp(O(\log^\star s))}$, which greatly improves upon the size $s^{O(\log s)}$ hitting set of \autoref{cor:KI tradeoffs}.

Our argument also works for fields of characteristic zero, giving us a general theorem which converts near-optimal constant-variate hardness into near-optimal derandomization of polynomial identity testing for $\mathcal{C}_{\F}(s,s,s)$.

First, we need a technical lemma regarding lower bounds against constant-variate polynomials.
Roughly, we will show that $d^{\delta}$ lower bounds against degree $d$ constant-variate polynomials can be magnified to $d^c$ lower bounds against constant-variate polynomials for arbitrary $\delta, c > 0$.

\begin{lemma} \label{lem:hardness exponent}
	Let $\F$ be any field.
	Let $k \in \naturals$ and $c, \delta > 0$ be fixed constants.
	Let $\set{g_d(\vec{x}) : d \in \naturals}$ be a strongly $d^{O(k)}$-explicit family of $k$-variate polynomials of degree $d$.
	Suppose that for $d$ sufficiently large, $g_d$ cannot be computed by algebraic circuits of size smaller than $d^{\delta}$ over $\F$.
	Then there is a constant $m \in \naturals$ and a family $\set{h_\Delta(\vec{y}) : \Delta \in \naturals}$ of strongly $\Delta^{O(m)}$-explicit $m$-variate degree $\Delta$ polynomials such that for $\Delta$ sufficiently large, $h_\Delta$ cannot be computed by algebraic circuits of size smaller than $\Delta^{c}$ over $\F$.
\end{lemma}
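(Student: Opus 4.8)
The plan is to amplify the hardness exponent from $\delta$ to (slightly more than) $c$ by ``spreading out'' the polynomial $g_d$ over a larger, but still constant, number of variables via a base-$b$ Kronecker expansion. Writing the exponents of $g_d$ in base $b$ replaces a single variable of degree up to $d$ by $R \approx \log_b d$ variables, each of individual degree less than $b$; choosing $b$ on the order of $\Delta^{1/R}$ turns the degree-$d$ polynomial $g_d$ into a degree-$\Delta$ polynomial $h_\Delta$ with $d = \Theta(\Delta^R)$, while a circuit for $h_\Delta$ yields a circuit for $g_d$ of essentially the same size. Since $g_d$ needs circuits of size $d^\delta = \Theta(\Delta^{R\delta})$, taking $R$ large forces $h_\Delta$ to need circuits of size $\Delta^c$.

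Concretely, I would first fix constants: set $R \coloneqq \ceil{(c+1)/\delta}$ and $m \coloneqq kR + 1$. Given $\Delta$ large, set $b \coloneqq \floor{\Delta/(kR)} \ge 2$ and $d \coloneqq b^R - 1$, so that $d = \Theta(\Delta^R)$ and in particular $d^\delta \ge c_0 \Delta^{R\delta} \ge c_0 \Delta^{c+1}$ for a positive constant $c_0 = c_0(k, R, \delta)$. Writing $g_d(\vec x) = \sum_{\vec a \in \llb b^R \rrb^k} \alpha_{\vec a} \vec x^{\vec a}$ and letting $(a_{i,0}, \ldots, a_{i,R-1}) \in \llb b \rrb^R$ be the base-$b$ digits of $a_i$, I would define, on variables $\vec y = (y_{i,j})_{i \in [k],\, j \in \llb R \rrb}$ and one extra variable $w$,
\[
	h_\Delta(\vec y, w) \coloneqq \sum_{\vec a \in \llb b^R \rrb^k} \alpha_{\vec a} \prod_{i,j} y_{i,j}^{a_{i,j}} \ +\ w^{\Delta}.
\]
The first summand has total degree at most $kR(b-1) < \Delta$, so $\deg h_\Delta = \Delta$. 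It is strongly $\Delta^{O(m)}$-explicit: given an exponent vector, the coefficient is $1$ for $w^\Delta$, it is zero for monomials whose $w$-degree lies strictly between $0$ and $\Delta$ or whose $\vec y$-individual degree reaches $b$, and otherwise it equals $\alpha_{\vec a}$ for the $\vec a$ obtained by reading off base-$b$ digits, which one computes in time $d^{O(k)} = \Delta^{O(Rk)} = \Delta^{O(m)}$ using the explicitness of $g_d$.

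The core step is the lower bound. The map $\vec e \mapsto \bigl(\sum_j e_{i,j} b^j\bigr)_{i}$ is a bijection $\llb b\rrb^{kR} \to \llb b^R \rrb^k$, so substituting $y_{i,j} \mapsto x_i^{b^j}$ and $w \mapsto 0$ into $h_\Delta$ recovers exactly $g_d(\vec x)$. Hence from a circuit of size $s$ computing $h_\Delta$, precomputing the powers $x_i^{b^j}$ by repeated exponentiation gives a circuit of size $s + O(kR \log b) = s + O(\log \Delta)$ for $g_d$. If $h_\Delta$ had a circuit of size smaller than $\Delta^c$, then $g_d$ would have a circuit of size smaller than $\Delta^c + O(\log\Delta) < c_0 \Delta^{c+1} \le d^\delta$ once $\Delta$ (and hence $d$) is large enough, contradicting the hypothesis on $g_d$. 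Thus $h_\Delta$ requires circuits of size at least $\Delta^c$ for all sufficiently large $\Delta$.

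The only delicate point is the parameter bookkeeping: one must take $R$ large enough that $R\delta$ strictly exceeds $c$, so that the amplified bound $\Theta(\Delta^{R\delta})$ dominates both $\Delta^c$ and the additive $O(\log\Delta)$ overhead incurred by the Kronecker substitution; and the degree must be padded \emph{up} to exactly $\Delta$ using the fresh monomial $w^\Delta$ (so that any circuit for $h_\Delta$ still computes $h_\Delta - w^\Delta$ after setting $w = 0$), rather than padded downward by a substitution, which would move the lower bound in the wrong direction. Everything else is routine verification, and the argument uses nothing about $\F$, so it holds in every characteristic.
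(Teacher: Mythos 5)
Your proof is correct and follows essentially the same route as the paper's: a Kronecker expansion of $g_d$ in a base polynomially related to $\Delta$, with a constant number of digit-variables chosen so that the amplified hardness $d^{\delta} = \Theta(\Delta^{R\delta})$ strictly dominates $\Delta^{c}$ plus the $O(\log \Delta)$ substitution overhead. The only cosmetic difference is your padding monomial $w^{\Delta}$, which makes $h_\Delta$ have degree exactly $\Delta$ for every sufficiently large $\Delta$ (the paper instead just sets $\Delta \coloneqq \deg(h_\Delta)$ and fixes the base as a power of $d$ rather than deriving it from $\Delta$); the substitution argument, the explicitness check, and the parameter bookkeeping all match.
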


\begin{proof}
	We follow the approach of \autoref{lem:kronecker}, but in base $d^{\delta/2c}+1$ as opposed to base 2.

	Without loss of generality, assume that $\delta \le 1 \le c$.
	Let $m \coloneqq \frac{2 c k}{\delta}$ and let $\vec{y} = (y_{1,1},\ldots,y_{k,2c/\delta})$.
	Let $\sigma(y_{i,j}) = x_i^{(d^{\delta / 2c}+1)^j}$.
	We will take $h_\Delta(\vec{y})$ to be the polynomial of individual degree $d^{\delta/2c}$ which satisfies the equation $h(\sigma(\vec{y})) = g_d(\vec{x})$.
	More explicitly, let $g_d(\vec{x}) = \sum_{\vec{a} \in \naturals^k} \alpha_{\vec{a}} \vec{x}^{\vec{a}}$ be the expression of $g_d$ as a sum of monomials.
	Let $\varphi : \llb d^{\delta/2c}+1 \rrb^{2c/\delta} \to \llb d+1 \rrb$ be the map which takes the base-$(d^{\delta/2c}+1)$ expansion of a number $t \in \llb d+1 \rrb$ and returns $t$.
	Then we define $h_\Delta(\vec{y})$ as
	\[
		h_\Delta(\vec{y}) = \sum_{A \in \llb d^{\delta/2c}+1 \rrb^{k \times 2c/\delta}} \alpha_{\varphi(A_{1,\bullet}),\ldots,\varphi(A_{k,\bullet})} \prod_{i,j \in \llb d^{\delta/2c}+1 \rrb} y_{i,j}^{A_{i,j}}.
	\]
	It is clear from the construction of $h_\Delta$ that $h_\Delta(\sigma(\vec{y})) = g_d(\vec{x})$.
	The polynomial $h_\Delta$ is of individual degree at most $d^{\delta/2c}$, so $\Delta \coloneqq \deg(h_\Delta)$ can be bounded as 
	\[
		\Delta \le m d^{\delta / 2c} = \frac{2c k d^{\delta/2c}}{\delta}.
	\]
	Since $k$ and $\delta$ are fixed constants, for $d$ large enough, we obtain $\Delta \le d^{2\delta/3c}$.

	To show that $h_\Delta$ has the claimed hardness, suppose we are given a circuit of size $s$ which computes $h_\Delta$.
	By repeated squaring, we may compute the map $\sigma(\vec{y})$ using a circuit of size $O(k \log d) = O(m \log \Delta) = O(\log \Delta)$.
	This yields a circuit of size $s' \le s + O(\log \Delta)$ which computes $g_d$.
	By the assumed hardness of $g_d$, we have $s' \ge d^{\delta}$.
	Putting things together gives us
	\[
		s \ge d^{\delta} - O(\log \Delta).
	\]
	Since $\Delta \le d^{2\delta/3c}$ for $d$ large enough, we obtain
	\[
		s \ge \Delta^{3c/2} - O(\log \Delta).
	\]
	For $\Delta$ (and hence $d$) large enough, we have $s \ge \Delta^{c}$, which yields the desired lower bound on $h_\Delta$.

	It remains to verify the explicitness of $h_\Delta$.
	We can compute a coefficient of $h_\Delta$ by computing the corresponding coefficient of $g_d$, so $h_\Delta$ inherits the strong $d^{O(k)}$-explicitness of $g_d$.
	We need to show that $d^{O(k)} \le \Delta^{O(m)}$ in order to conclude that $h_\Delta$ is strongly $\Delta^{O(m)}$-explicit.
	By writing $h_\Delta$ as a sum of monomials, there is a circuit of size $\Delta^{O(m)}$ which computes $h_\Delta$.
	Combined with the argument above, this yields a circuit of size $\Delta^{O(m)} + O(\log \Delta) = \Delta^{O(m)}$ which computes $g_d$.
	Since any circuit which computes $g_d$ must have size $d^\delta$, we obtain $\Delta^{O(m)} \ge d^\delta$.
	As $c$, $k$, $\delta$, and $m$ are all fixed constants, this yields $d^{O(k)} \le \Delta^{O(m)}$ as desired.
\end{proof}

Now we are ready to state and prove our hardness-randomness tradeoff.

\begin{theorem} \label{thm:bootstrap generator}
	Let $\F$ be any field and let $k \in \naturals$ and $\delta > 0$ be fixed constants.
	Let $\mathbb{K} = \F^{p^{-\infty}}$ if $\ch \F = p > 0$ and $\mathbb{K} = \F$ otherwise.
	Let $\set{g_d(\vec{x}) \in \F[\vec{x}] : d \in \naturals}$ be a family of strongly $d^{O(k)}$-explicit $k$-variate degree $d$ polynomials.
	Suppose that for all $d$ sufficiently large, $g_d$ cannot be computed by algebraic circuits of size smaller than $d^{\delta}$ over $\mathbb{K}$.
	Then for all sufficiently large $s$, there is an $s^{\exp \circ \exp(O(\log^\star s))}$-explicit hitting set of size $s^{\exp \circ \exp(O(\log^\star s))}$ for $\mathcal{C}_{\F}(s,s,s)$.
\end{theorem}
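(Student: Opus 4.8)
The plan is to amplify the given hardness, use it to build a barely non-trivial hitting set for circuits on a \emph{constant} number of variables, and then invoke the bootstrapping theorem.

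First I would apply \autoref{lem:hardness exponent} with its field taken to be $\mathbb{K}$, replacing $\set{g_d}$ by a family $\set{h_\Delta(\vec{y})}$ of strongly $\Delta^{O(m)}$-explicit $m$-variate degree-$\Delta$ polynomials, with $m = 2ck/\delta = O(1)$, that require algebraic circuits of size at least $\Delta^c$ over $\mathbb{K}$; here $c$ is a large constant and the order in which I fix parameters will be: first a constant $c' \ge 1$, then $c$ (hence $m$), then a constant number of variables $n_0 \ge 2$, each taken large enough to satisfy the inequalities below. For every sufficiently large $s$, set $\Delta \coloneqq s^{c'}$ and instantiate the Kabanets--Impagliazzo generator $\mathcal{G} \coloneqq \mathcal{G}_{\mathrm{KI},h_\Delta} \colon \F^{\ell} \to \F^{n_0}$ of \autoref{cons:KI generator}, using the Nisan--Wigderson design of \autoref{lem:NW design} with $n_0$ sets of size $m$, pairwise intersections at most $\log n_0$, and ground set of size $\ell = O(m^2/\log n_0) \le O(m^2) = O(1)$ (valid once $n_0 < 2^m$). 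Since $h_\Delta \in \F[\vec{y}]$, the map $\mathcal{G}$ is defined over $\F$.

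For correctness, suppose some nonzero $f \in \mathcal{C}_\F(s,n_0,s)$ has $f(\mathcal{G}(\vec{z})) = 0$. Applying \autoref{lem:KI lemma} with $d_f, s_f \le s$, $d_{h_\Delta} \le \Delta$, and $\ideg(h_\Delta) \le \Delta$ (so $(1+\ideg h_\Delta)^{\log n_0} = \Delta^{O(1)}$ since $n_0 = O(1)$) yields a circuit of size $\poly(s,\Delta) = s^{O(c')}$ computing $h_\Delta$ if $\ch \F = 0$, or $h_\Delta^{p^a}$ with $p^a \le s$ if $\ch \F = p > 0$. In the latter case, applying \autoref{cor:pth root of circuit} a total of $a$ times produces a circuit for $h_\Delta$ over $\F^{p^{-\infty}} = \mathbb{K}$ of size at most $(3 \cdot 2^m p^{2m})^a \cdot s^{O(c')} \le s^{O(m)} \cdot s^{O(c')}$, using $p^a \le s$ and $p \ge 2$. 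Thus in all cases we obtain a circuit over $\mathbb{K}$ of size $s^{O(m) + O(c')} = s^{O(ck/\delta) + O(c')}$ computing $h_\Delta$, while the amplified hardness forces size $\Delta^c = s^{c'c}$. For $c'$ large enough in terms of $k,\delta$ and then $c \ge c'$ large enough, $c'c$ exceeds $O(ck/\delta) + O(c')$, a contradiction; hence $\mathcal{G}$ is a hitting set generator for $\mathcal{C}_\F(s,n_0,s)$. When $\F$ is finite it is perfect, so $\mathbb{K} = \F$; the grid of size $d_f d_{h_\Delta}+1 \le s^{c'+1}+1$ needed below may then require a field extension of degree $O(\log s)$, whose cost is absorbed by the standard simulation of \autoref{lem:simulate extension}.

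Finally, since $\mathcal{G}$ has degree $\deg(h_\Delta) \le \Delta = s^{c'}$ and seed length $\ell = O(1)$, \autoref{lem:hsg to hitting set} turns it into a hitting set $\mathcal{H}$ for $\mathcal{C}_\F(s,n_0,s)$ of size $(s\Delta+1)^{\ell} \le s^{(c'+2)\ell}$, computable in time $|\mathcal{H}| \cdot \Delta^{O(m)} = s^{O(c'm)}$ (building the design costs $\poly(n_0,2^{\ell}) = O(1)$, and each coordinate of $\mathcal{G}$ is evaluated using the strong $\Delta^{O(m)}$-explicitness of $h_\Delta$). Now fix the constant $n_0 \ge 2$ with $n_0 > (c'+2)\ell + 1$ (so $|\mathcal{H}| \le s^{n_0-1}$ and the explicitness $s^{O(c'm)}$ is $s^{O(n_0)}$) and $n_0 < 2^m$; both hold once $c$ is large enough, as $2^m = 2^{\Theta(ck/\delta)}$ then dominates the fixed constant $(c'+2)\ell + 1$. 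Taking $\eps = 1$, the set $\mathcal{H}$ verifies the hypothesis of \autoref{thm:bootstrapping} for $\mathcal{C}_\F(s,n_0,s)$, and that theorem immediately delivers the claimed $s^{\exp\circ\exp(O(\log^\star s))}$-explicit hitting set of size $s^{\exp\circ\exp(O(\log^\star s))}$ for $\mathcal{C}_\F(s,s,s)$. The main obstacle is precisely the parameter bookkeeping: one must simultaneously force the reconstructed circuit for $h_\Delta$ below $\Delta^c$, the hitting set below $s^{n_0}$, and $n_0$ both constant and below $2^m$, which is what dictates the order $c'$, then $c$ (hence $m$), then $n_0$.
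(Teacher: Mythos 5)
Your proof is correct, but it routes through different machinery than the paper does, so a comparison is worthwhile. The paper also begins with \autoref{lem:hardness exponent}, but only to push the hardness exponent above a fixed threshold ($\delta \ge 30$); it then uses the Reed--Solomon designs of \autoref{lem:RS design} with constant intersection size $r=2$, and \emph{redoes} the hybrid argument from scratch: the size-$2$ intersections make each restricted copy of $g_d$ a bivariate polynomial with an explicit $d^{O(1)}$-size depth-two circuit, after which the factor $y - g_d(\vec{z})$ is extracted via Kaltofen's theorem (\autoref{thm:Kaltofen factoring}), $p$\ts{th} roots are taken with \autoref{cor:pth root of circuit}, and the exponents are tracked concretely to land below $s^{n-\eps}$ with $n = 2k^4$. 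You instead amplify the hardness exponent to an arbitrarily large constant $c$ (thereby also inflating the variable count to $m = 2ck/\delta$, which is what creates room for a constant $n_0 < 2^m$), keep the greedy Nisan--Wigderson designs of \autoref{lem:NW design}, and invoke the reconstruction lemma \autoref{lem:KI lemma} as a black box --- essentially rerunning \autoref{thm:KI hard vs rand} with constant-variate parameters tuned so that the seed length is $O(1)$ and the hitting set size $s^{(c'+2)\ell}$ drops below $s^{n_0-1}$. Your version is more modular (it reuses Section 4 wholesale); the paper's version buys cleaner constants and avoids the $(1+\ideg g)^{\log n}$ term of \autoref{lem:KI lemma} altogether. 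The one place your bookkeeping is slightly loose is exactly that term: your ``$s^{O(c')}$'' bound on the reconstructed circuit hides a factor of $\log n_0$ in the exponent, and $n_0$ is chosen \emph{after} $c$ with $n_0 = \Theta(c'\ell) = \poly(c)$, so the inequality you need is really $c'c > O(c'\log c) + O(ck/\delta)$ rather than $c'c > O(c') + O(ck/\delta)$. This still closes --- $\log n_0 = O(\log c)$ grows only logarithmically while the hardness exponent $c'c$ grows linearly in $c$ --- and your stated quantifier order ($c'$, then $c$ and $m$, then $n_0$) is the right one, but the dependence deserves to be made explicit. The handling of small fields (finite hence perfect, pass to a degree-$O(\log(s\Delta))$ extension for the interpolation grid, simulate back down via \autoref{lem:simulate extension}) matches the paper's.
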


\begin{proof}
	Using \autoref{lem:hardness exponent}, we may assume without loss of generality that $\delta \ge 30$.

	By \autoref{thm:bootstrapping}, it suffices to provide an explicit hitting set of size $s^{n - \eps}$ for $\mathcal{C}_{\F}(s,n,s)$ for constants $\eps, n$ and all $s$ sufficiently large.
	We will instantiate the Kabanets-Impagliazzo generator with $g_d$ as the hard polynomial, using the finer-grained designs of \autoref{lem:RS design}.

	Let $s$ be given.
	By adding auxiliary variables if necessary, we may assume that $k$ is a prime power.
	Note there is always a power of $2$ between $k$ and $2k$, so this at most doubles the number of variables in $g_d$.
	We set parameters as follows:
	\begin{itemize}
		\item
			$c \coloneqq 3$,
		\item
			$n \coloneqq 2k^{c+1} = 2 k^4$,
		\item
			$r \coloneqq 2$, and
		\item
			$d \coloneqq s^k$.
	\end{itemize}
	By \autoref{lem:RS design}, we can construct in $\poly(n)$ time a collection of sets $S_1,\ldots,S_n \subseteq [k^c]$ such that $|S_i| = k$ and $|S_i \cap S_j| \le r$.

	Consider the generator $\mathcal{G} : \F^{k^c} \to \F^n$ given by
	\[
		\mathcal{G}(\vec{z}) = (g_d(\vec{z}|_{S_1}),\ldots,g_d(\vec{z}|_{S_n})).
	\]
	By construction, $\mathcal{G}$ has seed length $k^c$ and degree $d = s^k$.
	Since $g_d$ is strongly $d^{O(k)}$-explicit, we can evaluate $\mathcal{G}$ by constructing the design $S_1,\ldots,S_n$, computing the coefficients of $g_d$, and evaluating each of the $n$ copies of $g_d$.
	Constructing the design takes $n^{O(1)}$ time and computing the coefficients of $g_d$ takes $d^{O(k)}$ time.
	To evaluate $g_d$, we use the expression of $g_d$ as a sum of monomials, which requires $d^{O(k)}$ time for each of the $n$ evaluations.
	In total, we can evaluate $\mathcal{G}$ in time
	\[
		n^{O(1)} \cdot d^{O(k)} = n^{O(1)} \cdot s^{O(k^2)} = n^{O(1)} \cdot s^{O(\sqrt{n})},
	\]
	so $\mathcal{G}$ is $s^{O(\sqrt{n})}$-explicit for $s$ sufficiently large.

	If $\mathcal{G}$ is in fact a hitting set generator for $\mathcal{C}_{\F}(s,n,s)$, then using \autoref{lem:hsg to hitting set}, we obtain a hitting set $\mathcal{H}$ for $\mathcal{C}_{\F}(s,n,s)$ of size
	\[
		(s \cdot d)^{k^c} = (s^{k+1})^{k^3} = s^{k^4 + k^3} \le s^{2 k^{4} - \eps} = s^{n - \eps}
	\]
	for some $\eps > 0$ when $s$ is large enough.
	Moreover, $\mathcal{H}$ is $s^{O(\sqrt{n})} \cdot |\mathcal{H}| \le s^{O(n)}$-explicit.
	We now apply \autoref{thm:bootstrapping} to obtain the claimed $s^{\exp \circ \exp(O(\log^\star s))}$-explicit hitting set for $\mathcal{C}_{\F}(s,s,s)$ of size $s^{\exp \circ \exp(O(\log^\star s))}$.
	It remains to show that $\mathcal{G}$ is indeed a hitting set generator for $\mathcal{C}_{\F}(s,n,s)$.

	To show this, suppose for the sake of contradiction that $\mathcal{G}$ is not a hitting set generator for $\mathcal{C}_{\F}(s,n,s)$. 
	Then there is some $f(\vec{y}) \in \mathcal{C}_{\F}(s,n,s)$ such that $f(\vec{y}) \neq 0$ and $f(\mathcal{G}(\vec{z})) = 0$.
	We define the hybrid polynomials $f_0,\ldots,f_n$ by
	\begin{align*}
		f_0(\vec{y},\vec{z}) &= f(y_1,\ldots,y_n) \\
		f_1(\vec{y},\vec{z}) &= f(g_d(\vec{z}|_{S_1}), y_2,\ldots, y_n) \\
		&\vdots \\
		f_{n-1}(\vec{y},\vec{z}) &= f(g_d(\vec{z}|_{S_1}),\ldots,g_d(\vec{z}|_{S_{n-1}}),y_n) \\
		f_n(\vec{y},\vec{z}) &= f(g_d(\vec{z}|_{S_1}),\ldots,g_d(\vec{z}|_{S_n})) = f(\mathcal{G}(\vec{z})).
	\end{align*}
	Since $f_0 \neq 0$ and $f_n = 0$, there is some $i \in [n]$ such that $f_{i-1} \neq 0$ and $f_i = 0$.
	Assuming $|\F| > sd \ge \deg(f_i)$, we can find an assignment to the variables $\set{y_j : j \neq i}$ and $\set{z_j : j \notin S_i}$ such that $f_i$ remains non-zero under this partial evaluation.
	If $\F$ is too small, we may find such an assignment using values from some finite extension $\F' \supseteq \F$ of size at least $sd + 1$ (and hence degree $O(\log(sd))$).
	After renaming variables, denote this non-zero restriction of $f_i$ by $\overline{f}(z_1,\ldots,z_k,y)$.

	We can compute $\overline{f}$ by composing the circuit for $f$ with at most $n-1$ copies of the partial evaluation of $g_d(\vec{z}|_{S_j})$ for $j < i$.
	By assumption, we can compute $f$ with a circuit of size $s$.
	Since $|S_j \cap S_i| \le 2$ for $j \neq i$, at most $2$ variables in $\vec{z}|_{S_j}$ are unset.
	This implies each restriction of $g_d(\vec{z}|_{S_j})$ is a polynomial of degree $d$ on 2 variables and thus can be computed by a depth-two circuit of size at most $d \cdot (d+1)^2$.
	This yields a circuit for $\overline{f}$ of size at most $s + n d\cdot (d+1)^2$.
	Note that the degree of $\overline{f}$ is bounded by $sd$, since $\overline{f}$ is the composition of two polynomials of degrees at most $s$ and $d$.

	By assumption, we have that $\overline{f}(z_1,\ldots,z_k,y) \neq 0$ and $\overline{f}(z_1,\ldots,z_k,g_d(\vec{z})) = 0$.
	This implies that $y - g_d(\vec{z})$ is a factor of $\overline{f}$.
	We now apply \autoref{thm:Kaltofen factoring} to factor the circuit for $\overline{f}$.
	\begin{itemize}
		\item
			If $\ch \F = p > 0$, we obtain a circuit for $(y - g_d(\vec{z}))^{p^t} = y^{p^t} - g_d(\vec{z})^{p^t}$ for some $t \in \naturals$.
			Since $y^{p^t} - g_d(\vec{z})^{p^t}$ is a factor of $\overline{f}(z_1,\ldots,z_k,y)$, we must have 
			\[
				d p^t = \deg(y^{p^t} - g_d(\vec{z})^{p^t}) \le \deg(\overline{f}) \le sd.
			\]
			This implies $p^t \le s$.
			Since $\overline{f}$ has degree $sd$ and is computable in size $s + O(n d^3)$, the circuit computing $y^{p^t} - g_d(\vec{z})^{p^t}$ has size at most $O((nsd)^{12})$.
			By setting $y = 0$ and negating the output of the circuit, we obtain a circuit for $g_d(\vec{z})^{p^t}$ of size $O((nsd)^{12})$.

			We now apply \autoref{cor:pth root of circuit} a total of $t$ times. 
			This produces a circuit which computes $g_d(\vec{z})$ and has size $O((nsd)^{12} p^{2kt} 2^{kt} 3^t) = O((nsd)^{12} s^{3k + 2})$.
			Here we use the fact that $p \ge 2$, so $2^{kt} \le p^{kt} \le s^k$ and $3^t \le 4^t \le p^{2t} \le s^2$.

			In the case where $|\F| > sd$, the circuit for $\overline{f}$ was defined over $\F$, so the circuit for $g_d$ is defined over $\mathbb{K} = \F^{p^{-\infty}}$.
			If instead $|\F| \le sd$, the circuit for $\overline{f}$ was defined over a finite extension $\F' \supseteq \F$ of degree $O(\log(sd))$.
			As $\F'$ is a finite field, $\F'$ is perfect, so the circuit obtained from \autoref{cor:pth root of circuit} is defined over $\F'$.
			We apply \autoref{lem:simulate extension} to simulate this circuit over $\F$, incurring an extra $O(\log^3(sd))$ factor in the circuit size.

			In total, we now have a circuit which computes $g_d$ over $\mathbb{K} = \F^{p^{-\infty}}$ and has size bounded by $O((nsd)^{12} s^{3k + 2} \log^3(sd))$.

		\item
			If $\ch \F = 0$, the previous case applies, but without the need to take a $p$\ts{th} root or simulate a field extension.
			This yields a circuit which computes $g_d(\vec{z})$ over $\mathbb{K} = \mathbb{F}$ and has size $O((nsd)^{12})$.

	\end{itemize}
	In both cases, we obtain a circuit which computes $g_d(\vec{z})$ over $\mathbb{K}$ and has size at most $O((nsd)^{12} s^{3k + 2}\log^3(sd))$.
	Restating in terms of $k$ and $d$, we have a circuit for $g_d$ of size
	\[
		O((nsd)^{12} s^{3k + 2}\log^3(sd)) = O(k^{48} s^{14+3k}d^{12} \log^3(d)) = O(k^{48} d^{15 + 14/k} \log^3(d)).
	\]
	Since $k \ge 1$ and $k$ is a constant, we can bound the size of the circuit computing $g_d$ by $O(d^{29} \log^3(d))$.
	This contradicts the fact that $g_d$ requires circuits over $\mathbb{K}$ of size $d^{\delta} \ge d^{30}$ for sufficiently large $d$.
	Hence $\mathcal{G}$ is in fact a hitting set generator for $\mathcal{C}_{\F}(s,n,s)$.
\end{proof}

\subsection{Comparison to Characteristic Zero} \label{subsec:compare to GKSS}

Over fields of characteristic zero, the recent work of \textcite{GKSS19} obtained what is currently the best-known derandomization of polynomial identity testing for $\mathcal{C}_{\F}(s,s,s)$ under a hardness assumption.
From an explicit family of $k$-variate degree $d$ polynomials of hardness $d^{\Omega(1)}$, they obtain an explicit hitting set for $\mathcal{C}_{\F}(s,s,s)$ of size $s^{O(1)}$.
Specifically, they prove the following theorem.

\begin{theorem}[\cite{GKSS19}] \label{thm:GKSS generator}
	Let $\F$ be a field of characteristic zero.
	Let $k \in \naturals$ be large enough and let $\delta > 0$ be a fixed constant.
	Suppose $\set{P_{k,d} \in \F[\vec{x}] : d \in \naturals}$ is a family of $d^{O(k)}$-explicit $k$-variate polynomials of degree $d$ such that $P_{k,d}$ cannot be computed by algebraic circuits of size smaller than $d^{\delta}$.
	Then there is an $s^{(k/\delta)^{O(1)}}$-explicit hitting set for $\mathcal{C}_{\F}(s,s,s)$ of size $s^{O(k^2/\delta^2)}$.
\end{theorem}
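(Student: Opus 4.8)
The plan is to follow the template of \autoref{thm:bootstrap generator} --- instantiate the Kabanets--Impagliazzo generator with the hard constant-variate family $\set{g_d}$ --- but to squeeze out a polynomial-size hitting set \emph{directly}, bypassing the $s^{\exp \circ \exp(O(\log^\star s))}$ loss that a black-box application of \autoref{thm:bootstrapping} would incur. Concretely, I would first apply \autoref{lem:kronecker} (working in a base chosen so that the individual degree of the resulting polynomial is a constant, in order to keep the $(1+\ideg g)^{\Theta(\log n)}$ factor of \autoref{lem:KI lemma} under control) to turn $g_d$ into a hard polynomial on $m = \Theta(k \log d)$ variables, and then feed this polynomial into the generator using the Reed--Solomon designs of \autoref{lem:RS design} with parameters $c, r = O(1)$ depending only on $k$ and $\delta$. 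Choosing $d$ to be a fixed power of $s$ of the form $s^{\Theta(k/\delta)}$ ensures that if some non-zero $f \in \mathcal{C}_\F(s,n,s)$ were killed by the generator, then \autoref{lem:KI lemma} would reconstruct a circuit of size $\poly(s) \ll d^{\delta}$ for $g_d$, contradicting the hardness hypothesis. Since $\ch \F = 0$, \autoref{lem:KI lemma} reconstructs $g_d$ itself rather than a $p$\ts{th} power, so none of the $p$\ts{th}-root machinery of \autoref{cor:pth root of circuit} is needed here; this is the only place the characteristic matters.

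As in \autoref{thm:bootstrap generator}, this one-shot construction cleanly fools $\mathcal{C}_\F(s,n,s)$ only for a \emph{constant} number of variables $n$, but --- crucially --- it yields a hitting set of size $s^{O(1)}$ rather than merely $s^{n - \eps}$; this is exactly where the strong $d^{\delta}$ hardness pays off relative to the weaker assumptions behind \autoref{cor:KI tradeoffs}. The remaining and main task is to amplify this to $\mathcal{C}_\F(s,s,s)$, and the plan is to run the bootstrapping recursion with a refined accounting: each round converts the current hitting-set generator into an explicit polynomial that is hard for circuits of a related size (the standard conversion of a hitting-set generator into a hard polynomial that underlies \autoref{thm:bootstrapping}), and then re-instantiates the Kabanets--Impagliazzo generator with this new, harder polynomial to handle many more variables. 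I expect this recursion-control step to be the main obstacle: one must show that, because the starting hitting set is already of polynomial size (not barely sub-trivial), the recursion terminates after a bounded number of rounds, and that each round inflates the exponent of the hitting-set size by only an additive constant --- keeping the degree of the generator polynomial in $s$ and the seed length bounded at every level --- so that the final exponent is $O(k^2/\delta^2)$.

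For explicitness, the family $\set{g_d}$ is $d^{O(k)}$-explicit and $d = s^{\Theta(k/\delta)}$, so at each level one computes the $d^{O(k)}$ coefficients of the relevant instance of $g_d$ and evaluates the generator, at total cost polynomial in $d^{O(k)}$ and in the size of the hitting set; tracking this through the bounded recursion gives the claimed $s^{(k/\delta)^{O(1)}}$-explicitness. The one-shot Kabanets--Impagliazzo step and the explicitness bookkeeping are routine given \autoref{lem:KI lemma}, \autoref{lem:RS design}, \autoref{lem:hsg to hitting set}, and \autoref{lem:kronecker}; the delicate heart of the proof is the quantitatively tight bootstrapping recursion sketched above, which replaces the generic $\log^\star$-depth recursion of \autoref{thm:bootstrapping} by one of bounded depth.
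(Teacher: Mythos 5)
This theorem is quoted from \textcite{GKSS19}; the paper does not prove it, so the relevant comparison is with the actual GKSS argument, and your route is not it. As the discussion in \autoref{subsec:compare to GKSS} notes, the GKSS generator is \emph{not} a variation on the Kabanets--Impagliazzo generator at all: it is a new construction, more algebraic than combinatorial in flavor, and it does not pass through the bootstrapping theorem. Your plan of ``KI generator on constant-variate inputs, then amplify via a refined bootstrapping recursion'' is essentially the route taken by \autoref{thm:bootstrap generator}, which is exactly why that theorem only achieves $s^{\exp\circ\exp(O(\log^\star s))}$ rather than $s^{O(1)}$.

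The genuine gap is your amplification step. You assert that because the seed hitting set is of polynomial size rather than barely sub-trivial, the bootstrapping recursion ``terminates after a bounded number of rounds'' with only an additive-constant loss in the exponent per round. That is not something the known bootstrapping analysis gives you, and it is precisely the open problem listed in \autoref{sec:conclusion}: whether the conclusion of \autoref{thm:bootstrapping} can be improved to a hitting set of size $s^{O(1)}$. The $\log^\star$ depth of the recursion in \textcite{KST19} is not an artifact of the seed hitting set being weak; it comes from the fact that each round of variable reduction roughly exponentiates the number of variables handled, so reaching $s$ variables from a constant requires $\Theta(\log^\star s)$ rounds no matter how small the initial hitting set is, and the exponent compounds across rounds rather than accumulating additively. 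Starting from an $s^{O(1)}$-size hitting set for constant $n$ does not shortcut this: the trivial grid already gives $s^n$ for constant $n$, and the whole difficulty of bootstrapping is the variable blow-up, not the initial exponent. So the ``delicate heart'' of your proof is an unproved (and possibly false) strengthening of the bootstrapping theorem, and the statement you are trying to prove would not follow from your outline. The one-shot KI step and the observation that characteristic zero removes the need for $p$\ts{th} roots are fine, but they only recover what \autoref{thm:bootstrap generator} already establishes.
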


We remark that \textcite{GKSS19} do not define the notion of explicitness they use in their result, but it is enough for $P_{k,d}$ to be computable by a uniform algorithm which runs in time $d^{O(k)}$.
This is slightly different from our notion of strong explicitness, where we require the coefficients of $P_{k,d}$ to be computable in $d^{O(k)}$ time.
It is clear that one can pass from strong explicitness to the standard notion of explicitness by computing a polynomial as a sum of monomials.
Via polynomial interpolation, one can show that polynomials which are ``evaluation-explicit'' are strongly explicit.
In both cases, the explicitness parameter may degrade considerably, as the number of terms in a polynomial may be exponentially larger than the amount of time required to compute the polynomial or one of its coefficients.
In general, one cannot hope to do better than this: in one direction, the coefficients of the permanent are easy to compute, but the permanent is widely conjectured to be hard to compute; in the other direction, there are examples of polynomials which are easy to compute but which have the permanent of a large matrix embedded in their coefficients (see, for example, \textcite[\textsection 2.3]{Bur00}).

In the context of \autoref{thm:bootstrap generator} and \autoref{thm:GKSS generator}, however, the two notions of explicitness coincide.
When working with $k$-variate polynomials of degree $d$, we incur an overhead of $d^{O(k)}$ in passing between the two notions of explicitness.
As the hypotheses of these theorems are already in the regime of (strong) $d^{O(k)}$-explicitness, the explicitness parameter changes by a polynomial factor, which is small enough to not affect the asymptotics of the results obtained.

The fact that the underlying field has characteristic zero is used in a key part of the proof of \autoref{thm:GKSS generator}, and it is not clear how to adapt the proof to fields of positive characteristic.
The generator used to design the hitting set in the conclusion of \autoref{thm:GKSS generator} is notably not a variation on the Kabanets-Impagliazzo generator, but instead a new generator whose construction is more algebraic than combinatorial in flavor.

Note that \autoref{thm:bootstrap generator} and \autoref{thm:GKSS generator} require the same hardness assumption.
This gives a second proof of derandomization of polynomial identity testing from an explicit family of hard constant-variate polynomials, although the derandomization we obtain is slightly weaker compared to \autoref{thm:GKSS generator}.
However, our construction does not require the characteristic of the underlying field to be zero.
It is tempting to conjecture that one can recover the conclusion of \autoref{thm:GKSS generator} in positive characteristic by improving the bootstrapping process used to prove \autoref{thm:bootstrapping}.
It is unclear whether such a result is possible.

\section{Relating Constant-Variate and Multivariate Lower Bounds} \label{sec:kronecker}

This work and the work of \textcite{GKSS19} have shown that lower bounds against (strongly) explicit constant-variate polynomials yield very strong derandomizations of polynomial identity testing.
We are able to give an explicit hitting set of size $s^{\exp \circ \exp(O(\log^\star s))}$ for $\mathcal{C}_{\F}(s,s,s)$ for any field $\F$ (this is \autoref{thm:bootstrap generator}), while \textcite{GKSS19} obtain explicit hitting sets of size $s^{O(1)}$ for the same class when $\ch\F = 0$.
However, if one instead assumes the existence of a (strongly) explicit family of maximally-hard multivariate polynomials of low degree (specifically, degree $n^{O(1)}$ where $n$ is the number of variables), it is not clear how to obtain similar derandomization results.
The best-known derandomization from multivariate lower bounds is that of \textcite{KI04}, who gave an explicit hitting set of size $s^{O(\log s)}$ for $\mathcal{C}_{\F}(s,s,s)$.

The fact that we can obtain strong derandomizations of polynomial identity testing from constant-variate hardness raises the question of whether or not such derandomization is possible under multivariate hardness assumptions.
A natural first approach to this would be to show that lower bounds for a (strongly) explicit family of multivariate polynomials imply comparable lower bounds against a (strongly) explicit family of constant-variate polynomials.
Such an implication is known in the setting of non-commutative circuits and is due to \textcite{CILM18}.

It is not hard to show a connection in the other direction; that is, lower bounds against strongly explicit families of constant-variate polynomials can be translated into comparable lower bounds against strongly explicit families of multivariate polynomials.
An easy way to do this is via the approach of \autoref{lem:kronecker}.

In this section, we investigate to what extent a converse to \autoref{lem:kronecker} may hold.
Unconditionally refuting the converse of \autoref{lem:kronecker} requires proving circuit lower bounds that seem far out of reach, so we have little hope to fully resolve this question.
However, we can give some complexity-theoretic evidence which shows a converse to \autoref{lem:kronecker} is unlikely to hold.
To do this, we take a detour into the arithmetic complexity of integers.

\subsection{Complexity of Computing Integers}

We start by defining the model we use to compute sequences of integers.

\begin{definition}
	For a natural number $n \in \naturals$, let $\tau(n)$ denote the size of the smallest circuit which computes $n$ using the constant $1$ and the operations of addition, subtraction, and multiplication.
	Let $(a_n)_{n \in \naturals}$ be a sequence of natural numbers.
	If $\tau(a_n) \le \log^{O(1)}n$, then we say $(a_n)_{n \in \naturals}$ is \emph{easy to compute}.
	Otherwise, we say $(a_n)_{n \in \naturals}$ is \emph{hard to compute}.
\end{definition}

As an example, the sequence $(2^n)_{n \in \naturals}$ is easy to compute, as we can compute $2^n$ in $O(\log n)$ arithmetic steps by repeated squaring.
A major open problem in this area is to understand $\tau(n!)$, the complexity of the sequence of factorials.
The following conjecture regarding $\tau(n!)$ appears to be folklore. 

\begin{conjecture} \label{conj:factorials are hard}
	The sequence of factorials $(n!)_{n \in \naturals}$ is hard to compute.
\end{conjecture}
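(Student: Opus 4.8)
\emph{Proof proposal.} Conjecture~\ref{conj:factorials are hard} asks for a super-polylogarithmic lower bound on $\tau$ against an explicit sequence of \emph{moderate} growth --- the $n$-th term has only $\Theta(n \log n)$ bits, so no counting or size argument forces hardness --- and it is therefore, exactly like the problem of beating the trivial $\Omega(\log d)$ bound for an explicit family of univariate polynomials discussed earlier in this paper, presumably out of reach unconditionally. The realistic target is instead a \emph{conditional} hardness statement, and the plan is to prove the contrapositive of the following: \emph{if $(n!)_{n \in \naturals}$ can be computed by a polynomial-time-uniform family of $\{+,-,\times\}$-circuits of size $\log^{O(1)} n$, then integer factorization is in $\mathsf{P}$}. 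This refutes the conjecture only at the price of a polynomial-time factoring algorithm, i.e.\ of breaking factoring-based cryptography, which is the standard evidence behind the folklore belief in \autoref{conj:factorials are hard}.

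The reduction is classical and short, and I would spell it out as follows. Fix the hypothesis as a polynomial-time algorithm $\mathcal{A}$ that, on input $n$, outputs a $\{+,-,\times\}$-circuit of size $\log^{O(1)}n$ computing the integer $n!$. The number-theoretic core is the observation that, for a composite $m$ with least prime factor $p$, one has $\gcd(j!, m) = 1$ for every $j < p$ (all prime factors of $j!$ are $\le j < p$, all prime factors of $m$ are $\ge p$), while $\gcd(p!, m) = p$ exactly (the only common prime is $p$, and $p$ divides $p!$ to the first power). Since $p \le \lfloor \sqrt m \rfloor$ and the predicate ``$\gcd(j!,m) > 1$'' is monotone in $j$, a binary search over $j \in \{1, \ldots, \lfloor\sqrt m\rfloor\}$ locates $p$ using only $O(\log m)$ evaluations of the residue $j! \bmod m$. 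The only obstruction to running this directly is that $j!$ has $\Theta(\sqrt m \log m)$ bits; this is removed by the standard modular-evaluation fact that a size-$s$ $\{+,-,\times\}$-circuit computing an integer $M$ can be executed with every intermediate value reduced modulo $m$, yielding $M \bmod m$ in $\poly(s, \log m)$ bit operations (reduction $\integers \to \integers/m\integers$ is a ring homomorphism). Feeding $\mathcal{A}(j)$ --- of size $\log^{O(1)} j = \log^{O(1)} m$ --- into this procedure computes $j! \bmod m$ in $\poly(\log m)$ time; one Euclidean $\gcd$ with $m$ then completes a search step. Running a deterministic primality test first, recursing on the cofactor, and observing that the whole computation is uniform gives a polynomial-time factoring algorithm. (The point is precisely that the hypothesis replaces the unconditional $\widetilde{O}(\sqrt{j})$ cost of Pollard--Strassen evaluation of $j! \bmod m$ by a polylogarithmic one, which is enough to make the classical ``factoring by factorials'' method run in polynomial time.)

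I would then record the caveats and the limits of this approach. First, the literal statement of \autoref{conj:factorials are hard} concerns \emph{non-uniform} $\tau$, and the reduction above uses $O(\log m)$ \emph{adaptive} queries to $\mathcal{A}$; making these non-adaptive with only $\poly(\log m)$ advice is not obviously possible (for a product of two primes of nearly equal size, no advice-independent choice of factorials separates them), so for the non-uniform conjecture this argument gives only softer evidence. A second, independent circle of ideas is the $\tau$-conjecture of Shub and Smale, which is closely tied to the hardness of $(n!)_{n\in\naturals}$ and to separations in the Blum--Shub--Smale model over $\mathbb{C}$. Finally, the genuine obstacle to anything unconditional is the complete absence of lower-bound techniques for the integer circuit model underlying $\tau(\cdot)$: degree, dimension, and partial-derivative methods are all either vacuous here or not known to apply, and no explicit sequence of moderate growth is provably hard; moreover a proof would in particular yield a constant-free version of the $\mathsf{VP} \neq \mathsf{VNP}$ separation (the permanent of the all-ones matrix is $n!$), so \autoref{conj:factorials are hard} is at least as hard as that notorious open problem. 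For these reasons I would not attempt an unconditional proof; the deliverable is the conditional factoring reduction above, together with the remark that it is the integer analogue of the univariate lower-bound barrier already present in this work.
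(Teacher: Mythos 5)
This statement is a conjecture, and the paper offers no proof of it---it only cites the standard folklore evidence (the Blum--Cucker--Shub--Smale observation that $\tau(n!) \le \log^{O(1)} n$ would yield small circuits for factoring, together with Shamir's reduction of factorials to central binomial coefficients and factoring), which is precisely the conditional factoring reduction you reconstruct, including the correct caveat that the non-uniform, adaptive-query issues keep it from being a refutation-in-reverse of the literal non-uniform statement. So your proposal accurately identifies the status of the statement and takes essentially the same route as the paper's own discussion.
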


Prior work has established relationships between \autoref{conj:factorials are hard} and other prominent conjectures in computational complexity.
\textcite[page 126]{BCSS98} gave an argument that shows if $\tau(n!) \le \log^{O(1)} n$, then there are circuits of $\log^{O(1)} n$ size to factor $n$.
A related work by \textcite{Sha79} reduces factorization to computing factorials, albeit in a slightly different model.
\textcite{Bur09} showed that \autoref{conj:factorials are hard} implies that the $n \times n$ permanent cannot be computed by constant-free division-free algebraic circuits of size $n^{O(1)}$.
Work by \textcite{Lip94} shows that average-case hardness of factoring implies a slightly weaker form of \autoref{conj:factorials are hard}; namely, that the polynomial $\prod_{i=1}^n (x - i)$ is hard to compute by constant-free algebraic circuits.

Before moving on to address the question of a converse to \autoref{lem:kronecker}, we present a reduction due to \textcite{Sha79} which reduces the task of computing $n!$ to the task of computing $\binom{2n}{n}$.

\begin{lemma}[\cite{Sha79}] \label{lem:Shamir reduction}
	If $(\binom{2n}{n})_{n \in \naturals}$ is easy to compute, then $(n!)_{n \in \naturals}$ is easy to compute.
\end{lemma}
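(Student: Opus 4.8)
The plan is to build $n!$ one binary digit at a time, spending exactly one central binomial coefficient per digit. Write the binary expansion of $n$ as $(b_{k-1}\cdots b_1 b_0)_2$ with $k = O(\log n)$, and set $n_j \coloneqq \lfloor n/2^j \rfloor$, so that $n_0 = n$, $n_k = 0$, and $n_{j+1} = \lfloor n_j/2 \rfloor$ with $n_j = 2 n_{j+1} + b_j$. The key observation is the recursion
\[
	(n_j)! = \begin{cases}
		\binom{2 n_{j+1}}{n_{j+1}} \cdot \del{(n_{j+1})!}^2 & \text{if } b_j = 0, \\
		n_j \cdot \binom{2 n_{j+1}}{n_{j+1}} \cdot \del{(n_{j+1})!}^2 & \text{if } b_j = 1,
	\end{cases}
\]
which follows at once from the identity $(2m)! = \binom{2m}{m}(m!)^2$ together with $(2m+1)! = (2m+1)(2m)!$, using that $2 n_{j+1} = n_j$ when $b_j = 0$ and $2 n_{j+1} + 1 = n_j$ when $b_j = 1$.

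Given this recursion, I would construct the circuit for $n!$ as follows. Starting from $(n_k)! = 0! = 1$, iterate $j$ from $k-1$ down to $0$; at each step, compute the integer constant $n_{j+1}$ (and $n_j$ when $b_j = 1$), which costs $O(\log n)$ arithmetic operations by repeated doubling; compute $\binom{2 n_{j+1}}{n_{j+1}}$, which by the hypothesis that $(\binom{2n}{n})_{n\in\naturals}$ is easy to compute costs at most $\log^{O(1)}(n_{j+1}) \le \log^{O(1)} n$ operations; and then combine with the previously computed value of $(n_{j+1})!$ via the recursion above using $O(1)$ further operations. Summing over the $k = O(\log n)$ iterations, the total circuit size is $O(\log n) \cdot \log^{O(1)} n = \log^{O(1)} n$, so $\tau(n!) \le \log^{O(1)} n$, i.e.\ $(n!)_{n \in \naturals}$ is easy to compute.

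The only real content is identifying the right recursion. The naive identity $(2m)! = \binom{2m}{m}(m!)^2$ on its own only lets one compute $(2^k)!$ cheaply (by squaring), and recovering $n!$ from $(2^k)!$ for a general $n$ in the range $[2^{k-1}, 2^k)$ appears to require a product of $\Theta(2^k)$ consecutive integers, which is far too expensive. Peeling off one bit of $n$ at a time via $n_{j+1} = \lfloor n_j/2 \rfloor$ circumvents this: each bit contributes only a single central binomial coefficient, evaluated at an index $n_{j+1}$ that we can simply hardwire, plus at most one extra multiplicative factor $n_j$. I do not anticipate any genuine obstacle beyond writing down this recursion, verifying the two cases, and bookkeeping the circuit-size accounting.
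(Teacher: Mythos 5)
Your proposal is correct and is essentially the paper's own argument: the bit-peeling recursion $(n_j)! = \binom{2n_{j+1}}{n_{j+1}}((n_{j+1})!)^2$ (times $n_j$ when the bit is $1$) is exactly the even/odd identity the paper uses, and the size accounting ($O(\log n)$ levels, each costing $\log^{O(1)} n$) matches as well.
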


\begin{proof}
	Suppose $\tau\del{\binom{2n}{n}} \le O(\log^c n)$.
	Recall the identity
	\[
		n! = 
		\begin{cases}
			((n/2)!)^2 \cdot \binom{n}{n/2}& \text{$n$ is even} \\
			n \cdot ((\frac{n-1}{2})!)^2 \cdot \binom{n-1}{(n-1)/2} & \text{$n$ is odd}. 
		\end{cases}
	\]
	This implies
	\[
		\tau(n!) \le \tau(n) + \tau((\floor{n/2}!)^2) + \tau\del{\binom{2 \cdot \floor{n/2}}{\floor{n/2}}}.
	\]
	Expanding out the recurrence and using the fact that $\tau((\floor{n/2}!)^2) \le \tau(\floor{n/2}!) + 1$, we get
	\begin{align*}
		\tau(n!) &\le \sum_{i=1}^{\log n} \sbr{\tau(\floor{n/2^i}) + \tau\del{\binom{2 \cdot \floor{n/2^{i+1}}}{\floor{n/2^{i+1}}}} + 1} \\
		& \le \log n \cdot \del{O(\log n) + O(\log^c n) + 1} \\
		& \le O(\log^{c+1} n).
	\end{align*}
	Hence $(n!)_{n \in \naturals}$ is easy to compute.
\end{proof}

\subsection{The Inverse Kronecker Map and Constant-Free Circuits}

Here, we show that two forms of a converse to \autoref{lem:kronecker} refute \autoref{conj:factorials are hard} to varying degrees.
Our first argument shows that a straightforward converse of \autoref{lem:kronecker} implies that \autoref{conj:factorials are hard} fails infinitely often.
That is, suppose $g(x)$ is a univariate degree $d$ polynomial and $f(\vec{y})$ is a multilinear polynomial which simplifies to $g(x)$ under the mapping $y_i \mapsto x^{2^i}$.
\autoref{lem:kronecker} says that hardness of $g(x)$ implies hardness of $f(\vec{y})$.
The following conjecture, which we wish to conditionally refute, says that hardness of $f(\vec{y})$ implies hardness of $g(x)$.

\begin{conjecture} \label{conj:inverse kronecker}
	Let $g_{m,d}(\vec{x}) = \sum_{\vec{a}} \alpha_{\vec{a}} \vec{x}^{\vec{a}}$ be an $m$-variate degree $d$ polynomial.
	Let $j : \bits^{\floor{\log d}+1} \to \llb 2^{\floor{\log d}+1} \rrb$ be given by $j(\vec{e}) = \sum_{i=1}^{\floor{\log d}+1} \vec{e}_i 2^{i-1}$.
	That is, $j(\vec{e})$ is the number whose binary representation corresponds to $\vec{e}$.
	Let $\vec{y} = (y_{1,1},\ldots,y_{1,\floor{\log d}+1},\ \ldots\ ,y_{m,1},\ldots,y_{m,\floor{\log d}+1})$ and define
	\[
		f_{m,d}(\vec{y}) = \sum_{\vec{e} \in \bits^{m \times \floor{\log d}+1}} \alpha_{(j(\vec{e}_{1,\bullet}),\ldots,j(\vec{e}_{m,\bullet}))} \vec{y}^{\vec{e}}.
	\]
	Suppose $f_{m,d}$ requires constant-free circuits of size $s$ to compute.
	Then $g_{m,d}$ requires constant-free circuits of size $s^{\Omega(1)} - \Theta(m \log d)$ to compute.
\end{conjecture}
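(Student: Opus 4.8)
The natural line of attack is the contrapositive: given a constant-free circuit of size $t$ for $g_{m,d}(\vec x)$, produce a constant-free circuit of size $\mathrm{poly}(t, m\log d)$ for $f_{m,d}(\vec y)$. Since $f_{m,d}$ and $g_{m,d}$ carry the same coefficient array $(\alpha_{\vec a})$, merely re-indexed by the Kronecker map $j$, the task is to extract the coefficients of $g_{m,d}$ from a circuit computing it and then re-assemble them with each monomial $\vec x^{\vec a}$ replaced by $\vec y^{\vec e}$ for $\vec e = j^{-1}(\vec a)$. The re-assembly is constant-free and each $\vec y^{\vec e}$ costs only $O(m\log d)$ gates, but there are $d^{\Theta(m)}$ of them, so the first subtlety is that a monomial-by-monomial reconstruction is already too expensive: the extraction procedure itself must hand back $f_{m,d}$ in structured (small-circuit) form, not as an explicit sum of monomials.

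First I would reduce to the univariate case $m = 1$ (the general case follows by treating $g_{m,d}$ as univariate in $x_1$ with circuit-valued coefficients and nesting $m$ times) and attempt to de-Kronecker $g(x) = \sum_{a=0}^{d}\alpha_a x^a$ into the multilinear $f(y_1,\dots,y_n)$, $n = \lfloor\log d\rfloor+1$, with $g(x) = f(x, x^2, \dots, x^{2^{n-1}})$. A divide-and-conquer on the most significant bit is appealing: write $g = g_0 + x^{2^{n-1}} g_1$ with $\deg g_0, \deg g_1 < 2^{n-1}$, so that $f = f_0 + y_n f_1$ where $f_0, f_1$ de-Kronecker $g_0, g_1$; recursing over the $n$ bits builds $f$ in $n$ levels. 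Each level needs only truncation modulo $x^{2^{n-1}}$ and division by $x^{2^{n-1}}$ of a circuit-computed polynomial, and both are constant-free, introducing no new field elements — which is the encouraging part of the picture.

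The main obstacle is cost, and it is fatal in the obvious implementation. Truncating a size-$t$, degree-$d$ circuit to its part of degree $< D$ (equivalently, extracting homogeneous components) is standard but inflates the size by a factor polynomial in $D$, hence polynomial in $d$, whereas the target bound $\mathrm{poly}(t, m\log d)$ is only polylogarithmic in $d$. The same blow-up plagues the Lagrange-interpolation route $\alpha_a = \sum_{k=0}^{d} L_{a,k}\, g(k)$, which uses $d+1$ evaluation points; worse, the weights $L_{a,k}$ have denominators dividing $d!$, so a constant-free circuit can only output $d!\cdot f$, and stripping off the factor $d!$ would require dividing by $d!$ — impossible for a constant-free circuit unless $\tau(d!)$ is small in a very strong sense. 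I expect any serious proof of \autoref{conj:inverse kronecker} to run straight into this factorial obstruction, which is exactly what ties it to \autoref{conj:factorials are hard} and to the Shamir-style reduction of \autoref{lem:Shamir reduction}: making the de-Kroneckering succinct \emph{and} constant-free would, in effect, amount to showing that factorials are easy, so the interplay with $\tau(n!)$ is the genuine crux, and it is why the paper's resolution runs in the opposite direction — turning \autoref{conj:inverse kronecker} into evidence against \autoref{conj:factorials are hard} rather than into a theorem.
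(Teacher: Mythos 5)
You have correctly recognized the essential point: the statement is a \emph{conjecture}, and the paper contains no proof of it. There is therefore no ``paper proof'' for your proposal to match; the paper's only engagement with \autoref{conj:inverse kronecker} is to give conditional evidence \emph{against} it, showing in \autoref{thm:factorials are easy i-o} that if it holds over $\rationals$ then the factorials are easy to compute infinitely often (and, for the all-bases strengthening \autoref{conj:strong inverse kronecker}, that the factorials are easy outright via \autoref{lem:Shamir reduction}). Your analysis of why the obvious contrapositive attack fails is sound and consistent with that treatment: extracting the needed truncations or homogeneous components of a size-$t$, degree-$d$ circuit costs a factor polynomial in $d$, which is exponentially worse than the required $\poly(t, m\log d)$, and the interpolation route both pays the same $d^{\Theta(1)}$ evaluation cost and introduces denominators dividing $d!$ that a constant-free circuit cannot remove unless $\tau(d!)$ is small. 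One small calibration: over $\rationals$ the paper's actual consequence of \autoref{conj:inverse kronecker} is only that $(n!)$ is easy at the points $n = 2^k$ (easiness infinitely often), so ``would amount to showing that factorials are easy'' is precisely what the paper derives only from the stronger \autoref{conj:strong inverse kronecker}; and note that the paper's implication is extracted from a single structured instance ($g(x) = (x+1)^{2^n}$, reading off a central binomial coefficient), not from a general de-Kroneckering procedure. With that nuance, your assessment of where the difficulty lies, and of why the paper runs the implication in the opposite direction, matches the paper.
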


We now show that \autoref{conj:inverse kronecker} implies the factorials are easy to compute infinitely often.

\begin{theorem} \label{thm:factorials are easy i-o}
	Suppose \autoref{conj:inverse kronecker} holds over $\rationals$.
	Then the sequence of factorials $(n!)_{n \in \naturals}$ is easy to compute infinitely often.
\end{theorem}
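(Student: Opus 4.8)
The plan is to run the argument with the concrete univariate family $g_d(x) = (1+x)^d$, which is maximally easy: forming $1+x$ and squaring $\lceil\log_2 d\rceil$ times is a constant-free circuit of size $O(\log d)$, so $\tau\bigl((1+x)^d\bigr) = O(\log d)$. First I would apply \autoref{conj:inverse kronecker} with $m=1$ to this $g_d$. The associated multilinear polynomial $f_{1,d}$ lives on $N = \floor{\log d}+1$ variables $y_1,\dots,y_N$, and by construction the coefficient of the monomial $\vec y^{\vec e}$ in $f_{1,d}$ equals $\binom{d}{j(\vec e)}$. Reading \autoref{conj:inverse kronecker} in contrapositive form — it asserts $\mathrm{CC}(g_{1,d}) \ge \mathrm{CC}(f_{1,d})^{\Omega(1)} - \Theta(\log d)$ for constant-free circuit complexity — the bound $\mathrm{CC}(g_{1,d}) = O(\log d)$ forces $\mathrm{CC}(f_{1,d}) = \poly(\log d)$, i.e.\ $f_{1,d}$ has a constant-free circuit of size $\poly(\log d)$.

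The point of passing through $f_{1,d}$ is \emph{cheap coefficient extraction}, which works once we specialize $n$ to a power of $2$. Take $n = 2^k$ and $d = 2n = 2^{k+1}$. Then the middle binomial coefficient $\binom{2n}{n} = \binom{d}{2^k}$ is the coefficient of the degree-one monomial $y_{k+1}$ in $f_{1,d}$, since $j(\vec e) = 2^k$ exactly when $\vec e$ is the indicator of coordinate $k+1$. As $f_{1,d}$ is multilinear, setting every variable other than $y_{k+1}$ to $0$ leaves the linear polynomial $\binom{d}{0} + \binom{d}{2^k}\,y_{k+1} = 1 + \binom{2n}{n}\,y_{k+1}$, so $\binom{2n}{n}$ is the difference of the two evaluations of $f_{1,d}$ at the $0/1$ points $e_{k+1}$ and $\vec 0$. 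Substituting the constants $0$ and $1$ (available as $1-1$ and $1$) into the size-$\poly(\log d)$ constant-free circuit for $f_{1,d}$ and subtracting gives a constant-free circuit of size $\poly(\log d) = \poly(k)$ computing $\binom{2n}{n}$. Hence $\tau\bigl(\binom{2\cdot 2^k}{2^k}\bigr) \le \poly(k)$ for every $k \in \naturals$.

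Finally I would feed this into the Shamir recursion used to prove \autoref{lem:Shamir reduction}, run along the subsequence $n = 2^j$. Unrolling $n! = ((n/2)!)^2\binom{n}{n/2}$ starting from $N = 2^j$ (so we never hit the odd case), every binomial coefficient that appears has the form $\binom{2^\ell}{2^{\ell-1}} = \binom{2\cdot 2^{\ell-1}}{2^{\ell-1}}$ with $2^{\ell-1}$ a power of $2$ — exactly the coefficients made easy above — and every integer $\floor{2^j/2^i} = 2^{j-i}$ occurring has $\tau$-cost $O(j)$. Summing the $O(j)$ terms of the recurrence gives $\tau\bigl((2^j)!\bigr) \le \poly(j) = \poly(\log(2^j))$. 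Thus $(n!)_{n \in \naturals}$ is easy to compute along $n = 2^j$, i.e.\ infinitely often, which is the claim.

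The hard part is really the coefficient extraction, and this is exactly why the detour through \autoref{conj:inverse kronecker} is needed rather than arguing directly from the trivial circuit for $(1+x)^{2n}$: extracting the coefficient of $x^n$ from a small circuit for a degree-$2n$ univariate polynomial is not known to be cheap, whereas after applying the conjecture the quantity we want sits as a \emph{degree-one} coefficient of a multilinear polynomial — and degree-one coefficients of multilinear polynomials are recovered by two evaluations. Restricting to $n$ a power of $2$ is essential precisely to make the target a degree-one coefficient; the only remaining subtlety is the bookkeeping check that Shamir's recursion, started at a power of $2$, stays within the powers of $2$, so that it invokes only the binomial coefficients we have shown to be easy.
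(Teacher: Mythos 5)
Your proposal is correct and is essentially the paper's own proof: both instantiate the conjecture's contrapositive on $(1+x)^{2^n}$, read off the central binomial coefficient $\binom{2^n}{2^{n-1}}$ from the multilinear lift by evaluating at the indicator of a single coordinate (your two-evaluation phrasing is equivalent to the paper's ``subtract $1$''), and then run Shamir's recursion along the powers of two. No gaps.
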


\begin{proof}
	It is easy to see that $\sum_{i=0}^{2^n} \binom{2^n}{i} x^i = (x+1)^{2^n}$ is computable by a constant-free algebraic circuit of size $O(n)$ via repeated squaring.
	Let
	\[
		f_n(\vec{y}) = \sum_{\vec{e} \in \bits^{n+1}} \binom{2^n}{j(\vec{e})} \vec{y}^{\vec{e}}.
	\]
	The contrapositive of \autoref{conj:inverse kronecker} yields a constant-free circuit of size $O(n^c)$ which computes $f_n$ for some absolute constant $c$.
	Let $a_{n-1} = 1$ and $a_0 = \cdots = a_{n-2} = a_n = 0$.
	Then $f_n(\vec{a}) = \binom{2^n}{2^{n-1}} + 1$.
	By evaluating the circuit for $f_n$ at $\vec{a}$ and subtracting $1$, we obtain a circuit of size $O(n^c)$ which computes $\binom{2^n}{2^{n-1}}$.
	
	We now follow the argument of \autoref{lem:Shamir reduction} to construct circuits of size $O(n^{c+1})$ to compute $(2^n!)_{n \in \naturals}$.
	By definition, we have
	\begin{align*}
		2^n! &= \binom{2^n}{2^{n-1}} (2^{n-1}!)^2 \\
		&= \binom{2^n}{2^{n-1}} \binom{2^{n-1}}{2^{n-2}}^2 (2^{n-2}!)^4 \\
		&\vdots \\
		&= \prod_{i=0}^{n-1} \binom{2^{n-i}}{2^{n-i-1}}^{2^i}.
	\end{align*}
	Using the fact that we fact that we can compute $\binom{2^n}{2^{n-1}}$ by a circuit of size $O(n^c)$, we obtain
	\[
		\tau(2^n!) \le \sum_{i=0}^{n-1} \tau\del{\binom{2^{n-i}}{2^{n-i-1}}^{2^i}} \le \sum_{i=0}^{n-1} O(n^{c+1}) \le O(n^{c+2}).
	\]
	Hence the factorials are easy to compute infinitely often.
\end{proof}

It is unclear whether there is meaningful evidence to suggest that the factorials are not easy to compute at numbers of the form $2^n$.
Because of this, \autoref{thm:factorials are easy i-o} may be best viewed as evidence that if \autoref{conj:inverse kronecker} is true, the proof will not be straightforward.

\autoref{conj:inverse kronecker} can be seen as a base-two converse to \autoref{lem:kronecker}.
Instead, we might consider the following strengthening of \autoref{conj:inverse kronecker} to all number bases.

\begin{conjecture} \label{conj:strong inverse kronecker}
	Let $g_{m,d}(\vec{x}) = \sum_{\vec{a}} \alpha_{\vec{a}} \vec{x}^{\vec{a}}$ be an $m$-variate degree $d$ polynomial.
	Let $k \in \naturals$ and let $j : \llb k \rrb^{\floor{\log_k d}+1} \to \llb k^{\floor{\log_k d}+1} \rrb$ be given by $j(\vec{e}) = \sum_{i=1}^{\floor{\log_k d}+1} \vec{e}_i k^{i-1}$, that is, $j(\vec{e})$ is the number whose base-$k$ representation corresponds to $\vec{e}$.
	Let $\vec{y} = (y_{1,1},\ldots,y_{1,\floor{\log_k d}+1},\ \ldots\ ,y_{m,1},\ldots,y_{m,\floor{\log_k d}+1})$ and define
	\[
		f_{m,d}(\vec{y}) = \sum_{\vec{e} \in \llb k \rrb^{m \times \floor{\log_k d}+1}} \alpha_{(j(\vec{e}_{1,\bullet}),\ldots,j(\vec{e}_{m,\bullet}))} \vec{y}^{\vec{e}}.
	\]
	Suppose $f_{m,d}$ requires constant-free circuits of size $s$ to compute.
	Then $g_{m,d}$ requires constant-free circuits of size $s^{\Omega(1)} - \Theta(m \log d)$ to compute.
\end{conjecture}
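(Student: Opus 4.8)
The plan is to prove the contrapositive: given a constant-free circuit $\Phi$ of size $t$ that computes $g_{m,d}(\vec{x})$, construct a constant-free circuit of size $\poly(t,m\log d)$ computing its base-$k$ digit expansion $f_{m,d}(\vec{y})$. The natural template is the mod-$p$ decomposition of \autoref{sec:pth root}: there, one decomposes a polynomial in characteristic $p$ along the base-$p$ expansion of its exponent vectors, and the decomposition threads through a circuit gate-by-gate because $(a+b)^p = a^p + b^p$ makes the carry bookkeeping free. Over $\rationals$ I would attempt the analogous decomposition, peeling off one base-$k$ digit at a time. Set $N := \lfloor \log_k d\rfloor + 1$ and process the digit positions $(i,j)$, $i \in [m]$, $j \in \llb N \rrb$, in a fixed order. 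At the step handling position $(i,j)$ the current polynomial $h$ is one in which the role of $x_i$ is played by a variable $u$ standing for $x_i^{k^j}$; split the $u$-adic expansion $h = \sum_{r=0}^{k-1} u^r\, h_r(u^k,\dots)$, introduce $y_{i,j}$ as the new variable that tracks the exponent $r$, recurse on each $h_r$ with a fresh variable playing the role of $u^k = x_i^{k^{j+1}}$, and finally assemble $f_{m,d}$ from the leaves of this branching recursion. Unlike \autoref{lem:kronecker}, this handles the general multivariate $g_{m,d}$ uniformly, with no reduction to the univariate case.

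Carrying this out reduces to implementing two operations on circuits: first, \emph{residue-class extraction}, i.e. from a circuit for $h$ obtain circuits for the slices $h^{(r)}$ supported on $u$-exponents $\equiv r \bmod k$; and second, re-indexing, i.e. passing from $h^{(r)}$ (which is divisible by $u^r$) to the polynomial $h_r$ obtained by factoring out $u^r$ and substituting a fresh variable for $u^k$, which is another exponent-shifting operation of the same flavor and can be folded into the first. The size budget is favorable in principle: a constant-free circuit of size $t$ uses at most $t$ variables and has degree at most $2^t$, so $m = O(t)$, $\log d = O(t)$, and $N = O(t)$; hence it would suffice for each of the $O(t^2)$ peeling steps to blow up the circuit by a factor $\poly(\log d) = \poly(t)$, keeping the total size $\poly(t)$ as required.

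The main obstacle is exactly residue-class extraction. Over a field containing a primitive $k$\ts{th} root of unity $\omega$ one has the clean identity $u^r h^{(r)}(u^k,\dots) = \frac1k\sum_{\ell=0}^{k-1}\omega^{-r\ell}\, h(\omega^\ell u,\dots)$, which yields a circuit for each slice of size $O(k)$ times that of $h$; but this is unavailable in the constant-free model over $\rationals$, not because of $\omega$ (arithmetic in $\rationals[\omega]\cong\rationals[t]/\Phi_k(t)$ can be simulated over $\rationals$ using only the integer coefficients of the cyclotomic polynomial) but because of the normalizing $\frac1k$, which is a division and leaves us with a circuit for $k^{\Theta(t^2)} f_{m,d}$ that cannot be rescaled constant-free. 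The only unconditional substitute I know is to homogenize in a fresh variable $t'$ (replace $x_i$ by $t' x_i$), recover the coefficient of each power of $t'$ by interpolation at $d+1$ points, and regroup by residue class; this incurs a multiplicative blow-up of $\poly(d) = 2^{\Theta(t)}$ per step, far outside the budget, and the Lagrange weights are badly non-integral. In characteristic $p$ the Frobenius endomorphism performs exactly the $k = p$ instance of residue-class extraction for free, which is what powers \autoref{cor:pth root of circuit}; over $\rationals$ there is no analogous free operation. I expect this gap — efficient, constant-free residue-class extraction, equivalently efficiently inverting the Kronecker substitution — to be the essential reason the statement is posed as a conjecture rather than proved, consistent with the conditional evidence assembled in \autoref{thm:factorials are easy i-o}.
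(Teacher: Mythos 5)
There is nothing in the paper to compare your argument against: \autoref{conj:strong inverse kronecker} is stated as a \emph{conjecture}, the paper offers no proof of it, and indeed its role in \autoref{sec:kronecker} is to be conditionally \emph{refuted} --- the theorem immediately following it shows that if the conjecture holds over $\rationals$ then, via \autoref{lem:Shamir reduction} and extraction of the central binomial coefficient from $(x+1)^{2n}$ written in ``base $n$,'' the sequence $(n!)_{n \in \naturals}$ is easy to compute, contradicting \autoref{conj:factorials are hard}. So the paper's stance is that the statement is likely false in characteristic zero, and any purported proof over $\rationals$ should be treated with suspicion. Your proposal does not prove the statement, and you say so explicitly; as an analysis of \emph{why} the natural attack fails it is accurate and consistent with the paper. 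In particular, you correctly identify the contrapositive target (simulate the inverse Kronecker substitution on circuits), correctly isolate the obstruction as constant-free residue-class extraction --- the root-of-unity filter costs a $1/k$ normalization that cannot be rescaled away in the constant-free model, and interpolation blows up by $\poly(d)$ --- and correctly observe that in characteristic $p$ the Frobenius identity $(a+b)^p = a^p + b^p$ performs the $k=p$ case for free, which is exactly what drives \autoref{cor:pth root of circuit} and why no analogous trick is available over $\rationals$.

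Two small corrections to how you situate this. First, the conditional evidence against the all-bases \autoref{conj:strong inverse kronecker} is not \autoref{thm:factorials are easy i-o} (which concerns the base-two \autoref{conj:inverse kronecker} and only gives easiness of $2^n!$, i.e.\ easiness infinitely often) but the stronger unnumbered theorem at the end of \autoref{sec:kronecker}, which uses the freedom to choose the base $k \approx n$ to conclude that all factorials are easy to compute. Second, your closing framing should be sharpened: the obstruction you found is not merely ``the reason the statement is posed as a conjecture rather than proved''; the paper's results indicate that over characteristic zero no proof is expected at all unless \autoref{conj:factorials are hard} fails, whereas over positive characteristic (where your Frobenius observation lives) the status of the conjecture is genuinely open, as the paper notes.
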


We can show that this stronger conjecture is less likely to hold than \autoref{conj:inverse kronecker}.

\begin{theorem}
	Suppose \autoref{conj:strong inverse kronecker} holds over $\rationals$.
	Then $(n!)_{n \in \naturals}$ is easy to compute.
\end{theorem}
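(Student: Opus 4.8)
The plan is to follow the proof of \autoref{thm:factorials are easy i-o}, but to exploit the freedom to pick the base $k$ in \autoref{conj:strong inverse kronecker}, so that we obtain cheap circuits for $\binom{2n}{n}$ for \emph{every} $n$ rather than only for $n$ a power of a fixed base. Granting this, \autoref{lem:Shamir reduction} immediately upgrades ``$(\binom{2n}{n})_{n\in\naturals}$ is easy to compute'' to ``$(n!)_{n\in\naturals}$ is easy to compute,'' so it suffices to show $\tau(\binom{2n}{n}) \le \log^{O(1)} n$ for all $n$.

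Fix $n \ge 3$ and apply \autoref{conj:strong inverse kronecker} with $m = 1$, $d = 2n$, and base $k = n$. First, $g_{1,2n}(x) = (x+1)^{2n} = \sum_{i=0}^{2n} \binom{2n}{i} x^i$ is computable by a constant-free algebraic circuit of size $O(\log n)$ via repeated squaring. Second, since $n \ge 3$ we have $1 < \log_n(2n) < 2$, so base-$n$ expansions of integers in $\llb 2n+1 \rrb$ use exactly $\floor{\log_n(2n)} + 1 = 2$ digits; hence the companion polynomial $f_{1,2n}(\vec y)$ is a polynomial in two variables $y_1, y_2$ of individual degree at most $n - 1$ satisfying $f_{1,2n}(x, x^n) = (x+1)^{2n}$. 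The contrapositive of \autoref{conj:strong inverse kronecker}, used exactly as in \autoref{thm:factorials are easy i-o}, then produces a constant-free circuit $\Psi$ of size $O(\log^c n)$ computing $f_{1,2n}$, where $c$ is an absolute constant (coming from the implicit exponent in $s^{\Omega(1)}$).

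It then remains to read $\binom{2n}{n}$ off of $\Psi$ by one partial evaluation --- this is where the choice $k = n$ pays off a second time. Because $n = k^1$ is a power of the base, the monomial of $f_{1,2n}$ carrying the coefficient $\binom{2n}{n}$ is simply $y_2$; substituting $y_1 \mapsto 0$ into $\Psi$ leaves a circuit computing $f_{1,2n}(0, y_2) = \binom{2n}{0} + \binom{2n}{n} y_2 + \binom{2n}{2n} y_2^2 = 1 + \binom{2n}{n} y_2 + y_2^2$, so further substituting $y_2 \mapsto 1$ and subtracting the constant $2$ yields a constant-free circuit of size $O(\log^c n)$ computing $\binom{2n}{n}$. (For $n \in \{1, 2\}$ the bound $\tau(\binom{2n}{n}) = O(1)$ is trivial.) Thus $(\binom{2n}{n})_{n\in\naturals}$ is easy to compute, and \autoref{lem:Shamir reduction} finishes the argument.

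I do not anticipate a genuine obstacle: the only subtlety is the bookkeeping of constants when invoking the contrapositive of \autoref{conj:strong inverse kronecker}, which is handled verbatim as in \autoref{thm:factorials are easy i-o}. The conceptual content is entirely in the choice $k = n$, which simultaneously (i) keeps $f_{1,2n}$ to two variables, so that the conjecture gives a $\poly(\log n)$-size circuit, and (ii) places the desired coefficient $\binom{2n}{n}$ at a power-of-$k$ position, so that it can be extracted by a single evaluation rather than by iterated partial derivatives --- whose cost would be exponential in the number of digits and would destroy the bound. Were one forced to extract a coefficient at a non-power-of-$k$ position the approach would break, which is precisely why the naive base-$2$ version (\autoref{thm:factorials are easy i-o}) only yields that factorials are easy infinitely often.
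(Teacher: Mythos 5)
Your proposal is correct and follows essentially the same route as the paper's proof: both apply the conjecture to $(x+1)^{2n}$ in base $k=n$ with $m=1$, obtaining a bivariate companion polynomial whose evaluation at $(0,1)$ equals $\binom{2n}{n}+2$, and then invoke \autoref{lem:Shamir reduction}. The only cosmetic difference is that you perform the evaluation in two substitution steps rather than one.
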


\begin{proof}
	By \autoref{lem:Shamir reduction}, it suffices to show that the central binomial coefficients $\binom{2n}{n}_{n \in \naturals}$ are easy to compute.
	Let $n \in \naturals$ be given.
	There is constant-free circuit of size $O(\log n)$ which computes $g(x) = (x + 1)^{2n}$.
	Consider the polynomial
	\[
		f(y_1,y_n) = \sum_{i = 0}^{n-1} \sum_{j=0}^{n-1} \binom{2n}{i + jn} y_1^i y_n^j,
	\]
	where by convention $\binom{n}{k} = 0$ when $n < k$.
	Note that
	\[
		f(x,x^n) =  \sum_{i=0}^{n-1} \sum_{j=0}^{n-1} \binom{2n}{i+jn}x^{i + jn} = \sum_{k=0}^{n^2-1}\binom{2n}{k}x^k = \sum_{k=0}^{2n} \binom{2n}{k}x^k = (x+1)^{2n}.
	\]
	The contrapositive of \autoref{conj:strong inverse kronecker} implies that $f$ is computable by a constant-free circuit of size $O(\log^c n)$ for some absolute constant $c$.
	We now evaluate $f(0,1)$ to obtain
	\[
		f(0,1) = \sum_{j=0}^{n-1} \binom{2n}{jn} = \binom{2n}{0} + \binom{2n}{n} + \binom{2n}{2n} = \binom{2n}{n} + 2.
	\]
	By computing $f(0,1) - 2$, we obtain a constant-free circuit of size $O(\log^c n)$ which computes $\binom{2n}{n}$.
	Hence the central binomial coefficients are easy to compute.
\end{proof}

Note that the results of this section only give evidence that \autoref{conj:inverse kronecker} and \autoref{conj:strong inverse kronecker} do not hold over fields of characteristic zero.
Over fields of positive characteristic, it is unclear whether these conjectures are likely to be true or false.
This is somewhat interesting, as if \autoref{conj:inverse kronecker} holds over fields of positive characteristic, then we can replace constant-variate hardness with multivariate hardness in our extension of the Kabanets-Impagliazzo generator to fields of small characteristic.

\section{Conclusion and Open Problems} \label{sec:conclusion}

In this work, we gave the first instantiation of the algebraic hardness-randomness paradigm over fields of small characteristic.
Our main tool was the mod-$p$ decomposition, which we used to efficiently compute $p$\ts{th} roots of circuits which depend on a small number of variables.
This allowed us to extend known hardness-randomness tradeoffs due to \textcite{KI04} to fields of small characteristic under seemingly stronger hardness assumptions.
We also constructed a hitting set generator which, under suitable hardness assumptions, provides a near-complete derandomization of polynomial identity testing.
As our hardness assumptions are somewhat atypical, we compared them to more standard hardness assumptions and gave a conditional result which says that our hardness assumptions are not implied by standard ones.

A number of problems in low-characteristic derandomization remain open, some of which we have pointed out earlier in this work.
Here, we mention some challenges which our techniques are not able to resolve.
\begin{enumerate}
	\item
		Is it possible to obtain hardness-randomness tradeoffs over fields of small characteristic using a strongly explicit family of hard multilinear polynomials as opposed to constant-variate polynomials?
	\item
		Let $\F$ be a field of characteristic $p > 0$, where $p$ is some fixed constant.
		Suppose $f(\vec{x})^p\in\F[\vec{x}]$ is an $n$-variate polynomial which can be computed by a circuit of size $s$ over $\F$.
		Is there a circuit of size $s^{O(1)}$ which computes $f(\vec{x})$ in the case that $n = \omega(\log s)$?
	\item
		In the conclusion of \autoref{thm:bootstrapping}, is it possible to obtain a hitting set of size $s^{O(1)}$?
		If so, this would give a construction of a hitting set generator over low characteristic fields which qualitatively matches the parameters of the generator of \textcite{GKSS19}.
	\item
		Is it possible to lift lower bounds from the multivariate regime to the constant-variate regime?
		It seems like the answer may be ``no,'' but our evidence thus far only applies to constant-free circuits over fields of characteristic zero.
		What can we say if we remove the constant-free restriction?
		What about fields of positive characteristic?
\end{enumerate}

\textbf{Acknowledgements.}
We would like to thank Michael A.~Forbes for many useful comments which helped improve the presentation of this work.

\printbibliography

\end{document}